\newif\ifSC
\def\BibTeX{{\rm B\kern-.05em{\sc i\kern-.025em b}\kern-.08em
		T\kern-.1667em\lower.7ex\hbox{E}\kern-.125emX}}
\newcounter{relctr} 
\everydisplay\expandafter{\the\everydisplay\setcounter{relctr}{0}} 
\newcommand\numeq[1]%
\newcommand{\x}{\mathbf{x}}
\newcommand{\y}{\mathbf{y}}
\newcommand{\z}{\mathbf{z}}
\newcommand{\road}{\bm{l}}
\newcommand{\1}{\mathbbm{1}}
\newcommand{\pt}{\mathrm{P}}
\newcommand{\drm}{\mathrm{d}}
\newcommand{\ob}{\mathrm{o}}
	\definecolor{ao(english)}{rgb}{0.0, 0.5, 0.0}
\newcommand{\dv}{\mathrm{d}}
\newcommand{\bt}{\mathbf{b}}
\newcommand{\mrm}{\mathrm{m}}
\newcommand{\prm}{\mathrm{p}}
\newcommand{\typical}{\mathrm{t}}
\newcommand{\A}{\mathcal{A}}
\newcommand{\R}{\mathbb{R}}
\newcommand{\matern}{Mat\'ern~}
\newcommand{\rfirst}{a}
\newcommand{\adjustfigspace}{\vspace{-.2in}}
\newcommand{\Vor}{\mathsf{V}}
\newcommand{\bs}{\mathrm{b}}
\newcommand{\roads}{\mathrm{L}}
\newcommand{\BSPP}{\Phi_{\bs}}
\newcommand{\eg}{{\em e.g.}~}
\renewcommand{\bt}{\mathsf{B}}
\newcommand{\oboneD}{\bm{o}}
\renewcommand{\road}{{\ell}}
\newcommand{\roadf}{\mathsf{L}}
\newcommand{\bellf}{\mathsf{b}}
\newcommand{\densityPTS}{\lambda_\mrm}
\newcommand{\densityITS}{\kappa}
\renewcommand{\densityITS}{\lambda_\mathrm{i}}
\newcommand{\densityRoads}{\lambda_{\roads}}
\newcommand{\densityBS}{\lambda_{\bs}}
\newcommand{\vehPP}{\Psi_{\mrm}}
\newcommand{\roadPP}{\Phi}
\newcommand{\platoonPP}{\psi_{\mrm,{i}}}
\newcommand{\platooncentersPP}{\Psi_{i}}
\newcommand{\dauP}{\Delta}
\renewcommand{\dauP}{\Omega}
\newcommand{\dauPuntrans}{\overline{\Delta}}
\renewcommand{\dauPuntrans}{\overline{\Omega}}
\newcommand{\platoontransPP}{\Psi_{\road_i}}
\newcommand{\pgfl}[2]{\mathcal{G}_{#1}\left(#2\right)}
\renewcommand{\roadPP}{\Phi_\roads}
\renewcommand{\platoonPP}{\Psi_i}
\newcommand{\platoonPPuntrans}{\overline{\Psi}_i}
\renewcommand{\platooncentersPP}{\overline{\platoonPP}}
\renewcommand{\platooncentersPP}{\dot{\platoonPP}}
\newcommand{\platooncentersPPwo}{\dot{\Psi}}
\renewcommand{\platooncentersPP}{\breve{{\platoonPPuntrans}}}
\renewcommand{\platooncentersPPwo}{\breve{\overline{\Psi}}}
\renewcommand{\platooncentersPP}{{{\platoonPPuntrans}^{\mathrm{(p)}}}}
\renewcommand{\platooncentersPP}{\overline{\Psi}_i^{_\mathrm{(p)}}}
\renewcommand{\platooncentersPPwo}{\overline{\Psi}^{_\mathrm{(p)}}}
\newcommand{\untrans}[1]{\widetilde{\bm{#1}}}
\renewcommand{\untrans}[1]{\overline{\bm{#1}}}
\renewcommand{\hat}[1]{\widehat{#1}}
\newcommand{\off}{\mathrm{p}_{\mathrm{off}}}
\renewcommand{\off}{p_{\mathrm{off}}}
\newcommand{\avg}{\mathrm{p}_{<}}
\renewcommand{\avg}{{p}_{\mathrm{u}}}
\newcommand{\kavg}{s_\mathrm{avg}}
\newcommand{\Pmin}{\mathrm{P}_{1}}
\newcommand{\Pavg}{\mathrm{P}_{<}}
\newcommand{\mavg}{m_\mathrm{avg}}
\renewcommand{\Pmin}{p_{1}}
\renewcommand{\Pavg}{p_{\mathrm{s}}}
\newcommand{\rsecond}{b}
\newcommand{\set}[1]{\mathsf{#1}}
\newcommand{\palmexpectx}[2]{\mathbb{E}^{#2}{\left[#1\right]}}
\newcommand{\fracS}[2]{#1/#2}
\newcommand\expect[1]{\mathbb{E}\left[#1\right]}
\newcommand\prob[1]{\mathbb{P}\left[#1\right]}
\newcommand\indside[1]{\mathbbm{1}\left({#1}\right)}
\newcommand{\SIR}{\text{SIR}}
\newcommand{\laplace}[1]{\mathcal{L}_{#1} }
\newcommand{\ie}{{\em i.e.}~}
\newcommand{\Rc}{\mathrm{r_c}}
\newcommand{\pu}{\mathrm{P}}
\newcommand\lambdaPT{\lambda_\pu}
\newcommand{\expS}[1]{\exp{\left(#1\right)}}
\def\home{\hbox{\kern3pt \vbox to13pt{}%
   \pdfliteral{q 0 0 m 0 5 l 5 10 l 10 5 l 10 0 l 7 0 l 7 5 l 3 5 l 3 0 l f
               1 j 1 J -2 5 m 5 12 l 12 5 l S Q }%
   \kern 13pt}}
\renewcommand{\SIR}{\mathtt{SIR}}
\newtheorem{theorem}{Theorem}
\newtheorem{lemma}{Lemma}
\newtheorem{definition}{Definition}
\newtheorem{corollary}{Corollary}[theorem]
\begin{document}
	
	\title{Fundamentals of Vehicular Communication Networks with Vehicle Platoons}
	\author{Kaushlendra Pandey,  Kanaka Raju Perumalla, Abhishek K. Gupta, Harpreet S. Dhillon \,\,\vspace{-2.1em}
		\thanks{K. Pandey, K. Raju Perumalla and A. K. Gupta are with IIT Kanpur, India, 208016. Email:\{kpandey,pkraju,gkrabhi\}@iitk.ac.in. H. S. Dhillon is with Wireless@VT, Bradley Department of Electrical and Computer Engineering, Virginia Tech, Blacksburg, VA 24061 (Email: hdhillon@vt.edu).} 
	}
	\maketitle
	\begin{abstract}
Vehicular platooning is a promising way to facilitate efficient movement of vehicles with a shared route. Despite its relevance, the interplay of platooning and the communication performance in the
resulting vehicular network (VN) is largely unexplored. Inspired by this, we develop a comprehensive approach to statistical modeling and system-level analysis of VNs with platooned traffic. Modeling the network of roads using the by-now well-accepted Poisson line process (PLP), we place vehicles on each road according to an independent \matern cluster process (MCP) that jointly captures randomness in the locations of platoons on the roads and vehicles within each platoon. The resulting {\em triply-stochastic} point process is a PLP-driven-Cox process, which we term the PLP-MCP. We first present this new point process’s distribution and derive several fundamental properties essential for the resulting VN’s analysis. Assuming that the cellular base-stations (BSs) are distributed as a Poisson point process (PPP), we derive the distribution of the loads served by the typical BS and the BS associated with the typical user. In deriving the latter, we also present a new approach to deriving the length distribution of a tagged chord in a Poisson Voronoi tessellation. Using the derived results, we present the rate coverage of the typical user while considering partial loading of the BSs. We also provide a comparative analysis of VNs with and without platooning of traffic.
	\end{abstract}
	\IEEEpeerreviewmaketitle

\section{Introduction}
Vehicular platooning refers to the cooperative movement of closely located vehicles having a shared route or a part of route. As part of intelligent transportation systems, platooning has enormous potential for collision avoidance among vehicles,  optimization of the road capacity and fuel consumption, and reduction in pollutant gases including CO$_{2}$ emissions\cite{zeadally2020vehicular,jia2015survey}. 
Platooning and vehicular communication have a two way relationship. 
{On one hand, platooning almost always ensures line-of-sight between two proximate vehicles, thereby improving the reliability of vehicle-to-vehicle (V2V) communication between them compared to independently moving vehicles} \cite{perfecto2017millimeter}. Such V2V communications can help in collision and hazard warning and traffic planning  \cite{ahmed2018cooperative}. Further, if one vehicle in the platoon is able to receive information via vehicle-to-infrastructure (V2I) communication, V2V communication can help relay this data to all vehicles in the platoon. 
On the other hand, vehicular communication is also essential in enabling platooning to reduce collision risks due to smaller intra-vehicular distance. 
{Given the intertwined nature of these two seemingly disparate ideas, it is essential to understand their synergism, which we do here by carefully integrating platooning in the system-level analysis of vehicular networks.
} 
\subsection{Related work}
Recently, there has been a significant interest in studying different types of vehicular communication including V2V and V2I. { Interested readers are advised to refer to \cite{ahmed2018cooperative,siegel2017survey}, and the references therein, for a comprehensive survey of this research direction. In this paper, our specific interest is on the system-level analysis of vehicular communications networks using stochastic geometry, which has attracted significant attention recently, e.g, see \cite{blaszczyszyn2009maximizing,blaszczyszyn2013stochastic,wang2018mmwave,wang2017blockage,
		Baccelli96stochasticgeometry}. However, the focus of almost all of this prior work has been on conventional {\em non-platooning traffic scenarios} (N-PTS), where vehicles do not form platoons and hence move without any coordination with each other.} 
For instance, in \cite{blaszczyszyn2009maximizing,blaszczyszyn2013stochastic} 
authors modeled the vehicular traffic on a fixed road by a $1$D Poisson point process (PPP). To incorporate multiple road vehicular traffic, 
\cite{wang2018mmwave,wang2017blockage} considered grid type urban roads (roads are either perpendicular, or parallel to the $\mathsf{x}$-axis) modeled using the  Manhattan Poisson line process (MPLP). Each road has an independent vehicular traffic distributed as $1$D PPP. The authors analyzed the blockage and coverage in such networks. To include the {irregularity in the layout of roads,} in \cite{Baccelli96stochasticgeometry}, authors suggested to model roads as Poisson line process (PLP) and vehicles as 1D PPP on each road. In this model, the combined vehicular traffic across roads forms a Cox process that can be termed PLP-PPP (\ie a PLP driven PPP). A thorough investigation of various  properties of PLP-PPP and its applications to vehicular communications was presented in \cite{dhillon2020poisson}.

 A vehicular communication network consists of vehicular traffic overlaid with a cellular network to provide infrastructure connectivity to vehicular traffic. Such a network with {N-PTS} can be modeled using PLP-PPP overlaid with an independent PPP modeling the locations of BSs, owing to the mathematical tractability of these processes. 
In \cite{guha2016cellular,chetlur2018coverage,chetlur2018characterization}, authors derived the distribution of signal-to-noise-plus-interference ratio (SINR) for similar models. In \cite{chetlur2020load,sial2019stochastic,choi2018analytical} authors derived the SINR distribution for vehicle-to-everything (V2X) networks consisting of communications between different types of network entities, such as between BSs and vehicles, and roadside units and vehicles. 
Another important  metric dictating the overall performance of a network is the rate distribution of the typical user. The achievable rate  depends critically on the per-BS load, \ie the number of vehicles present in the BS's serving region.
In \cite{chetlur2020load}, authors derived the distribution of the per BS load and per-user rate for {N-PTS}. In \cite{chetlur2018success}, the area spectral efficiency for the N-PTS was presented. In \cite{chetlur2019coverage}, authors derived the rate coverage for cellular vehicle-to-everything (V2X) networks for N-PTS.

Although past works have analyzed the vehicular communication network with N-PTS, analytical tools have not been fully explored yet to study the {\em platooned vehicular traffic scenario} (PTS) and its impact on the performance of a vehicular communication. Consequently, there is limited work focusing on the analysis of PTS \cite{shao2015performance,yi2019modeling,zeng2019joint}. For example, in \cite{shao2015performance}, authors considered a single road vehicular traffic consisting of independent vehicles and platoons, both modeled as points of $1$D PPP and derived the probability that vehicles can communicate with each other. In \cite{yi2019modeling}, authors considered platooned traffic on a single road with BSs deployed on the side of the road and derived approximate coverage. In \cite{zeng2019joint}, authors considered platooned traffic on a single road with road side BSs and performed a joint communication and control analysis to study the stability and delay in the network.
One main limitation of the above works is that they considered vehicular traffic on a single road. In practice, the ``support'' of a vehicular network is a complicated layout of roads that needs to be accounted for and is one of the key reasons for the popularity of the PLP-based models. A vehicular traffic on such a road network is further complicated by the randomness in the number and locations of platoons and locations of vehicles in each platoon. As indicated above, the wireless performance of a vehicular user depends critically on the per BS load which has not been analyzed in the previous work for PTS. Overall, the interplay of platooning and the vehicular network performance is largely unexplored from the perspective of rigorous system-level analysis. 
This paper attempts to bridge this gap. In particular, we try to explore how we can model a complete vehicular communication network consisting of a 2D network of BSs and the vehicles moving in platoons and analyze the performance of this network in terms of load and rate distribution.
\subsection{Contributions}
In this paper, we develop an analytical framework for a vehicular communication system (in particular, a V2I scenario) with platooned traffic. We propose a novel point process for modeling  the platoon movement of vehicles. We then examine the impact of platooning on V2I communication by observing the load that appears on the infrastructure network. We present a  comparative study of the rate coverage for PTS and N-PTS. The important contributions of this paper are listed below.

\begin{enumerate}
	\item We propose a novel point process termed PLP-MCP for the modeling and analysis of the platooned movement of the vehicles. It is a Cox process driven by the PLP that captures three layers of randomness: (i) irregularity in the road layout, (ii) randomness in the locations of the platoons, and (iii) randomness in the locations of vehicles within a platoon. In this sense, this process can be thought of as a {\em triply-stochastic} process that generalizes {\em doubly-stochastic} PLP-PPP used in the literature \cite{dhillon2020poisson}. We present its distribution and key properties  essential for the analysis of the vehicular traffic. 
	\item We then present an analytical framework to characterize the performance of the typical user in a V2I communication network consisting of BSs and platooned traffic. 
	\item We derive the load distribution for the typical and the tagged BSs along with the means and variances. Here, tagged BS is the BS that serves the typical user. As a key intermediate result, we derive a new expression for the distribution of the tagged chord in the Voronoi cell of the tagged BS.
	\item Using the derived results, we present the rate coverage of the typical user while considering the partial loading of BSs. We perform a comparative analysis of load distribution and rate coverage of communication systems with platooned movement with non-platooned movement to understand the impact of vehicular platooning.  
\end{enumerate}

\textit{Notation:}
			Vectors in  $\R$ are denoted by  bold \textit{italic} style letters  (\eg $\bm{x}$) with their norms as $|\bm{x}|$. Similarly, vectors in $\R^{2}$ are denoted by bold style letters (\eg $\x$) with their norms as {$\|\x\|$}. The origin is $\ob\equiv(0,0)$.
			Let  $\bt_{1}(\bm{x},r)$ and $\bt_{2}(\x,r)$ denote a 1D and 2D ball centered at $\bm{x}$ and $\x$ of radius $r$. 
			Let $\road=\roadf(\rho,\phi)$ denote a line in $\R^{2}$ in Hesse normal form{,} \ie the normal segment from origin to the line is of length $\rho$ and makes angle $\phi$ with respect to the x-axis. The point $\left(\rho \cos \phi,\rho \sin \phi\right)$  is the nearest point on the line $\roadf(\rho,\phi)$ from the origin termed the base. The line $\roadf(\rho,\phi)$ can also be represented as an element $\left(\rho,\phi\right)$ of the set $\mathbf{C^{*}}\equiv \R\times[0,\,\pi{)}$. We term the element $\left(\rho,\phi\right)$ as L-atom and $\mathbf{C^{*}}$ as L-space. Further, $f_\ell()$ denotes the transformation of  $\roadf(0,0)$ to the line $\ell=\roadf(\rho_\ell,\phi_\ell)$ given as
			\begin{align}
			f_{\ell}(\untrans{x})=\left(\rho_{\ell} \cos \phi_{\ell} + \untrans{x} \sin \phi_{\ell},\rho_{\road}\sin \phi_{\road} -\untrans{x} \cos \phi_{\ell}\right).
			\end{align}		
This means that if  $\untrans{x}$ is a scalar quantity denoting the  location of a point in the line $\ell$ relative to its base, its $2$D coordinates (\ie absolute location in $\R^2$) are given as ${\x}=f_{\ell}(\untrans{x})$. For a set $\set{A}$, $|\set{A}|$ denotes its Lebesgue measure  in its respective dimension, for example $|\bt_{1}\left(\ob,r\right)|=2r$. The volume $\A_{1}(\rfirst,\rsecond,x)$ of the intersection of two $1$D balls $\bt_{1}(\oboneD,\rfirst)$ and $\bt_{1}(\bm{x}, \rsecond)$ is given as \cite{pandey2019contact}
		\begin{align}
		\A_{1}(\rfirst,\rsecond,x)=
	\begin{cases}
		2\min(\rfirst,\rsecond),\,\, & \text{if }0\leq x \leq |\rfirst-\rsecond|\\
		\rfirst+\rsecond- x,\,\,&
		\text{if } |\rfirst\!-\! \rsecond|\!<\!x\!\le\!\rfirst\!+\!\rsecond.
	\end{cases}
	\end{align}
		The PDF of the generalized Gamma distribution with parameters $a_{1},b_{1},c_{1}$ is denoted  by 
		\begin{align}\label{gammadistribution}
			\tilde{g}_{X}(x;a_{1},b_{1},c_{1})=a_{1}b_{1}^{c_1/a_{1}}\left(\Gamma\left(c_{1}/a_{1}\right)\right)^{-1}x^{c_{1}-1}e^{-b_{1}x^{a_{1}}}.
		\end{align}	
		 Further, $\bellf{}(\cdot)$ denotes the Bell's polynomial \cite{frucht1965polynomios}. For a point process (PP) $\Psi$, the notation $\Psi(\mathsf{C})$ denotes the number of points of $\Psi$ falling inside set $\mathsf{C}$. The PGF of any integer-valued random variable (RV) $X$ is denoted by $\mathcal{P}_{X}(\cdot)$. The expected value of RV  $X$ is denoted by $\mathbb{E}[X]$. Further, $\beta(r)=2\min(r,a)$. The notation $\widetilde{(\cdot)}$ denotes the approximated variable and $\hat{(\cdot)}$ denotes reduced Palm version.  If $A$ and $B$ are two RVs, the notation $A\stackrel{(d)}=B$ means that $A$ and $B$ have the same distribution.		 
		 
		\section{Modeling of Platooned Vehicles using PLP-MCP}\label{sec:vehicletraffic}
		In this paper, we introduce a novel point process PLP-MCP to model platooned vehicles on a network on roads. The system model is as follows. 
		\vspace{-.2in}
\subsection{Road network}  
The network of roads can be modeled by a PLP $\roadPP=\{\road_{1},\road_{2},\cdots\}$ with density $\densityRoads$ where  $\road_i$ denotes the $i$th road \cite{dhillon2020poisson}. The $i$th line $\road_i\in{\roadPP}$ can be denoted by the L-atom $a_i=(\rho_{\road_{i}},\phi_{\road_{i}})$ in the L-space $\mathbf{C^{*}}$. The L-atoms $a_i$'s form a PPP in $\mathbf{C^{*}}$ with density $\densityRoads$. This means that the mean number of lines hitting a convex body $\mathsf{K}$ with perimeter $L(\mathsf{K})$ is $\densityRoads L(\mathsf{K})$ \cite{dhillon2020poisson}.  
\vspace{-.1in}
\subsection{Platooned vehicles}
For each road $\road_i$, vehicular platoons can be seen as the clusters of vehicles in a finite spread. Since the vehicles are usually uniformly distributed in the respective platoons, it is natural to model the resulting traffic on each road using MCPs. We model the vehicles on the  road $\road_i$ by an independent MCP $\platoonPP$ with parent PP density $\lambdaPT$, mean number of points per cluster $m$ and cluster radius $a$. In particular,  the platoon centers are distributed as the parent PP $\platooncentersPP$. For a platoon centered at $\bm{x}_{j,i} \in \platooncentersPP$, the constituent vehicles are distributed as the PPP $\dauPuntrans_{\bm{x}_{j,i}}$ in $a-$neighborhood of it. Let $\mu_\mrm$ denote the per-road vehicular density \ie $\mu_\mrm=m\lambda_\pt$.

 The  locations of all vehicles form a new PP, which we term as PLP-MCP. It can be formally defined as follows.  
	\begin{definition}	[PLP-MCP]
		Let $\roadPP=\{\road_{1},\road_{2},\cdots\}$ be a {PLP}  with density $\densityRoads$ with the $i$th line $\road_i=\roadf(\rho_{\road_{i}},\phi_{\road_{i}})$. 
		Let $\{\platoonPPuntrans\}$ be a set of independent and identically distributed $1$D MCP in $\R$ with parameter $\left(m,\lambdaPT,a\right)$ such that \vspace{-.1in}
\begin{align*}
	&\platoonPPuntrans=\bigcup\nolimits_{\bm{x}_{j,i}\in \platooncentersPP}\dauPuntrans_{\bm{x}_{j,i}},
\end{align*} 
where $\platooncentersPP$ is a PPP with density $\lambdaPT$.  $\platooncentersPP$ is called the parent point process of $\platoonPPuntrans$ as it  consists of parent points $\bm{x}_{j,i}\in \R$. Further, $\dauPuntrans_{\bm{x}_{j,i}}$ denotes the daughter PP of $\bm{x}_{j,i}$ and is a PPP with density $\lambda_\mathrm{d}=m/(2a)$ in $\bt_1(\bm{x}_{j,i},a)$.
We assign $i$th MCP $\platoonPPuntrans$ to the $ i$th line $\road_i$ and transform the points of $\platoonPPuntrans$ to be on the line to get 
\begin{align}
	&\platoontransPP=\bigcup\nolimits_{\bm{x}_{j,i}\in\platooncentersPP }
	\{ {\z}_{k,j,i}=
	f_{\road_i}
	(\untrans{z}_{k,j,i}):\untrans{z}_{k,j,i}\in \dauPuntrans_{\bm{x}_{j,i}} \}=
	\bigcup\nolimits_{\bm{x}_{j,i}\in\platooncentersPP }
	 \dauP_{\bm{x}_{j,i}},
\end{align}
where $ \dauP_{\bm{x}_{j,i}}$ represents   $\dauPuntrans_{\bm{x}_{j,i}}$ transformed on line $\ell_i$.

Now, a PLP-MCP $\vehPP$ is defined as the union of all $\platoontransPP$'s 
	\ie 
\begin{align}
\vehPP=\bigcup\nolimits_{\road_i\in\roadPP} \platoontransPP,
\end{align}
and includes all the points located on every line of $\roadPP$. 
\end{definition}		
	Hence, the   platoon vehicular traffic can be modeled using points of the proposed PLP-MCP $\vehPP$. The absolute location of $k$th vehicles in $j$th platoon of $i$th road is given as ${\z}_{k,j,i}$. The {\em typical point} of the PLP-MCP $\vehPP$ denotes \textit{the typical vehicle} \cite{SGBook2022}.  A line $\road_{\ob}$ of $\roadPP$ that passes through the typical point 
 of PP ($\vehPP$) is termed as {\em the tagged line}. Here, $\road_\ob=\roadf(0,\phi)$ with $\rho=0$ and $\phi$ is a uniform RV between $0$ to $\pi$.

\vspace{-.1in}
\subsection{Properties of PLP-MCP}

We now describe some key properties of the PLP-MCP that are helpful in the analysis of vehicular communication.
\begin{enumerate}
\item {\textit{Stationarity:}}
			The PLP-MCP $\Psi_{\mrm}$ is a stationary PP. The stationarity of $\Psi_{\mrm}$ follows from the stationarity of PLP and $1$D MCP.
\item {\textit{Density:}}
		The density $\densityPTS$ of $\Psi_{\mrm}$ is $m\lambdaPT{\densityRoads}\,\pi$. The density of $\Psi_{\mrm}$ can be derived by counting the mean number of points in a unit area using the Campbell’s theorem \cite{SGBook2022}.
\item
It is  a Cox process driven by a PLP.
\end{enumerate}				
				For a stationary PP, we can take the typical point  at the origin \cite{SGBook2022}.
Further, if the typical point is located at the origin, the tagged line $\road_{\ob}$ passes through the origin.

\vspace{-.1in}
\section{Characterization of PLP-MCP}
In this section, we will present several key properties of  the proposed PLP-MCP. 
\vspace{-.1in}
\subsection{Extended Slivnyak Theorem}
Since PLP-MCP is derived from PLP (which is a PPP in L-space),  Slivnyak theorem can be extended to describe the conditional distribution of PLP-MCP. Even though this extension is not overly challenging, we decided to present it separately upfront so that we can easily refer to it throughout the paper rather than repeating this same argument everywhere.
\begin{lemma}(Extended Slivnyak Theorem.)\label{thm:slivE}
Conditioned on the typical point $\z_\ob$, the distribution of the rest of the PLP-MCP $\vehPP$ is equal to the distribution of an independent copy of $\vehPP$ superposed with an independent copy of the MCP $\Psi_{\road_\ob}$ on the tagged line $\road_\ob$ and an independent copy of the cluster PPP $\dauP_{\bm{x}_\ob}$ (which is $\dauPuntrans_{\bm{x}_\ob}$ transformed on $\road_\ob$). Here, $\bm{x}_\ob$ denotes the parent point of the typical point and distributed uniformly in the 1D $a-$neighborhood of $\bm{z}_\ob=f_{\road_\ob}^{-1}(\z_\ob)$. In other words, \vspace{-.2in}
\begin{align}
\vehPP^{!}|(\z_\ob\in \vehPP) \stackrel{(d)}{=} \vehPP \cup \Psi_{\ell_\ob} \cup \dauP_{\bm{x}_\ob}.\label{eq:slivE}
\end{align}
\end{lemma}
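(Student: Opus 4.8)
The plan is to exploit the threefold Poisson structure underlying the PLP-MCP and to apply Slivnyak's theorem (equivalently, the Mecke equation) once at each level: to the PPP of L-atoms in $\mathbf{C^{*}}$ that governs the roads, to the PPP of platoon centers $\platooncentersPP$ on the tagged line, and to the daughter PPP $\dauPuntrans_{\bm{x}_\ob}$ that generates the vehicles inside the typical vehicle's own platoon. Since the typical point of $\vehPP$ is a vehicle, conditioning on it amounts to jointly selecting (i) the road $\road_\ob$ carrying it, (ii) the platoon with center $\bm{x}_\ob$ to which it belongs, and (iii) its sibling vehicles within that platoon. I would peel these three layers off one at a time, each time using the fact that the reduced Palm distribution of a PPP is the PPP itself, and then reassemble the pieces using the independence that the construction builds in across roads, across platoons, and between the parent and daughter processes.

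First I would write the typical-point Palm expectation of $\vehPP$ as an iterated sum over lines $\road_i\in\roadPP$, platoon centers $\bm{x}_{j,i}\in\platooncentersPP$, and daughter points, and apply the Mecke equation to the L-atom PPP. This identifies the tagged line $\road_\ob$, which passes through the origin (every point of $\vehPP$ lies on a line of $\roadPP$ and the typical point sits at $\ob$), so $\road_\ob=\roadf(0,\phi)$. The remaining L-atoms form an independent copy of the road PLP, and equipping each with an independent MCP reproduces an independent copy of $\vehPP$, the first term on the right-hand side of~\eqref{eq:slivE}. Here I would check that the orientation $\phi$ stays uniform on $[0,\pi)$: the Palm weighting of a line is proportional to the mean number of vehicles it contributes per unit length, namely the linear MCP intensity $m\lambdaPT$, which is the same for every orientation, so no size-biasing in $\phi$ is introduced.

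Next I would treat the $1$D MCP on the tagged line. Applying the Mecke equation to the parent PPP $\platooncentersPP$ splits it into the retained parent $\bm{x}_\ob$ of the typical vehicle and an independent copy of $\platooncentersPP$; the latter, together with its daughter PPPs, yields an independent copy of the line MCP $\Psi_{\road_\ob}$, the second term. The law of $\bm{x}_\ob$ follows from a Palm/Bayes computation on the homogeneous daughter process: conditioned on a daughter at $\bm{z}_\ob=f_{\road_\ob}^{-1}(\z_\ob)$, the weight on a parent $x$ is proportional to $\lambdaPT\lambda_\mathrm{d}\indside{\bm{z}_\ob\in\bt_1(x,a)}$, and by the symmetry $\bm{z}_\ob\in\bt_1(x,a)\iff x\in\bt_1(\bm{z}_\ob,a)$ this is uniform on $\bt_1(\bm{z}_\ob,a)$, giving the claimed uniform $a$-neighborhood law. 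A further application of Slivnyak to the daughter PPP of the retained cluster leaves the remaining siblings as an independent copy of $\dauPuntrans_{\bm{x}_\ob}$; transforming onto $\road_\ob$ via $f_{\road_\ob}$ produces $\dauP_{\bm{x}_\ob}$, the third term.

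To conclude, I would assemble the three pieces and invoke independence: the MCPs on distinct roads are independent, the daughter PPPs attached to distinct platoon centers are independent, and Slivnyak makes each retained object independent of its corresponding rest. Hence the reduced Palm process is distributed as $\vehPP\cup\Psi_{\road_\ob}\cup\dauP_{\bm{x}_\ob}$ with these components mutually independent, which is exactly~\eqref{eq:slivE}. I expect the main obstacle to be bookkeeping the normalizing weights across the three Mecke applications so as to confirm that no size-biasing creeps in, concretely the uniform orientation of $\road_\ob$ and the uniform law of $\bm{x}_\ob$ on $\bt_1(\bm{z}_\ob,a)$; the absence of cluster-size biasing is automatic because the clusters are Poisson, for which the reduced Palm of a cluster is simply an independent copy of that cluster.
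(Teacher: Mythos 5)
Your proposal is correct and follows essentially the same route as the paper's proof: three successive applications of Slivnyak's theorem, one at each layer of the triply-stochastic construction (the PPP of L-atoms giving the tagged line plus an independent PLP-MCP, the parent PPP on the tagged line giving $\bm{x}_\ob$ plus an independent copy of $\Psi_{\road_\ob}$, and the daughter PPP giving the siblings as an independent copy of $\dauP_{\bm{x}_\ob}$). Your additional verifications---that the orientation $\phi$ incurs no size-biasing and that $\bm{x}_\ob$ is uniform on $\bt_1(\bm{z}_\ob,a)$ by a Bayes argument---are sound and in fact supply details the paper's proof asserts without spelling out.
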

\begin{proof}
Conditioning on the occurrence of the typical point fixes the tagged parent point $\bm{x}_\ob$ and the tagged line $\road_\ob$. Since PLP $\roadPP$ is a PPP in L-space, conditioned on $\road_\ob$, $\roadPP$ is equivalent to the union of $\road_\ob$ 
and an independent copy of $\roadPP$ (which generates an independent copy of $\vehPP$). Now,  $\platooncentersPPwo_{\road_\ob}$ is also a PPP containing $\bm{x}_\ob$, hence, conditioned on $\bm{x}_\ob$, it is equivalent to the union of $\bm{x}_\ob$ and an independent copy of $\platooncentersPPwo_{\road_\ob}$ which generates $\Psi_{\road_\ob}$. Again, $\dauP_{\bm{x}_\ob}$ is a PPP, hence, conditioned on $\z_\ob$, $\dauP_{\bm{x}_\ob}$ is equivalent to the union of $\bm{x_0}$ and an independent  copy of $\dauP_{\bm{x}_\ob}$. 
\end{proof}
\vspace{-.2in}
\subsection{Laplace functional}
Since the distribution of a PP is completely characterized by its Laplace functional (LF), we now derive the LF for PLP-MCP. We will first require the PGFL of the MCP $\platoontransPP$ transformed on the line $\road_i$ which is given in Lemma \ref{lem:PGFLMCP}. Let there be a function $v:\,\R^2\rightarrow \left[0,\,1\right]$.
\begin{lemma}\label{lem:PGFLMCP}
The PGFL of  $\Psi_\road$ on road $\road$ is given as
		\begin{align}\label{PGFLMCP}
			\pgfl{\Psi_{\road},\road}{v}=\exp\left(-\lambda_{\pt} \int_{\R} \left(1-\mathcal{H}_{\bm{x},\road}(v)\right)\dv \bm{x}\right),
		\end{align}
	where
$\displaystyle \mathcal{H}_{\bm{x},\road}(v)=\exp\left(-\lambda_{\drm} \int_{\bt_{1}(\ob,a)} \left(1-\left(v\circ f_{\road}\right)(\bm{x}+\bm{y})\right) \dv \bm{y}\right).$ 
 Under reduced Palm (\ie conditioned on occurrence of a point  at ${\z}_{\ob}$ excluding ${\z}_{\ob}$), PGFL of  $\Psi_\road$ on road $\road$  is		
		\begin{align}\label{under_palm}
			\mathcal{G}_{\Psi_{\road},\road}^{!{\z}_{\ob}}(v)	&=\pgfl{\Psi_{\road},\road}{v}{(2a)^{-1}}\int_{\bt_1(f_\ell^{-1}(\z_\ob),a)}\mathcal{H}_{-\bm{x}_{\ob},\road}(v)\dv \bm{x}_{\ob}.
		\end{align}
		Here, $\bm{x}_{\ob}$ denotes the untransformed center of the parent cluster of ${\z}_{\ob}$. 
\end{lemma}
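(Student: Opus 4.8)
The plan is to compute $\mathcal{G}_{\Psi_\road,\road}(v)=\expect{\prod_{\z\in\Psi_\road}v(\z)}$ by exploiting the hierarchical parent--daughter structure of the \matern cluster process together with the conditional independence of clusters given the parent points. First I would condition on the parent PPP $\platooncentersPPwo_\road$. Given the parents, the clusters are mutually independent, so the conditional generating functional factorizes into a product over parent points $\bm{x}\in\platooncentersPPwo_\road$ of single-cluster generating functions. For a single cluster with untransformed center $\bm{x}$, the daughters form a PPP of intensity $\lambda_\drm$ on $\bt_1(\bm{x},a)$, which is then mapped to the line by $f_\road$; applying the PPP identity $\expect{\prod_{\y\in\Phi}v(\y)}=\exp\!\left(-\int(1-v)\,\mathrm{d}\Lambda\right)$ to this transformed PPP and substituting $\untrans{y}=\bm{x}+\bm{y}$ yields exactly $\mathcal{H}_{\bm{x},\road}(v)$.

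Next I would average over the parent process. Since the conditional generating functional is $\prod_{\bm{x}\in\platooncentersPPwo_\road}\mathcal{H}_{\bm{x},\road}(v)$, its expectation is the PGFL of the parent PPP (intensity $\lambda_\pt$ on $\R$) evaluated at the function $\bm{x}\mapsto\mathcal{H}_{\bm{x},\road}(v)$. Invoking the PPP PGFL formula once more gives $\mathcal{G}_{\Psi_\road,\road}(v)=\exp\!\left(-\lambda_\pt\int_\R(1-\mathcal{H}_{\bm{x},\road}(v))\,\mathrm{d}\bm{x}\right)$, which is the first claim. The only technical points are the measurability and finiteness of the inner integral (so that $\mathcal{H}_{\bm{x},\road}(v)\to 1$ fast enough as $\bm{x}$ leaves the support of $1-v$ for the outer integral to converge, using $v\le 1$) and the interchange of product and expectation, both standard for cluster PGFLs.

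For the reduced Palm expression I would build directly on the Extended Slivnyak Theorem (Lemma~\ref{thm:slivE}). Conditioned on a point at $\z_\ob$, the process decomposes in distribution into an independent copy of $\Psi_\road$ together with the single \emph{representative} cluster that owns $\z_\ob$, with $\z_\ob$ itself removed. The independent copy contributes the already-computed factor $\mathcal{G}_{\Psi_\road,\road}(v)$. For the representative cluster I would invoke two facts from the Palm calculus of the \matern cluster process: (i) the parent $\bm{x}_\ob$ of the tagged point is uniform over the $a$-neighborhood $\bt_1(f_\road^{-1}(\z_\ob),a)$ of the tagged point, which produces both the density $(2a)^{-1}$ and the domain of integration; and (ii) by Slivnyak applied to the daughter PPP, once $\z_\ob$ is deleted the remaining daughters of $\bm{x}_\ob$ are again a fresh cluster PPP whose generating function is $\mathcal{H}_{\bm{x}_\ob,\road}(v)$. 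Averaging the product of these contributions over the uniform law of $\bm{x}_\ob$ gives the stated formula, modulo matching the sign/parametrization convention in the argument of $\mathcal{H}$ (tracking whether the cluster center enters as $\bm{x}_\ob$ or $-\bm{x}_\ob$ once pushed through $f_\road$).

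I expect the main obstacle to be item (i): rigorously identifying the conditional law of the representative parent given a daughter at $\z_\ob$. This is genuinely a Palm-theoretic step, amounting to normalizing the first-order product density contributed by a cluster against the overall intensity, and it is where the uniform-over-$\bt_1(\cdot,a)$ statement and the factor $(2a)^{-1}$ must be derived rather than merely asserted. Care is also needed so that the transformation $f_\road$ and its inverse are carried consistently, ensuring that a 1D neighborhood of $f_\road^{-1}(\z_\ob)$ maps correctly onto the corresponding segment of the line.
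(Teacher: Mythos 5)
Your proof is correct. One thing to note at the outset: the paper never actually writes out a proof of this lemma --- it is stated as a building block, with only the downstream Laplace-functional results proved in Appendix \ref{Laplcefunctional}, where \eqref{PGFLMCP} and \eqref{under_palm} are simply invoked as ``the PGFL of the 1D MCP.'' Your two-level argument (condition on the parent PPP, apply the PPP PGFL identity per cluster to obtain $\mathcal{H}_{\bm{x},\road}(v)$, then apply the PPP PGFL again over the parents) is exactly the standard derivation the paper implicitly assumes, and your Palm-part decomposition --- an independent copy of $\Psi_\road$ times the representative cluster with the tagged daughter deleted --- mirrors the paper's own proof of Lemma \ref{thm:slivE}, which carries out precisely the three Slivnyak steps you invoke (parent level, daughter level, and the uniform law of $\bm{x}_\ob$; note the paper asserts rather than derives that uniform law as well, and your sketch of normalizing the cluster's first-order product density against the intensity is the right way to make it rigorous). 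On the sign question you left open: with $\bm{x}_\ob$ defined as the parent location, the representative-cluster factor is $\mathcal{H}_{\bm{x}_\ob,\road}(v)$, as your derivation gives; the paper's $\mathcal{H}_{-\bm{x}_\ob,\road}(v)$ is harmless because in every application of \eqref{under_palm} the tagged point sits at the base of the line, i.e.\ $f_\road^{-1}(\z_\ob)=0$, so the integration domain $\bt_1(0,a)$ is symmetric about the origin and the substitution $\bm{x}_\ob \mapsto -\bm{x}_\ob$ leaves the integral unchanged; the minus sign simply corresponds to parametrizing by the daughter-to-parent displacement rather than by the parent location itself.
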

			
\vspace{-.1in}		
We now derive the LF for PLP-MCP	which is given in the following two Theorems.  See Appendix \ref{Laplcefunctional} for the {proofs}.

\vspace{-.1in}			
		\begin{theorem}
			The LF for  $\Psi_{\mrm}$ is given as 
			\begin{align}
			\laplace{\vehPP}(v)=\expect{e^{-\sum_{{\z}\in\vehPP} v({\z})}}=
		{	\exp\left(-{\densityRoads}
			\int_{\R}\int_{0}^{\pi}\left(1-{G_{\Psi_{\roadf (\rho,\phi)},\roadf(\rho,\phi)}(e^{-v})}\right)\dv \rho\,\dv \phi \right)},\end{align}
				where  $G_{\Psi_{\road},\road}(v)$ is given in \eqref{PGFLMCP}.
		\end{theorem}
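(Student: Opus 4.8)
The plan is to exploit the hierarchical (Cox) structure of $\vehPP$ by first conditioning on the realization of the driving PLP $\roadPP$ and then averaging over it. The key observation is that the Laplace functional and the PGFL of any point process are linked through $\laplace{\Psi}(v)=\expect{\prod_{\z\in\Psi}e^{-v(\z)}}=\mathcal{G}_{\Psi}(e^{-v})$, so the stated identity is, at heart, a PGFL computation where the test function $u=e^{-v}$ automatically lands in $[0,1]$. I would begin by writing
\begin{align*}
\laplace{\vehPP}(v)=\expect{\exp\left(-\sum_{\road_i\in\roadPP}\sum_{\z\in\platoontransPP}v(\z)\right)}
\end{align*}
and then conditioning the inner expectation on the entire collection of lines $\roadPP$.

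Conditioned on the lines, the transformed MCPs $\platoontransPP$ carried by distinct roads are mutually independent by construction, since each road is assigned an independent copy of the $1$D MCP. Therefore the conditional expectation of the product $\prod_i e^{-\sum_{\z}v(\z)}$ factorizes into a product over the lines, and each factor is precisely the per-line Laplace functional of the transformed MCP on $\road_i$. Invoking the PGFL/LF identity line by line, this factor equals $G_{\Psi_{\road_i},\road_i}(e^{-v})$ with $G$ supplied by Lemma \ref{lem:PGFLMCP}. Hence
\begin{align*}
\laplace{\vehPP}(v)=\expect{\prod_{\road_i\in\roadPP}G_{\Psi_{\road_i},\road_i}(e^{-v})},
\end{align*}
where the surviving expectation is over the PLP alone.

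The final step is to evaluate this expectation using the representation of the PLP as a PPP on the L-space $\mathbf{C^{*}}=\R\times[0,\pi)$ with intensity $\densityRoads\,\dv\rho\,\dv\phi$. Taking $h(\rho,\phi)=G_{\Psi_{\roadf(\rho,\phi)},\roadf(\rho,\phi)}(e^{-v})$ as the test function, the PGFL of a PPP yields
\begin{align*}
\expect{\prod_{(\rho,\phi)}h(\rho,\phi)}=\exp\left(-\densityRoads\int_{\R}\int_{0}^{\pi}\left(1-h(\rho,\phi)\right)\dv\rho\,\dv\phi\right),
\end{align*}
which is exactly the claimed expression.

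The arithmetic here is short; the step demanding the most care is the factorization across lines. One must justify that, conditioned on $\roadPP$, the point patterns on different lines are independent (so the conditional expectation of the product is the product of conditional expectations), and that each factor depends on the line only through its L-atom $(\rho_i,\phi_i)$. This well-definedness of $h(\rho,\phi)$ is precisely what licenses the subsequent application of the L-space PPP PGFL. Note that the transformation $f_{\road_i}$ mapping the $1$D MCP onto the line in $\R^2$ is already absorbed into the definition of $G_{\Psi_{\road},\road}$ through $\mathcal{H}_{\bm{x},\road}$ in Lemma \ref{lem:PGFLMCP}, so no further coordinate bookkeeping is required at this stage.
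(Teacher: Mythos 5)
Your proof is correct and follows essentially the same route as the paper's: condition on the realization of the PLP $\roadPP$, factorize the conditional expectation across lines using the independence of the per-line MCPs, identify each factor as $G_{\Psi_{\road_i},\road_i}(e^{-v})$ via Lemma \ref{lem:PGFLMCP}, and finally apply the PGFL of the PPP in L-space $\mathbf{C^{*}}$ to obtain the stated exponential form. Your extra remarks on justifying the factorization and the well-definedness of the test function $h(\rho,\phi)$ simply make explicit what the paper's steps (a) and (b) assert.
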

					
\vspace{-.2in}
		\begin{theorem}
				The LF for  $\Psi_{\mrm}$ under the  reduced Palm distribution is
				\begin{align}
				\laplace{\vehPP}^{!{\ob}}(v)=
				\palmexpectx{e^{-\sum_{{\z}\in\vehPP} v({\z})}}{!\ob}={\laplace{\Psi_{\mrm}}(v)
					\int_{0}^{\pi}\pi^{-1}G_{\Psi_{\roadf(0,\phi)},\roadf(0,\phi)}^{!{\ob}}(e^{-v})\,\dv \phi.}\end{align}
		\end{theorem}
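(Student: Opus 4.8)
The plan is to collapse the reduced-Palm Laplace functional into a product of Laplace functionals of mutually independent pieces by invoking the Extended Slivnyak Theorem, and then to recognize each factor among the quantities already derived. First I would write the reduced-Palm Laplace functional from its definition, $\laplace{\vehPP}^{!\ob}(v)=\palmexpectx{e^{-\sum_{\z\in\vehPP} v(\z)}}{!\ob}$, and apply the distributional identity \eqref{eq:slivE} of Lemma \ref{thm:slivE}. This says that, conditioned on a typical point at the origin, the remainder of $\vehPP$ equals in distribution the superposition of three mutually independent pieces: an independent copy of $\vehPP$, an independent copy of the on-line MCP $\Psi_{\road_\ob}$ on the tagged line, and an independent copy of the reduced cluster $\dauP_{\bm{x}_\ob}$ of the typical point.

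Next, I would condition on the tagged-line orientation $\phi$ (and, for the two on-line pieces, on the parent location $\bm{x}_\ob$) and use the fact that the Laplace functional of a superposition of independent point processes is the product of the individual Laplace functionals. This factorizes the conditional Laplace functional into three factors. The first factor is the unconditional Laplace functional $\laplace{\Psi_{\mrm}}(v)$ from the preceding Theorem; since it depends on neither $\phi$ nor $\bm{x}_\ob$, it pulls out of all subsequent averaging and reproduces the leading factor in the claim.

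The key step is to recognize that the product of the two remaining factors, the Laplace functional of the on-line MCP $\Psi_{\road_\ob}$ and that of the reduced cluster $\dauP_{\bm{x}_\ob}$, after averaging $\bm{x}_\ob$ uniformly over $\bt_1(\ob,a)$ (the typical point $\z_\ob=\ob$ corresponds to the base $f_{\road_\ob}^{-1}(\ob)=0$), is precisely the reduced-Palm PGFL $G^{!\ob}_{\Psi_{\roadf(0,\phi)},\roadf(0,\phi)}(e^{-v})$ furnished by Lemma \ref{lem:PGFLMCP}, under the standard substitution $v\mapsto e^{-v}$ relating the PGFL to the Laplace functional. Concretely, the factor $\pgfl{\Psi_{\road},\road}{e^{-v}}$ in \eqref{under_palm} matches the on-line MCP, while the factor $(2a)^{-1}\int \mathcal{H}_{-\bm{x}_\ob,\road}(e^{-v})\,\dv\bm{x}_\ob$ matches the $\bm{x}_\ob$-averaged Laplace functional of the reduced cluster.

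Finally, I would average over the tagged-line orientation $\phi$, which is uniform on $[0,\pi)$, producing the outer factor $\int_0^\pi \pi^{-1} G^{!\ob}_{\Psi_{\roadf(0,\phi)},\roadf(0,\phi)}(e^{-v})\,\dv\phi$ and completing the identity. The main obstacle is the bookkeeping in the third step: one must carefully check that the extra cluster in \eqref{eq:slivE} together with the on-line MCP reproduces exactly the two factors of \eqref{under_palm}, tracking both the uniform law of the parent $\bm{x}_\ob$ on the tagged line and the PGFL-to-Laplace substitution, so that no spurious factor of $(2a)$ or mismatch in the conditioning region $\bt_1(\ob,a)$ survives.
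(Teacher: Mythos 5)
Your proposal is correct and follows essentially the same route as the paper's own proof: the paper likewise factors the reduced-Palm Laplace functional into the LF of an independent, unconditioned copy of $\Psi_{\mrm}$ times the reduced-Palm PGFL of the MCP on the tagged line $\roadf(0,\phi)$ (which, via \eqref{under_palm}, already bundles your two on-line pieces --- the independent line MCP and the $\bm{x}_\ob$-averaged reduced cluster --- into the single factor $G^{!\ob}_{\Psi_{\road},\road}$), and then deconditions over the uniform orientation $\phi$. Your only deviation is presentational: you invoke Lemma \ref{thm:slivE} explicitly for a three-way decomposition and then reassemble two of the factors into \eqref{under_palm}, whereas the paper states the two-factor split directly.
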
	
\vspace{-.2in}
\subsection{Distribution of number of points (vehicles) of $\vehPP$ in a set}
The PP $\vehPP$ can also be characterized by the distribution of the number  of its points  in a set which is crucial in computing the load distribution in vehicular communication network which will be discussed in the next section. 
To derive this distribution, we will first require the PGF of the number $N_\road$ of points of the MCP $\Psi_\road$ on the line $\road$ which is given in Lemma \ref{lem:PGF_N_MCP}. 
\begin{lemma}
\label{lem:PGF_N_MCP}
		Let $\Psi_{\road}$ denotes a $1$D MCP on line $\road=\roadf\left(\rho,\phi\right)$. 
		The PGF for the number $N_{\road}$ of points  of $\Psi_{\road}$  falling inside $\bt_{2}(\ob,r)$ is
		\begin{align}\label{PGFMCP}
			\mathcal{P}_{N_{\road}}(s,r)&={\exp\left(g(s,\sqrt{r^{2}-\rho^{2}})\right)}, \\
\label{g}
		\text{\hspace{-1.05in}where }	\hspace{.35in}g(s,t)&	=2\lambdaPT\left[\left|{t}-a\right|e^{\lambda_{\drm}\beta\left(t\right)(s-1)}-\left(t+a\right)+\fracS{(e^{\lambda_{\drm}(s-1)\beta\left(t\right)}-1)}{(\lambda_{\drm}(s-1))}\right].
		\end{align}
	
\end{lemma}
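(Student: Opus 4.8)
The plan is to obtain this PGF as a special case of the PGFL already established in Lemma \ref{lem:PGFLMCP}. For any point process and any set $\mathsf{C}$, the PGF of the count in $\mathsf{C}$ is recovered from the PGFL by choosing the test function equal to $s$ on $\mathsf{C}$ and to $1$ outside, since then $\prod_{\z\in\Psi_\road} v(\z)=s^{\Psi_\road(\mathsf{C})}$. I would therefore take $\mathsf{C}=\bt_2(\ob,r)$ and $v(\z)=1-(1-s)\indside{\z\in\bt_2(\ob,r)}$ and substitute this $v$ into \eqref{PGFLMCP}.

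The key step is reducing the inner integral that defines $\mathcal{H}_{\bm{x},\road}(v)$ to the one-dimensional overlap volume $\A_1$. A point at the 1D coordinate $\untrans{x}$ on $\road=\roadf(\rho,\phi)$ lies at Euclidean distance $\|f_\road(\untrans{x})\|=\sqrt{\rho^2+\untrans{x}^2}$ from the origin, so $f_\road(\untrans{x})\in\bt_2(\ob,r)$ exactly when $|\untrans{x}|\le\sqrt{r^2-\rho^2}$; write $t=\sqrt{r^2-\rho^2}$ for this chord half-length (for $\rho>r$ the line misses the ball and $N_\road=0$). Hence $1-(v\circ f_\road)(\bm{x}+\bm{y})=(1-s)\indside{|\bm{x}+\bm{y}|\le t}$, and integrating over $\bm{y}\in\bt_1(\ob,a)$ yields $(1-s)$ times the length of $[\bm{x}-a,\bm{x}+a]\cap[-t,t]$, which is precisely $\A_1(t,a,|\bm{x}|)$. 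This gives $\mathcal{H}_{\bm{x},\road}(v)=\exp(\lambda_\drm(s-1)\A_1(t,a,|\bm{x}|))$, so that \eqref{PGFLMCP} collapses to $\mathcal{P}_{N_\road}(s,r)=\exp(g(s,t))$ with $g(s,t)=-\lambdaPT\int_\R(1-\exp(\lambda_\drm(s-1)\A_1(t,a,|\bm{x}|)))\,\dv\bm{x}$.

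It then remains to evaluate this single integral. By symmetry in $\bm{x}$ I would write $g(s,t)=2\lambdaPT\int_0^\infty(\exp(\lambda_\drm(s-1)\A_1(t,a,u))-1)\,\dv u$, note that the integrand vanishes for $u>t+a$ since $\A_1=0$ there, and split the effective domain $[0,t+a]$ according to the piecewise form of $\A_1$: on $[0,|t-a|]$ the exponent is the constant $\lambda_\drm(s-1)\beta(t)$ (using $\beta(t)=2\min(t,a)$), while on $(|t-a|,t+a]$ the exponent is $\lambda_\drm(s-1)(t+a-u)$, and the substitution $w=t+a-u$ turns the second piece into an elementary exponential integral over $[0,\beta(t)]$ (since $t+a-|t-a|=\beta(t)$). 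Collecting the two contributions and applying the identity $|t-a|+\beta(t)=t+a$, valid in both cases $t\ge a$ and $t<a$, reproduces \eqref{g}.

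The main obstacle is conceptual rather than computational: correctly translating the 2D count inside $\bt_2(\ob,r)$ into a 1D overlap along the line and recognizing that this overlap is exactly $\A_1(t,a,|\bm{x}|)$ with $t=\sqrt{r^2-\rho^2}$. Once the inner integral is identified with $\A_1$, the outer integral is routine, the only care being the piecewise split and the merging identity $|t-a|+2\min(t,a)=t+a$.
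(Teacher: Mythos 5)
Your proposal is correct, and since the paper states Lemma~\ref{lem:PGF_N_MCP} without an explicit proof, your derivation supplies precisely the computation its framework presupposes: specializing the PGFL of Lemma~\ref{lem:PGFLMCP} to the test function $v(\z)=1-(1-s)\indside{\z\in\bt_{2}(\ob,r)}$, reducing $\mathcal{H}_{\bm{x},\road}(v)$ to $\exp\left(\lambda_{\drm}(s-1)\A_{1}(t,a,|\bm{x}|)\right)$ with $t=\sqrt{r^{2}-\rho^{2}}$, and integrating piecewise over the support of $\A_{1}$. Every step checks out, including the geometric identification $\|f_{\road}(\untrans{x})\|=\sqrt{\rho^{2}+\untrans{x}^{2}}$ and the merging identity $|t-a|+\beta(t)=t+a$ that collapses the two pieces into \eqref{g}.
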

Note that $\rho=0$ gives the PGF of $N_{\road}$ when the line passes through the origin with an angle of $\phi$.	
{The} $k$th derivative of $g(s,t)$ with respect to $s$ is given as 
	\begin{align}\label{g(0,rho)}
	&g^{{(k)}}(s,t)=2\lambda_{\pt}\left[(\lambda_{\drm}\beta(t))^k|t-a|e^{(s-1)\lambda_{\drm}\beta(t)}\right.\nonumber\\
	&\left. 
	+\frac{1}{\lambda_{\drm}}\left(\sum_{j=0}^{k}{k\choose j}\frac{j!(-1)^{j}}{(s-1)^{j+1}}(\lambda_{\drm}\beta(t))^{k-j}e^{(s-1)\lambda_{\drm}\beta(t)}-\frac{k!(-1)^{k}}{(s-1)^{k+1}}\right)\right].
	\end{align}
{To derive the mean and variance, we need $\lim_{s\rightarrow1}g^{(k)}(s,t)$. Let $\lim_{s\rightarrow1}g^{(k)}(s,t)=\kappa(t,k)$ which is given as
	\begin{align}\label{kappaequation}
		\lim_{s\rightarrow1}g^{(k)}(s,t)=\kappa(t,k)=2\lambdaPT\left(\lambda_{\drm}\beta(t)\right)^{k}\left[\vert t-a \vert+\beta(t)/(k+1)\right].
\end{align}}
We now present the distribution of the number $S(r)$  of points of $\vehPP$ in a 2D ball of radius $r$ \ie  $S(r)=\vehPP(\bt_{2}(\ob,r))$ in terms of its PGF and PMF along with its mean and variance. 
Note that the PMF, the mean and the variance of a discrete RV $X$ can be computed from its PGF using the following relation
\begin{align}
p_X(k)=\mathbb{P}[X=k]&=\frac{1}{k!}\left[{\mathcal{P}^{(k)}_{X}(s,r)}\right]_{s=0}&\ \forall k\label{eq:pgfpmfrelation}\\
\mathbb{E}[X]&=\left[\mathcal{P}^{(1)}_{X}(s)\right]_{s=1}, &
\mathrm{Var}[X]&=\left[\mathcal{P}^{(2)}_{X}(s)\right]_{s=1}+\mathbb{E}\left[X\right]-\left(\mathbb{E}\left[X\right]\right)^2\label{eq:variance}.
\end{align}
Therefore, we get the following result.
{See Appendix \ref{prooftheorem1} for the proofs of the following results.} 		

\vspace{-.1in}	
\begin{theorem}\label{theorem1}
The PGF of the number $S(r)$ of points  of $\Psi_{\mrm}$  inside $\bt_{2}(\ob,r)$ is
	\begin{align}\label{PGFS}
		&\mathcal{P}_{S(r)}(s)=\exp\left(-2 \pi \densityRoads \left(r-\int_{0}^{r}\frac{\exp (g(s,t))t}{\sqrt{r^2-t^2}}{} \dv t \right)\right),
	\end{align}
\vspace{-.1in}	
	where $g(s,t)$ is given in \eqref{g}. 	\end{theorem}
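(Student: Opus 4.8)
The plan is to exploit the Cox structure of $\vehPP$. The count $S(r)=\vehPP(\bt_2(\ob,r))$ decomposes as the sum $S(r)=\sum_{\road_i\in\roadPP} N_{\road_i}$ of the per-line counts, where $N_{\road_i}$ is the number of points of the MCP $\Psi_{\road_i}$ falling inside $\bt_2(\ob,r)$. Conditioned on the line process $\roadPP$, the MCPs on distinct lines are independent, so I would first condition on $\roadPP$ and write
\begin{align*}
\mathcal{P}_{S(r)}(s)=\expect{s^{S(r)}}=\expect{\prod_{\road_i\in\roadPP}\expect{s^{N_{\road_i}}\mid \road_i}}=\expect{\prod_{\road_i\in\roadPP}\mathcal{P}_{N_{\road_i}}(s,r)},
\end{align*}
where the inner conditional PGF is exactly the quantity supplied by Lemma \ref{lem:PGF_N_MCP}, namely $\mathcal{P}_{N_{\roadf(\rho,\phi)}}(s,r)=\exp(g(s,\sqrt{r^2-\rho^2}))$ when the line meets the ball and equals $1$ otherwise.

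Next I would invoke the representation of the PLP as a PPP of L-atoms in $\mathbf{C^{*}}=\R\times[0,\pi)$ with intensity $\densityRoads\,\dv\rho\,\dv\phi$, and apply the PGFL of a PPP, $\expect{\prod_{x} f(x)}=\exp\left(-\int (1-f(x))\,\Lambda(\dv x)\right)$, with $f(\roadf(\rho,\phi))=\mathcal{P}_{N_{\roadf(\rho,\phi)}}(s,r)$. This gives
\begin{align*}
\mathcal{P}_{S(r)}(s)=\exp\left(-\densityRoads\int_0^\pi\int_\R \left(1-\mathcal{P}_{N_{\roadf(\rho,\phi)}}(s,r)\right)\dv\rho\,\dv\phi\right).
\end{align*}
The integrand is independent of $\phi$, yielding a factor $\pi$; it also vanishes for $|\rho|>r$, since such lines miss the ball and contribute $1$, so the $\rho$-integral collapses to $[-r,r]$. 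Using evenness in $\rho$ this halves to $2\int_0^r$, and separating the constant term leaves $r-\int_0^r \exp(g(s,\sqrt{r^2-\rho^2}))\,\dv\rho$ inside the exponent, multiplied by $-2\pi\densityRoads$.

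Finally I would perform the substitution $t=\sqrt{r^2-\rho^2}$ (so $\rho=\sqrt{r^2-t^2}$, $\dv\rho=-t/\sqrt{r^2-t^2}\,\dv t$, with $\rho:0\to r$ mapping to $t:r\to 0$), which converts $\int_0^r \exp(g(s,\sqrt{r^2-\rho^2}))\,\dv\rho$ into $\int_0^r \exp(g(s,t))\,t/\sqrt{r^2-t^2}\,\dv t$ and delivers the claimed expression. I expect the only place where genuine care is needed to be the setup of the conditional-independence/PGFL argument, i.e. justifying that conditioning on $\roadPP$ turns $S(r)$ into an independent sum over lines and that the line process is a PPP in L-space to which the standard PGFL applies; the $\phi$-integration and the change of variables are routine, the sole subtlety there being the restriction to $|\rho|\le r$, where the chord $2\sqrt{r^2-\rho^2}$ is real.
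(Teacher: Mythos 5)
Your proposal is correct and follows essentially the same route as the paper: decompose $S(r)$ into per-line counts, use the conditional independence of the MCPs given $\roadPP$ together with the PGF from Lemma \ref{lem:PGF_N_MCP}, and then apply the PGFL of the PPP of L-atoms in $\mathbf{C^{*}}$. The only difference is that you spell out the final steps (the $\phi$-integration, the restriction to $|\rho|\le r$, and the substitution $t=\sqrt{r^2-\rho^2}$) that the paper leaves implicit after invoking the PGFL.
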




\begin{corollary}\label{cor1.a1}
The PMF of $S(r)$ is given by
	\begin{align}
	\mathbb{P}[S(r)=n]&=\frac{1}{n!}
		{\mathcal{P}_{S(r)}(0)}\ 
		\bellf{}\left(f_{\mrm}^{(1)}(r),\cdots, f_{\mrm}^{(n)}(r)\right),\label{PSk}\\
	\text{with\ \ \  }	{f_{\mrm}^{(k)}(r)}&=2 \pi \densityRoads \int_{0}^{r}\frac{\exp\left(g(0,t)\right)}{\sqrt{r^2-t^2}}\ \bellf{}\left(g^{(1)}(0,t),\cdots, g^{(k)}(0,t)\right)t \dv t, \label{fsk}\hspace{1in}
	\end{align}
 and $g^{(k)}(0,t)$ can be evaluated from \eqref{g(0,rho)}. 
\end{corollary}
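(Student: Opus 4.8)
The plan is to read off the PMF directly from the PGF of Theorem~\ref{theorem1} using the relation~\eqref{eq:pgfpmfrelation}, $\mathbb{P}[S(r)=n]=\tfrac{1}{n!}\left[\mathcal{P}^{(n)}_{S(r)}(s)\right]_{s=0}$, so the whole task reduces to computing the $n$th $s$-derivative of $\mathcal{P}_{S(r)}(s)$ and setting $s=0$. I would write $\mathcal{P}_{S(r)}(s)=\exp(\phi(s))$ with
\begin{align*}
\phi(s)=-2\pi\densityRoads\left(r-\int_{0}^{r}\frac{\exp(g(s,t))\,t}{\sqrt{r^2-t^2}}\,\dv t\right),
\end{align*}
and apply the complete Bell-polynomial (Fa\`{a} di Bruno) form for the derivatives of an exponential to the \emph{outer} composition, giving $\mathcal{P}^{(n)}_{S(r)}(s)=\exp(\phi(s))\,\bellf{}\!\left(\phi^{(1)}(s),\cdots,\phi^{(n)}(s)\right)$. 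Evaluating at $s=0$ and noting that $\exp(\phi(0))=\mathcal{P}_{S(r)}(0)$ by Theorem~\ref{theorem1} reduces the claim~\eqref{PSk} to the identification $\phi^{(k)}(0)=f_{\mrm}^{(k)}(r)$.

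To establish that identification, I would differentiate $\phi$ under the $t$-integral $k$ times in $s$; since the kernel $t/\sqrt{r^2-t^2}$ carries no $s$-dependence, this leaves
\begin{align*}
\phi^{(k)}(s)=2\pi\densityRoads\int_{0}^{r}\frac{t}{\sqrt{r^2-t^2}}\,\frac{\partial^{k}}{\partial s^{k}}\exp(g(s,t))\,\dv t.
\end{align*}
A \emph{second} application of the same Bell-polynomial identity, now to the inner composition $\exp(g(s,t))$, yields $\frac{\partial^{k}}{\partial s^{k}}\exp(g(s,t))=\exp(g(s,t))\,\bellf{}\!\left(g^{(1)}(s,t),\cdots,g^{(k)}(s,t)\right)$, where $g^{(k)}(s,t)$ is already recorded explicitly in~\eqref{g(0,rho)}. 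Setting $s=0$ gives precisely $\phi^{(k)}(0)=f_{\mrm}^{(k)}(r)$ as defined in~\eqref{fsk}, and substituting this into the outer Bell polynomial produces the stated PMF.

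The argument is thus a \emph{nested} use of Fa\`{a} di Bruno's formula, once for the outer exponential and once for $\exp(g(s,t))$ inside the integral, and the main technical point I would be careful about is twofold. First, I must justify interchanging the $s$-differentiation with the $t$-integration, which I would do by dominating $\exp(g(s,t))$ and its $s$-derivatives uniformly for $s$ in a neighborhood of $0$, while observing that the only singularity of the integrand, the integrable factor $1/\sqrt{r^2-t^2}$ at $t=r$, is independent of $s$ and hence harmless. Second, I must keep the two Bell polynomials cleanly separated, so that the inner polynomial in the $g^{(k)}(0,t)$ stays under the integral defining $f_{\mrm}^{(k)}(r)$ while the outer polynomial in the $f_{\mrm}^{(k)}(r)$ multiplies $\mathcal{P}_{S(r)}(0)$. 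No deeper obstacle arises, since everything else is the routine bookkeeping of the derivatives already computed in~\eqref{g(0,rho)}.
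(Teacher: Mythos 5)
Your proposal is correct and follows essentially the same route as the paper: the paper likewise writes $\mathcal{P}_{S(r)}(s)=\exp(f_\mrm(s,r))$ (your $\phi$, just expressed in the variable $\rho=\sqrt{r^2-t^2}$) and applies Fa\`{a} di Bruno's formula twice, once to the outer exponential and once to $\exp(g(s,\cdot))$ inside the integral, yielding \eqref{PSk} and \eqref{fsk}. Your added care about interchanging $s$-differentiation with the $t$-integration is a point the paper passes over silently, but it does not change the argument.
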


\vspace{-.2in}	
\begin{corollary}\label{cor1.1}
	The  mean and variance of $S(r)$ is 
	\begin{align}
		\mathbb{E}\left[S(r)\right]&={\left[{\mathcal{P}_{S(r)}^{(1)}(s)}\right]_{s=1}}=\densityPTS\pi r^2,\ 	\label{expecteds}\\
	\label{variance_s}
		\mathrm{Var}(S(r))&= 
		\begin{dcases}
			\textstyle\densityPTS\pi r^2+2\pi{\densityRoads}\left[\frac{32}{3}\left(a\lambdaPT\lambda_{\drm}\right)^2r^3+8\lambdaPT\lambda_{\drm}^2\left(\frac{2}{3}a r^3-\frac{1}{16}\pi r^4\right)\right],& \, a>r\\
			\textstyle\densityPTS\pi r^{2}+\left(\frac{8\lambdaPT\lambda_{\drm}a}{3}\right)^{2}\pi	{\densityRoads}{r^{3}}+4\pi{\densityRoads}\lambdaPT\lambda_{\drm}^{2}\left(r^{3}\left(\frac{8a}{3}-\frac{\pi r}{4}\right)\right.\\
			\textstyle\left.+\sqrt{r^{2}\!-\!a^{2}}\left(-\frac{a^{3}}{3}-\frac{13ar^{2}}{6}\right)+\left(2a^{2}r^{2}+\frac{r^{4}}{2}\right)\sin^{-1}\left(\frac{\sqrt{r^{2}-a^{2}}}{r}\right)\right).            &  \,a<r,
		\end{dcases}
	\end{align}
\end{corollary}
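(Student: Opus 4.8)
The plan is to obtain both moments directly from the closed-form PGF $\mathcal{P}_{S(r)}(s)$ in \eqref{PGFS} via the moment relations in \eqref{eq:variance}, so the whole computation reduces to differentiating a composite exponential and evaluating the resulting integrals at $s=1$. Writing $\mathcal{P}_{S(r)}(s)=\exp(-2\pi\densityRoads(r-I(s)))$ with $I(s)=\int_0^r e^{g(s,t)}t(r^2-t^2)^{-1/2}\dv t$, each derivative in $s$ brings down factors built from $I'(s)$ and $I''(s)$, which in turn involve $g^{(1)}(s,t)$ and $g^{(2)}(s,t)$ from \eqref{g(0,rho)}. The two facts that make the evaluation clean are that $g(1,t)=0$ for all $t$ (so $e^{g(1,t)}=1$ and $\mathcal{P}_{S(r)}(1)=1$, using $\int_0^r t(r^2-t^2)^{-1/2}\dv t=r$), and that $\lim_{s\to1}g^{(k)}(s,t)=\kappa(t,k)$ is given explicitly in \eqref{kappaequation}.

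For the mean I would compute $\mathbb{E}[S(r)]=\mathcal{P}_{S(r)}^{(1)}(1)=2\pi\densityRoads I'(1)$, where $I'(1)=\int_0^r \kappa(t,1)\,t(r^2-t^2)^{-1/2}\dv t$. A short check of \eqref{kappaequation} with $\beta(t)=2\min(t,a)$ shows that the two pieces of $\kappa(t,1)$ coincide: both the $t<a$ and $t>a$ branches collapse to $\kappa(t,1)=2m\lambdaPT t$ (using $\lambda_{\drm}=m/(2a)$), so no case split is needed here. The standard integral $\int_0^r t^2(r^2-t^2)^{-1/2}\dv t=\pi r^2/4$ then yields $\mathbb{E}[S(r)]=\densityPTS\pi r^2$, matching the density computed earlier in the paper.

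For the variance, differentiating once more gives $\mathcal{P}_{S(r)}^{(2)}(1)=(2\pi\densityRoads)^2(I'(1))^2+2\pi\densityRoads I''(1)$, with $I''(1)=\int_0^r[\kappa(t,1)^2+\kappa(t,2)]\,t(r^2-t^2)^{-1/2}\dv t$. Substituting into \eqref{eq:variance}, the term $(2\pi\densityRoads)^2(I'(1))^2$ is exactly $(\mathbb{E}[S(r)])^2$ and cancels against the $-(\mathbb{E}[S(r)])^2$, leaving the compact form $\mathrm{Var}(S(r))=\mathbb{E}[S(r)]+2\pi\densityRoads I''(1)$. It then remains to plug in $\kappa(t,1)^2$ and $\kappa(t,2)$ and carry out the integral.

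The main obstacle is this final integral $I''(1)$, because $\kappa(t,2)$ genuinely depends on the branch of $\beta(t)$: for $t<a$ one gets $\kappa(t,2)=8\lambdaPT\lambda_{\drm}^2 t^2(a-t/3)$, and for $t>a$ one gets $\kappa(t,2)=8\lambdaPT\lambda_{\drm}^2 a^2(t-a/3)$. When $a>r$ the whole integration range stays in the first regime and reduces to the elementary moments $\int_0^r t^3(r^2-t^2)^{-1/2}\dv t=2r^3/3$ and $\int_0^r t^4(r^2-t^2)^{-1/2}\dv t=3\pi r^4/16$, giving the first branch of \eqref{variance_s}. When $a<r$ the integral must be split at $t=a$, and the tail integrals $\int_a^r t^k(r^2-t^2)^{-1/2}\dv t$ produce the $\sqrt{r^2-a^2}$ and $\sin^{-1}(\sqrt{r^2-a^2}/r)$ contributions appearing in the second branch. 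Carrying these piecewise antiderivatives through and bookkeeping the $\kappa(t,1)^2$ and $\kappa(t,2)$ contributions separately is the most error-prone part; everything else is a mechanical consequence of the PGF \eqref{PGFS} and the limit \eqref{kappaequation}.
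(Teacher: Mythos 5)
Your proposal is correct and takes essentially the same route as the paper's proof: differentiate the PGF of Theorem \ref{theorem1}, evaluate at $s=1$ using the limits $\kappa(t,k)$ from \eqref{kappaequation}, let the squared first derivative cancel against $-\left(\mathbb{E}[S(r)]\right)^2$ via \eqref{eq:variance}, and split the remaining integral at the branch point of $\beta$ (your $t=a$ is exactly the paper's $\rho=\sqrt{r^2-a^2}$ under the change of variables $t=\sqrt{r^2-\rho^2}$). The only cosmetic difference is that you integrate over the half-chord length $t$ with the Jacobian $t/\sqrt{r^2-t^2}$ while the paper integrates over the line offset $\rho$, and your leaving the $a<r$ case as a described-but-unexecuted computation matches the level of detail in the paper's own Case II.
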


\vspace{-.2in}	
\subsection{Distribution of number of points of $\vehPP$ in a set under Palm {distribution}}
We also present the distribution of number of points  inside a set under Palm {distribution} (conditioned on occurrence of a point at the origin \ie $\ob\in\Psi_{\mrm}$). Similar to previous section, we can compute the PMF and the mean of $\hat{S}(r)$ from its PGF. 
See Appendix \ref{prooftheorem2} for the proof of the following results. 
\begin{theorem}\label{theorem2}
	The PGF of the number $\hat{S}(r)$ of points  of $\Psi_{\mrm}\setminus\{\ob\}$ conditioned on $\ob\in\Psi_{\mrm}$, falling inside  $\bt_{2}(\ob,r)$ is
	\begin{align}\label{PGFcaps}
		&\mathcal{P}_{\hat{S}(r)}(s)=\mathcal{P}_{S(r)}(s)\exp\left(g(s,r)\right)a^{-1}\int_{0}^{a}e^{(s-1)\lambda_{\drm}\A_{1}(r,a,x)}{\dv x},
	\end{align}
	where $\mathcal{P}_{S(r)}(s)$ is presented in Theorem \ref{theorem1}, $g(s,\cdot)$ is given in  \eqref{g}.
\end{theorem}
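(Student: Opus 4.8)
The plan is to read off $\mathcal{P}_{\hat{S}(r)}(s)$ directly from the Extended Slivnyak Theorem (Lemma~\ref{thm:slivE}), which already decomposes the reduced Palm version of $\vehPP$ into three mutually independent pieces. Since the numbers of points that independent processes contribute to one fixed set are independent, the PGF of the total count in $\bt_2(\ob,r)$ factors into a product of the three corresponding PGFs. Concretely, conditioned on $\ob\in\vehPP$, Lemma~\ref{thm:slivE} gives
\begin{align*}
\hat{S}(r)\stackrel{(d)}{=}\vehPP(\bt_2(\ob,r))+\Psi_{\road_\ob}(\bt_2(\ob,r))+\dauP_{\bm{x}_\ob}(\bt_2(\ob,r)),
\end{align*}
with independent summands, so I would evaluate the three PGF factors in turn and multiply them.

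The first factor comes from the independent copy of $\vehPP$: its count in $\bt_2(\ob,r)$ is distributed as $S(r)$, so this factor is exactly $\mathcal{P}_{S(r)}(s)$ from Theorem~\ref{theorem1}. The second factor comes from the MCP $\Psi_{\road_\ob}$ on the tagged line. Because the tagged line passes through the origin, i.e. $\road_\ob=\roadf(0,\phi)$ with $\rho=0$, Lemma~\ref{lem:PGF_N_MCP} gives its count PGF as $\exp(g(s,\sqrt{r^2-0}))=\exp(g(s,r))$. I would note that $\bt_2(\ob,r)$ is rotationally symmetric, so this factor is independent of the uniform line orientation $\phi$; averaging over $\phi$ therefore leaves it unchanged, which is why no $\phi$-integral survives in the final expression.

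The third factor, from the tagged cluster $\dauP_{\bm{x}_\ob}$, needs a short direct computation. By Lemma~\ref{thm:slivE} the parent $\bm{x}_\ob$ of the typical point is uniform on the $1$D $a$-neighborhood of $\bm{z}_\ob=f_{\road_\ob}^{-1}(\ob)=0$, i.e. on $[-a,a]$, and by Slivnyak's theorem the remaining daughter points form a PPP of density $\lambda_\drm$ on $\bt_1(\bm{x}_\ob,a)$. The chord of $\bt_2(\ob,r)$ cut out by the tagged line is $\bt_1(\ob,r)$, so the number of these cluster points inside $\bt_2(\ob,r)$ is Poisson with mean $\lambda_\drm\A_1(r,a,|\bm{x}_\ob|)$, yielding the conditional PGF $e^{(s-1)\lambda_\drm\A_1(r,a,|\bm{x}_\ob|)}$ via the Poisson PGF $e^{\mu(s-1)}$. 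Averaging over $\bm{x}_\ob$ uniform on $[-a,a]$ and folding the even integrand onto $[0,a]$ gives $a^{-1}\int_0^a e^{(s-1)\lambda_\drm\A_1(r,a,x)}\,\dv x$. Multiplying the three factors then produces the claimed expression.

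The main obstacle is concentrated in this last factor, and it is conceptual rather than computational. One must correctly pin down the conditional law of the parent $\bm{x}_\ob$ (which Lemma~\ref{thm:slivE} supplies) and recognize that the relevant chord length along the tagged line is precisely the $1$D ball-intersection $\A_1(r,a,|\bm{x}_\ob|)$, so that the Poisson mean is identified cleanly. I would also be careful to justify that the three pieces are genuinely independent given $\phi$ and $\bm{x}_\ob$, and that the first two factors are free of $\phi$ and $\bm{x}_\ob$, so that the outer expectation over $(\phi,\bm{x}_\ob)$ reduces to the single integral over $\bm{x}_\ob$; this is what licenses the product-of-PGFs step. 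Everything else is a routine application of Theorem~\ref{theorem1} and Lemma~\ref{lem:PGF_N_MCP}.
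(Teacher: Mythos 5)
Your proposal is correct and follows essentially the same route as the paper: both invoke the Extended Slivnyak Theorem (Lemma~\ref{thm:slivE}) to split $\hat{S}(r)$ into three independent counts and multiply the corresponding PGFs, with the factors $\mathcal{P}_{S(r)}(s)$, $\exp(g(s,r))$, and $a^{-1}\int_{0}^{a}e^{(s-1)\lambda_{\drm}\A_{1}(r,a,x)}\,\dv x$. Your treatment of the third factor (Poisson count with mean $\lambda_{\drm}\A_{1}(r,a,|\bm{x}_\ob|)$, then averaging the parent location over $[-a,a]$ and folding by symmetry) and the rotational-symmetry remark eliminating $\phi$ simply make explicit what the paper leaves implicit.
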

\begin{corollary}\label{cor_vehicle}
	The PMF of $\hat{S}(r)$ is 
	\begin{align}\label{PMFhatS}
		\mathbb{P}\left[\hat{S}(r)=n\right]&=
		\frac1 {n!}
\sum_{k_1+k_2+k_3=n}\left[ {n \choose {k_1,k_2,k_3}}\prod_{1\leq t\leq 3}f^{(k_t)}_{t}(0,r)\right],\\
\text{where }
%
%
	f_{1}^{(k)}(0,r)&=\mathcal{P}_{S(r)}^{(k)}(0)=
	{\mathcal{P}_{S(r)}(0)}\ 
		\bellf{}\left(f_{\mrm}^{(1)}(0,r),\cdots, f_{\mrm}^{(k)}(0,r)\right)\label{eq:Pskderivatives}\\
	f_{2}^{(k)}(0,r)&=e^{g(0,r)}\ \bellf{}\left(g^{(1)}(0,r)...g^{(k)}(0,r)\right)\\
	f_{3}^{(k)}(0,r)&=\int_{0}^{a}\left(\lambda_{\drm}
		\A_{1}(r,a,x)\right)^{k}{a^{-1}}e^{-\lambda_{\drm}\A_{1}(r,a,x)}{\dv x},
	\end{align}
%
	with  $g(\cdot)$ is given in \eqref{g} and  $f_{\mrm}^{(k)}(0,r)$ in \eqref{fsk}.
\end{corollary}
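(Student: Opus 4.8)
The plan is to start from the product form of the PGF already established in Theorem~\ref{theorem2}, namely $\mathcal{P}_{\hat{S}(r)}(s)=f_1(s)\,f_2(s)\,f_3(s)$ with $f_1(s)=\mathcal{P}_{S(r)}(s)$, $f_2(s)=\exp(g(s,r))$, and $f_3(s)=a^{-1}\int_{0}^{a}e^{(s-1)\lambda_{\drm}\A_{1}(r,a,x)}\dv x$. These three factors are precisely the count PGFs of the three independent pieces in the Extended Slivnyak decomposition of Lemma~\ref{thm:slivE}: the independent copy of $\vehPP$ (giving $S(r)$, hence $f_1$), the MCP $\Psi_{\road_\ob}$ on the tagged line (giving $f_2=\exp(g(s,r))$ via Lemma~\ref{lem:PGF_N_MCP} with $\rho=0$), and the cluster PPP $\dauP_{\bm{x}_\ob}$ of the typical point (giving $f_3$ after averaging over the uniform position of $\bm{x}_\ob$). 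Thus $\hat{S}(r)$ is a sum of three independent integer-valued RVs whose joint PGF factorizes, and the task is to recover the PMF of such a sum.

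To extract the PMF I would invoke the PGF-to-PMF relation \eqref{eq:pgfpmfrelation}, i.e.\ $\mathbb{P}[\hat{S}(r)=n]=\frac{1}{n!}[\mathcal{P}_{\hat{S}(r)}^{(n)}(s)]_{s=0}$. The central algebraic step is then to differentiate the product $f_1 f_2 f_3$ exactly $n$ times. Applying the general Leibniz rule for a product of three functions yields $\mathcal{P}_{\hat{S}(r)}^{(n)}(s)=\sum_{k_1+k_2+k_3=n}\binom{n}{k_1,k_2,k_3}\prod_{t=1}^{3}f_t^{(k_t)}(s)$, and evaluating at $s=0$ while carrying the prefactor $1/n!$ produces exactly the claimed multinomial sum $\frac{1}{n!}\sum_{k_1+k_2+k_3=n}\binom{n}{k_1,k_2,k_3}\prod_{t}f_t^{(k_t)}(0,r)$.

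It then remains to identify the three derivative families $f_t^{(k)}(0,r)$. For $f_1$ I would reuse Corollary~\ref{cor1.a1}: since $f_1=\mathcal{P}_{S(r)}$ is the exponential of the exponent appearing in \eqref{PGFS}, the complete-Bell-polynomial form of Fa\`a di Bruno gives $f_1^{(k)}(0,r)=\mathcal{P}_{S(r)}(0)\,\bellf{}(f_{\mrm}^{(1)}(0,r),\dots,f_{\mrm}^{(k)}(0,r))$ with $f_{\mrm}^{(k)}$ from \eqref{fsk}. For $f_2=\exp(g(s,r))$ the same argument yields $f_2^{(k)}(0,r)=e^{g(0,r)}\bellf{}(g^{(1)}(0,r),\dots,g^{(k)}(0,r))$, where the inner derivatives $g^{(k)}(0,r)$ are supplied by \eqref{g(0,rho)}. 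For $f_3$ the differentiation is elementary: each $s$-derivative brings down a factor $\lambda_{\drm}\A_{1}(r,a,x)$ from the exponent, so $f_3^{(k)}(s)=a^{-1}\int_{0}^{a}(\lambda_{\drm}\A_{1}(r,a,x))^{k}e^{(s-1)\lambda_{\drm}\A_{1}(r,a,x)}\dv x$, and setting $s=0$ gives the stated expression.

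The main obstacle here is bookkeeping rather than any genuinely new idea: correctly handling the two nested Fa\`a di Bruno expansions (the one for $f_1$ is doubly composite, since its exponent itself contains $\exp(g(s,t))$ under an integral) and verifying that their complete-Bell-polynomial forms agree with the structure already recorded in Corollary~\ref{cor1.a1} and \eqref{g(0,rho)}. Once these derivative families are in place, the only genuinely new computation is the straightforward $f_3$ derivative, after which the three-fold Leibniz rule assembles everything into the stated multinomial PMF.
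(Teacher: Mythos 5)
Your proposal is correct and follows essentially the same route as the paper: starting from the three-factor PGF of Theorem~\ref{theorem2} (itself grounded in the Extended Slivnyak decomposition), applying the generalized Leibniz rule together with the PGF-to-PMF relation \eqref{eq:pgfpmfrelation}, then identifying $f_1^{(k)}$ and $f_2^{(k)}$ via Fa\`a di Bruno/Bell polynomials and $f_3^{(k)}$ by direct differentiation under the integral. No gaps; this matches the paper's proof in both structure and detail.
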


\begin{corollary}\label{cor2}
	The expected value of $\hat{S}(r)$ is 
		$\mathbb{E}\left[\hat{S}(r)\right]
={\densityPTS\pi r^2}
	+2\lambdaPT m r+{\lambda_{\drm}}\left(2r-r^{2}/(2a)\right).$
\end{corollary}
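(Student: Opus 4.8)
The plan is to read the mean directly off the PGF in Theorem~\ref{theorem2} through the identity $\mathbb{E}[\hat{S}(r)] = \big[\mathcal{P}^{(1)}_{\hat{S}(r)}(s)\big]_{s=1}$ recorded in \eqref{eq:variance}. I would write the PGF as a product of three factors $\mathcal{P}_{\hat{S}(r)}(s) = A(s)\,B(s)\,C(s)$, where $A(s) = \mathcal{P}_{S(r)}(s)$, $B(s) = \exp\left(g(s,r)\right)$, and $C(s) = a^{-1}\int_{0}^{a} e^{(s-1)\lambda_{\drm}\A_{1}(r,a,x)}\,\dv x$. The first observation is that each factor equals $1$ at $s=1$: $A(1)=1$ because it is itself a PGF; $C(1)=1$ because the integrand reduces to $1$; and $B(1)=1$ because $g(1,r)=0$, which one verifies from \eqref{g} after taking the removable-singularity limit $\lim_{s\to1}\big(e^{\lambda_{\drm}(s-1)\beta(r)}-1\big)/\big(\lambda_{\drm}(s-1)\big)=\beta(r)$ and using $\beta(r)=2\min(r,a)$. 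Since all undifferentiated factors are then unity, the product rule collapses to $\mathbb{E}[\hat{S}(r)] = A'(1)+B'(1)+C'(1)$, and it remains only to evaluate the three single-factor derivatives.

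For the first factor, $A'(1) = \big[\mathcal{P}^{(1)}_{S(r)}(s)\big]_{s=1} = \mathbb{E}[S(r)] = \densityPTS\pi r^{2}$, exactly the mean already established in Corollary~\ref{cor1.1}. For the second factor, $B'(1) = \exp\left(g(1,r)\right)g^{(1)}(1,r) = g^{(1)}(1,r)$; because the $s\to1$ limit of $g^{(1)}$ is delicate owing to the $(s-1)$ factors in the denominators of \eqref{g(0,rho)}, I would take it from the closed form $\kappa(r,1)$ in \eqref{kappaequation}. Substituting $\beta(r)=2\min(r,a)$ and simplifying with $\lambda_{\drm}=m/(2a)$ makes both the $r<a$ and $r>a$ branches collapse to the single value $2\lambdaPT m r$.

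For the third factor, differentiating under the integral sign gives $C'(1) = \frac{\lambda_{\drm}}{a}\int_{0}^{a}\A_{1}(r,a,x)\,\dv x$, so the remaining task is to integrate the piecewise-linear chord length $\A_{1}(r,a,x)$ of the notation section over $x\in[0,a]$. I would split the range at the breakpoint $x=|r-a|$, integrating the constant $2\min(r,a)$ on the first piece and the linear $r+a-x$ on the second, obtaining $\int_{0}^{a}\A_{1}(r,a,x)\,\dv x = 2ar - r^{2}/2$ and hence $C'(1) = \lambda_{\drm}\big(2r - r^{2}/(2a)\big)$. Adding the three contributions yields the claimed $\densityPTS\pi r^{2} + 2\lambdaPT m r + \lambda_{\drm}\big(2r - r^{2}/(2a)\big)$.

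The step I expect to be the main obstacle is the bookkeeping in this last integral: the position of the breakpoint $|r-a|$ relative to the upper limit $a$ depends on whether $r<2a$, so one must treat the sub-cases $r<a$ and $a\le r\le 2a$ and check that they indeed produce the same closed form (which holds for $r\le 2a$, the regime in which the stated expression is valid). The other mild care point is that both $g(1,r)=0$ and the extraction of $g^{(1)}(1,r)$ rely on correctly evaluating the removable-singularity limits in $s$ rather than naively substituting $s=1$; once those limits are handled, the rest is routine differentiation.
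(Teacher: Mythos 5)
Your proposal is correct and follows essentially the same route as the paper: the paper obtains the mean by differentiating the three-factor PGF of Theorem \ref{theorem2} at $s=1$, which (since each factor equals $1$ there) is exactly your sum $A'(1)+B'(1)+C'(1)$, i.e.\ the sum of the means of the three independent components produced by the extended Slivnyak decomposition. Your side remark that the third term $\lambda_{\drm}\left(2r-r^{2}/(2a)\right)$ is valid only for $r\le 2a$ (saturating at $m=2a\lambda_{\drm}$ beyond that) is a correct refinement that the paper leaves implicit.
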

In this section, we have presented several key properties of PLP-MCP. These properties are PGFL, density, LF under the reduced palm {distribution}, the PGF (and PMF) for the number of points falling inside ball $\bt_{2}(\ob,r)$ both under normal and Palm distributions.  In the next section, we introduce the vehicular communication network providing connectivity to the platooned vehicles and present the distribution for the length of {the} typical and the tagged chord of {the} typical cell, and $0$-cell respectively of a homogeneous $2$D PPP.  These distributions are essential to derive the load distribution on the typical and tagged BS.
\section{Vehicular Communication Network}
The complete vehicular communication network consists of vehicular traffic (as defined in Section \ref{sec:vehicletraffic}) overlaid with the BSs forming a cellular network. 
The role of the cellular BS network is to provide cellular V2I connectivity  to vehicular users. We model the {locations} of BSs as a $2$D PPP $\BSPP\equiv\left\{\y_{i},:\y_{i}\in\R^{2},\, \forall i\in \mathbb{N}\right\}$ with density ${\lambda_{\bs}}$ \cite{SGBook2022}. Each BS transmits with the same power. The user association is based on the maximum average received power from the BSs and 
each user is connected to its nearest BS.  
Fig. \ref{system_model} shows the complete vehicular communication network.

Furthermore, we assume that the BS without any associated user will {stay silent and not create interference}. The active BSs point process $\Phi^{'}_{\bs}$ can be approximated as PPP with the active BS density $p_{\mathrm{on}}{\lambda_{\bs}}$ where $p_\mathrm{on}$ is the \textit{{active probability}} \cite{gupta2015potential}. 
Now, the signal to interference ratio {(SIR)} at the typical vehicle is given by
\begin{align}
	\SIR=\frac{ h_0 R^{-\alpha}}{\sum_{\y\in\Phi^{'}_{b}} h_{\y}\|\y\|^{-\alpha}},\label{eq:SIReq}
\end{align}
where 
$R$ denotes the distance of the nearest BS, $\alpha$ is path loss exponent, $h_{0}$ denotes the fading gain of the typical receiver link and $h_{\y}$ denotes the fading gains of { the} rest of the links. Further, we have assumed that the fading coefficients $h_{(\cdot)}$ are exponentially distributed with unit mean.

Due to the considered association policy, the serving region  of each BS is its Voronoi cell. For a BS located at $\y$, its Voronoi cell {$\Vor_{\y}$} is 
\begin{align}
	\Vor_{\y}=\left\{\x\in\R^{2}:\y={\arg\min}_{\y_i\in \BSPP}\|\x-\y_{i}\|\right\}.
\end{align}
Let $\mathsf{X}_\bs$ denote the union of Voronoi edges.
The  users connected to any BS {constitute}  the load on {that} BS.  
\begin{figure}[ht!]	
		\centering
		{\small \bf(a)} {\includegraphics[width=.44\linewidth]{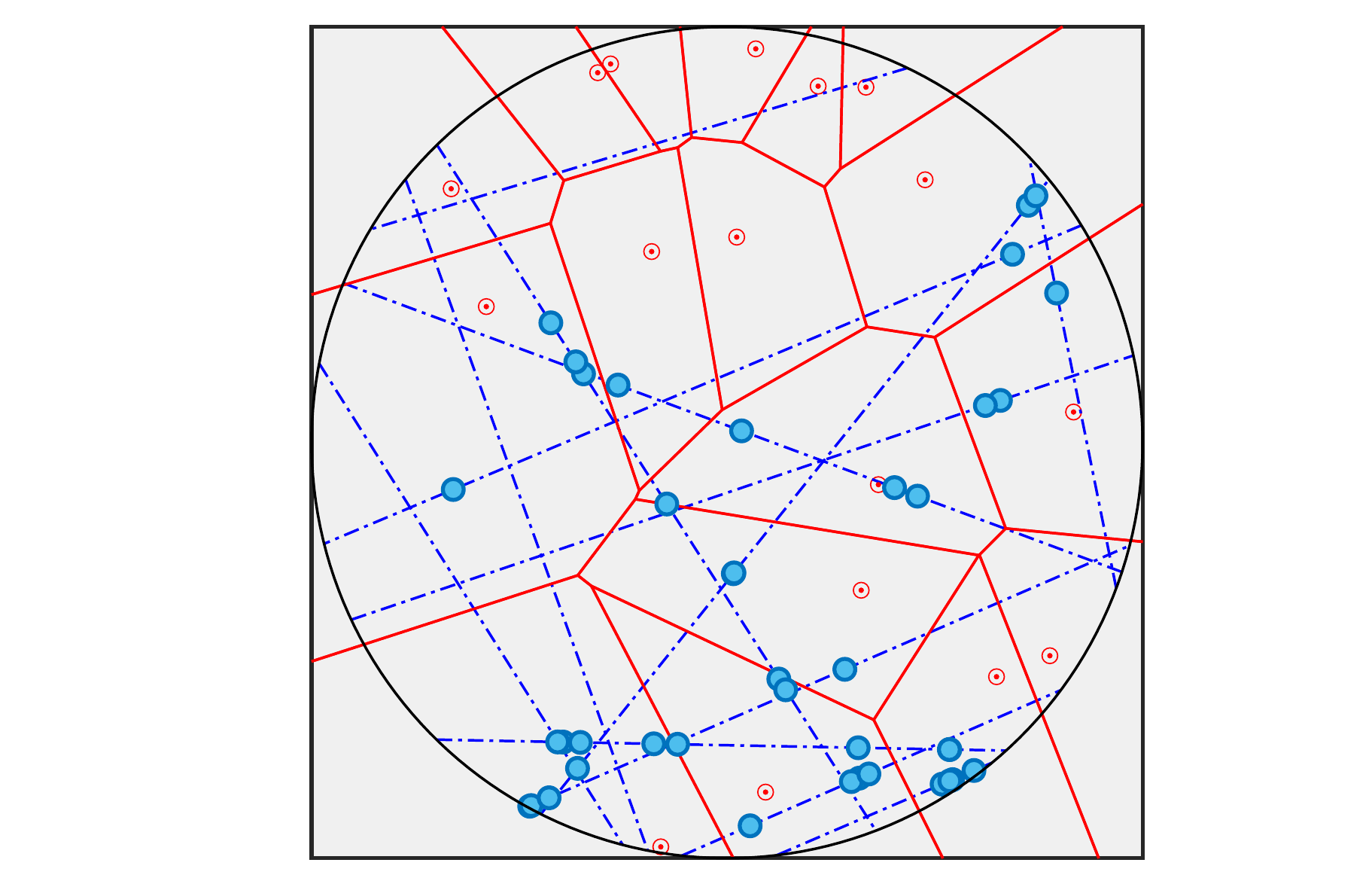} }
		{\small \bf(b)}{\includegraphics[width=.34\linewidth]{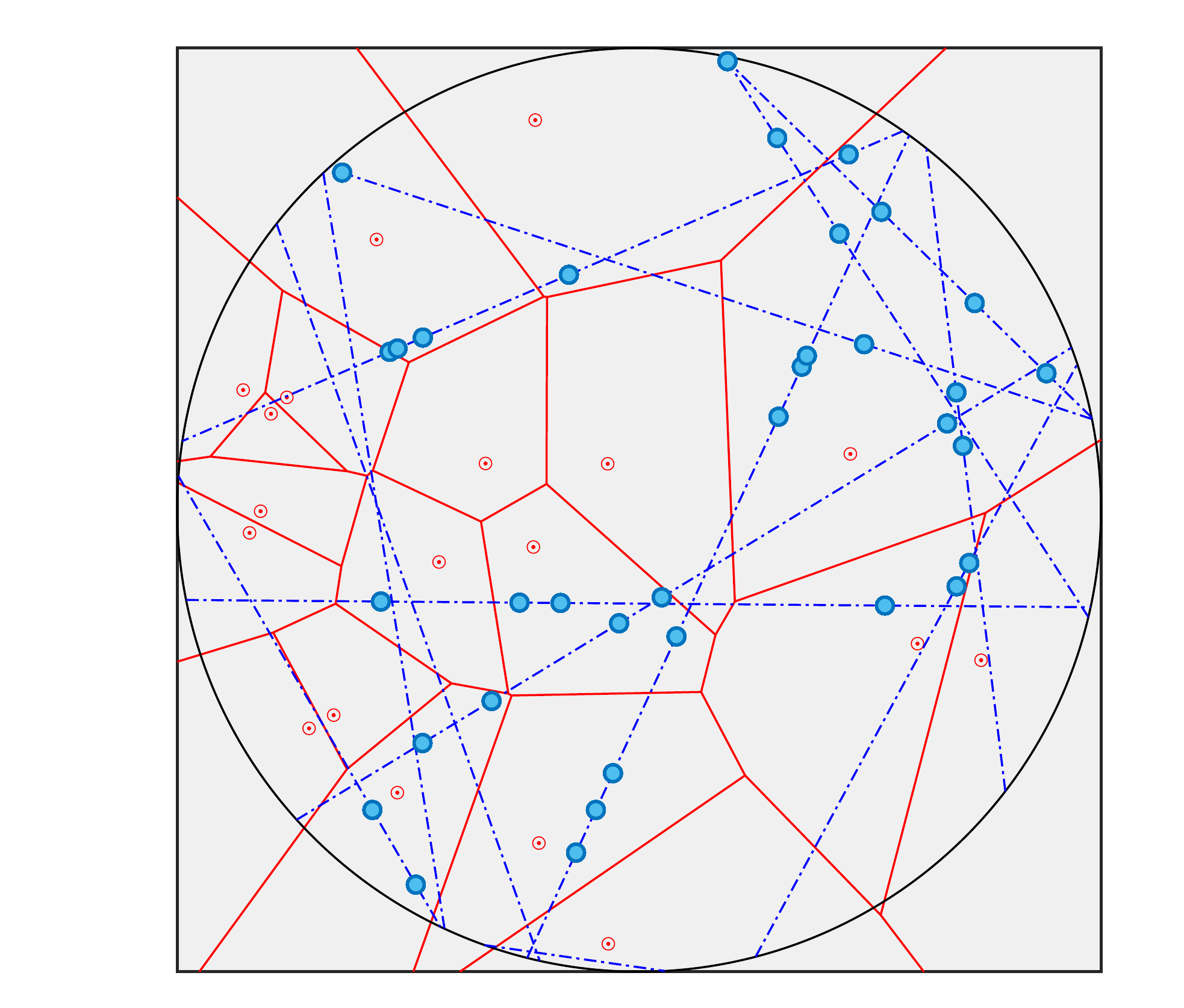} }
	\caption{(a) An illustration showing the complete vehicular network with platooned vehicles. The vehicles (shown as blue circles) on the road (shown as dotted line). Each BS (presented by {red} points) has an associated serving area (bounded by the {red} lines). As shown in the figure, the cluster movement of vehicles may assist them in connecting with the nearby vehicles for data, and content sharing. (b) A vehicular communication network with independently moving vehicles is shown for comparison. }
	\label{system_model}
	\adjustfigspace
\end{figure}

Let the typical Voronoi {cell} be $\Vor_{\typical}$. Its area $|\Vor_{\typical}|$ 
 is empirically distributed as a generalized gamma RV  \cite{tanemura2003statistical} 
	 with parameters 
$a_{1}=1.07950,b_{1}=3.03226$ and $c_{1}=3.31122$ \cite{ferenc2007size}. Hence, its PDF  is 
	\begin{align}\label{voronoi_area_with_lambda_b}
		g_{{|V_{\typical}|}}(v_{\typical})={\lambda_{\bs}}\tilde{g}_{X}\left({\lambda_{\bs}} v_{\typical};a_{1},b_{1},c_{1}\right).
	\end{align}
Similarly, the perimeter of $Z=L(V_{\typical})$ has the empirical  distribution 
\cite{tanemura2003statistical}
\begin{align}\label{perimeter}
	&p_{Z=z}(z)={\sqrt{{\lambda_{\bs}}}}/{4}\ \tilde{g}_{X}\left({\sqrt{{\lambda_{\bs}}}}z/{4};2.33609,2.97006,7.588060\right).
\end{align}



Now, let us consider {the typical vehicle at} the origin. The Voronoi cell   in which the origin falls is known as the {\em $0$-cell} \cite{chiu2013stochastic} \ie 
\begin{align}\label{taggedBS}
	\Vor_\ob&=\left\{
	\x\in \R^{2}:
	\underset{\y_i\in \BSPP}{\arg\min}  \|\x-\y_i \|=\underset{\y_i\in \BSPP}{\arg\min} \|\y_i \|\right\}. 
\end{align}
The BS $\y_\ob$ associated with {the $0$-cell} denotes the BS with which the typical vehicle at the origin connects and is termed the {\em  tagged BS}.

Owing to Ext-Slivnyak theorem (Lemma \ref{thm:slivE}), the load on the tagged BS consists of users of an independent copy of PLP-MCP falling {in} the tagged cell, plus a set of additional users falling on a part of the tagged line inside the tagged cell. As mentioned earlier, the tagged line or the road is the line on which the typical vehicle lies. The part of the tagged line falling inside the tagged cell is termed the {\em tagged chord}  $\mathsf{C}_\ob$ \ie $\mathsf{C}_\ob=\road_\ob\cap\Vor_\ob$. The tagged chord can also be defined as the chord of the tagged cell passing through the origin. {Since the} tagged chord's length $C_{\ob}$ plays an important role in the BS's load distribution, we derive its distribution  $f_{C_{\ob}}(c_{\ob})$  next. {While this specific result exists in \cite{chetlur2020load} within the context of the load distribution in a PLP-PPP, we derive it using a new approach that yields an easy-to-use expression that does not involve higher-order derivatives. We emphasize here that this chord length distribution is not our main contribution but just an important intermediate result that will facilitate further analysis.}
\subsection{Distribution of tagged chord length in  the Voronoi tessellation}
\begin{figure}[ht!]
\centering
{\bf \small (a)} \includegraphics[width=.5\textwidth,trim=0 5 0 80,clip]{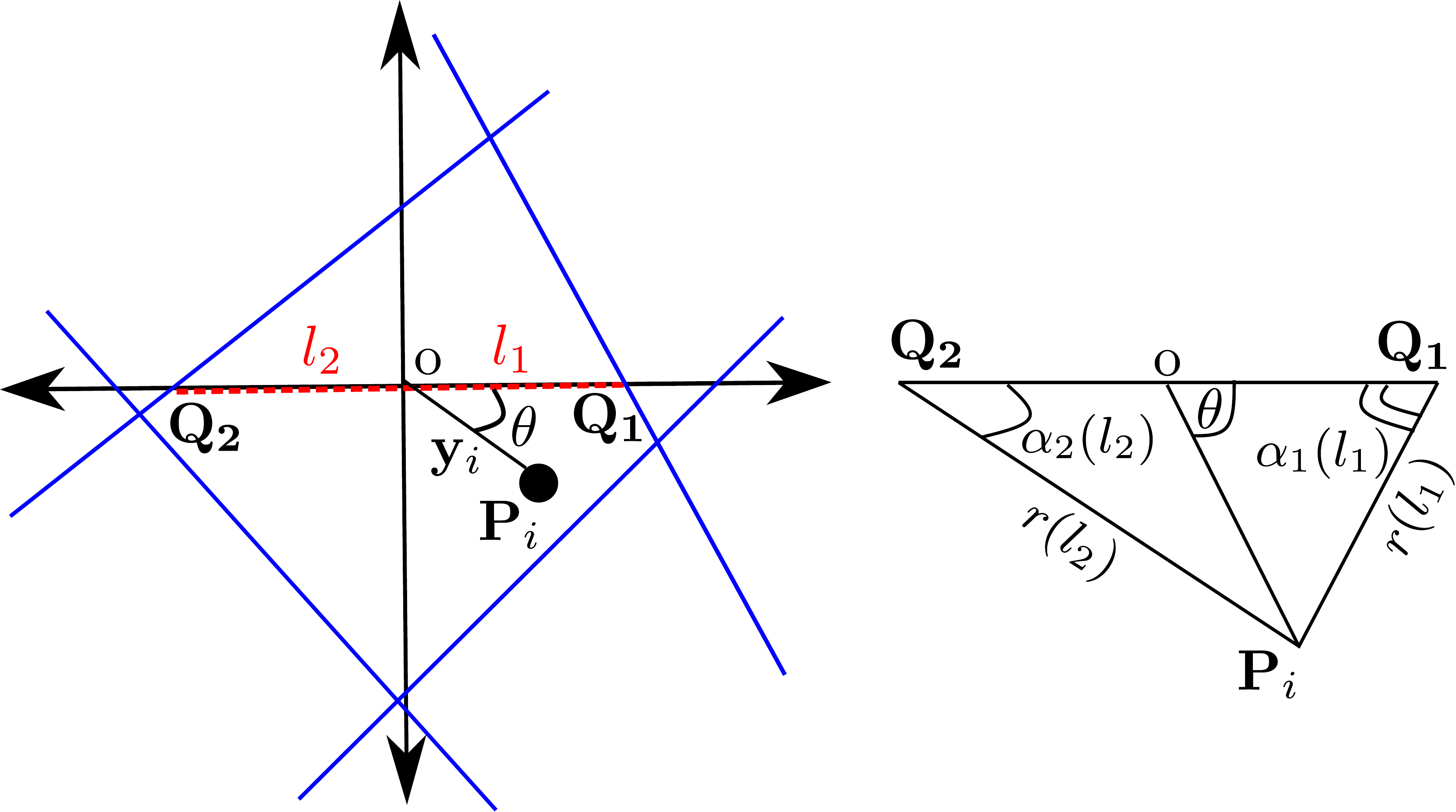}
			\centering
{\bf \small (b) }	\includegraphics[width=.35\textwidth]{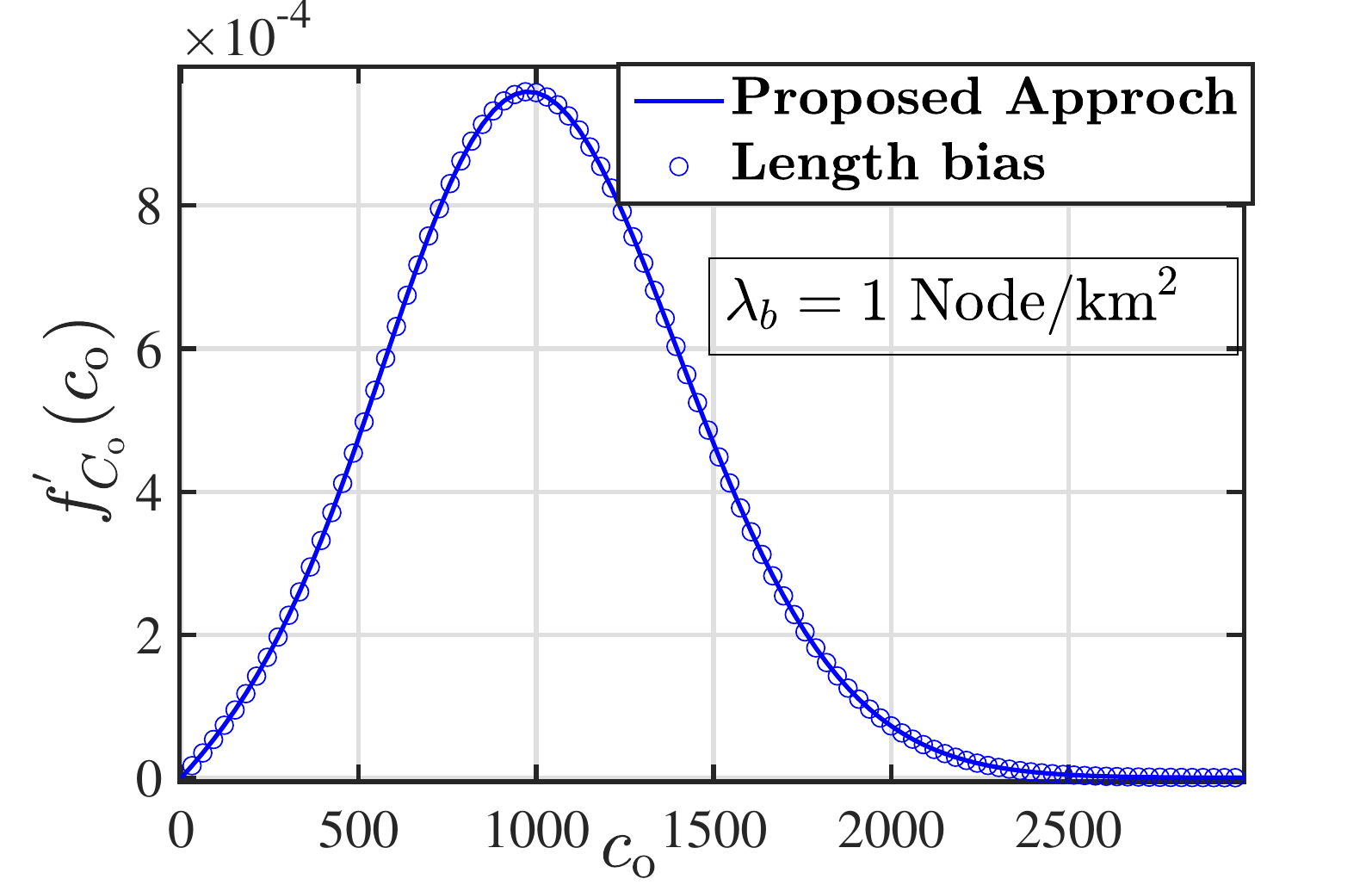}
	\caption{ (a) An illustration showing approach to find the PDF of the tagged chord length $\mathbf{Q}_1\mathbf{Q}_2$. Here, $\mathbf{P}_i$ denotes the BS.  (b)  The PDF obtained using the proposed method, along with the length bias result from \cite{chetlur2020load}.}
	\label{fig:taggedchordcomputation}
	\adjustfigspace
\end{figure}

We adapt an approach presented in \cite{gilbert1962random} to derive  the joint PDF of the length of the residual chords in both sides of the origin. Using the joint PDF, we derive the PDF of the length of the tagged chord.
%
We draw two lines from the origin in two opposite directions, (can be taken as positive and negative $\mathsf{x}$-axis without loss of generality). Further, the  points $\mathbf{Q}_1$ and $\mathbf{Q}_2$ where these two lines intersect $\mathsf{X}_\bs$,  are the two endpoints of tagged chord (as shown in Fig. \ref{fig:taggedchordcomputation}). 
 Let $l_1$ and $l_2$ be the distance of $\mathbf{Q}_1$ and $\mathbf{Q}_2$  from the origin. 
We first require the following result. 
\begin{lemma}\label{lemmad}
The radii $r_1$ and $r_2$ of two circles  $\bt_2((0,l_1),r_1)$ and $\bt_2((0,-l_2),r_2)$   such that they intersect at a point $\y=y\angle\theta$ (see Fig. \ref{fig:taggedchordcomputation}) are given as 
 $	r_1=r(l_{1})=\sqrt{l_{1}^{2}+y^{2}-2l_{1}y\cos\theta},\, r_2 = r(l_{2})=\sqrt{l_{2}^{2}+y^{2}+2l_{2}y\cos\theta}$ with angles  $\alpha_{1}(l_1)$ and $\alpha_{2}(l_2)$ as $
\cos\alpha_{1}(l_1)=\frac{l_{1}-y\cos\theta}{r(l_{1})},\, \cos\alpha_{2}(l_2)=\frac{l_{2}+y\,\cos\theta}{r(l_{2})}.$
The area of the union  
of these two 2D disks 
is given as
\begin{align}
		&\mathcal{V}(l_{1}+l_{2},r(l_{1}),r(l_{2}))=v_1(l_1)+v_2(l_2),\label{unionoftwocircle}	
\end{align}
where $v_i(l_i)=
r^2(l_{i})
\left(\pi-\alpha_{i}(l_i)+0.5\sin 2 \alpha_{1}(l_1)\right)$.
%
%
	Its partial derivative are
	\begin{align*}
	\frac{\partial \mathcal{V}}{\partial l_i}=v^{(1)}_{i}(l_{i}) 
		=2(l_i+y\cos \theta)(\pi-\alpha_{i}(l_i))+2 y \sin \theta.
	\end{align*}
\end{lemma}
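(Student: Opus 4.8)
The plan is to place the chord along a coordinate axis so that the two circle centers $\mathbf{Q}_1$ and $\mathbf{Q}_2$ sit at signed positions $l_1$ and $-l_2$ on that axis, while the common point is $\y=y\angle\theta=(y\cos\theta,\,y\sin\theta)$; write $D_i:=\bt_2(\mathbf{Q}_i,r(l_i))$. First I would read off the radii as the center-to-$\y$ distances $r(l_i)=\|\mathbf{Q}_i-\y\|$: expanding the squared norm and using $\cos^2\theta+\sin^2\theta=1$ collapses the cross terms and leaves $r(l_1)=\sqrt{l_1^2+y^2-2l_1y\cos\theta}$ and $r(l_2)=\sqrt{l_2^2+y^2+2l_2y\cos\theta}$, the sign of the mixed term flipping only because the two centers lie on opposite sides of the origin.

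Next I would exploit that both centers lie on the chord axis, so each circle is symmetric about that axis and the second intersection point is forced to be the reflection $\y'=(y\cos\theta,-y\sin\theta)$ of $\y$. Hence the common chord $\y\y'$ is the segment of the line $x=y\cos\theta$ perpendicular to the axis, the signed distance from $\mathbf{Q}_i$ to it being $l_1-y\cos\theta$ for $i=1$ and $l_2+y\cos\theta$ for $i=2$, and the half-chord $y\sin\theta$. The half-angle $\alpha_i$ subtended by $\y\y'$ at $\mathbf{Q}_i$ then satisfies $\cos\alpha_i=(\text{signed distance})/r(l_i)$, which is exactly $\cos\alpha_1=(l_1-y\cos\theta)/r(l_1)$ and $\cos\alpha_2=(l_2+y\cos\theta)/r(l_2)$, together with the companion identity $r(l_i)\sin\alpha_i=y\sin\theta$.

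For the area I would cut the union by the radical axis $x=y\cos\theta$. Because the two circles meet only at $\y$ and $\y'$, the part of $D_2$ lying to the right of this line is contained in $D_1$, and the part of $D_1$ lying to its left is contained in $D_2$; therefore the right half of $D_1\cup D_2$ is precisely the circular segment of $D_1$ on the center side of the chord, and the left half is the corresponding segment of $D_2$. Evaluating each (major) segment as sector-minus-triangle with half-angle $\alpha_i$ gives $r^2(l_i)(\pi-\alpha_i+\tfrac12\sin2\alpha_i)=v_i(l_i)$, so $\mathcal{V}=v_1(l_1)+v_2(l_2)$. (Here the $0.5\sin2\alpha_1(l_1)$ written in the definition of $v_i$ should be read as $0.5\sin2\alpha_i(l_i)$.)

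Finally, since $v_i$ depends on $l_i$ alone once $y,\theta$ are fixed, the partial derivative reduces to the ordinary derivative $v_i'(l_i)$. Writing $v_i=\pi r^2(l_i)-r^2(l_i)\alpha_i+y\sin\theta\,r(l_i)\cos\alpha_i$ (using $r^2\sin\alpha_i\cos\alpha_i=(r\sin\alpha_i)(r\cos\alpha_i)=y\sin\theta\,r(l_i)\cos\alpha_i$), I would differentiate term by term with $\tfrac{\dd}{\dd l_i}r^2(l_i)=2\,r(l_i)\cos\alpha_i$, with $\tfrac{\dd}{\dd l_i}\big(r(l_i)\cos\alpha_i\big)=1$, and with $r^2(l_i)\tfrac{\dd\alpha_i}{\dd l_i}=-y\sin\theta$ (obtained by implicitly differentiating $r(l_i)\cos\alpha_i=l_i\mp y\cos\theta$). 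The $\alpha_i$-derivative term then cancels against the derivative of the triangle term, leaving the clean uniform form $2\,r(l_i)\cos\alpha_i(\pi-\alpha_i)+2y\sin\theta$, i.e. $2(l_i\mp y\cos\theta)(\pi-\alpha_i)+2y\sin\theta$ (the $i=1$ entry carrying the minus sign on $y\cos\theta$). I expect the main obstacle to be the area bookkeeping, namely rigorously justifying via the two-point-intersection and reflection symmetry that the radical axis splits the union into exactly the two center-side segments; the secondary technical point is verifying that the $\alpha_i$-derivative contributions telescope cleanly in the last step.
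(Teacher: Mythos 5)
Your proof is correct, and it in fact supplies more than the paper does: in the published text Lemma \ref{lemmad} is stated without proof (the original appendix sketch of the derivative computation was cut from the final version), so the geometric claims are treated as elementary and only the end results are used downstream. Where a comparison can be made, your route differs in two mild but worthwhile ways. For the union area, the implicit argument in the paper is inclusion--exclusion (two full disks minus the lens, the lens being a pair of circular segments), whereas you cut the union along the radical axis into two \emph{center-side} segments; the decompositions are equivalent, but your description needs care in the configuration $y\cos\theta>l_1$ (possible for small $\theta$ and large $y$), where both centers lie on the same side of the common chord. There the containments $D_2\cap\{x>y\cos\theta\}\subseteq D_1$ and $D_1\cap\{x<y\cos\theta\}\subseteq D_2$ still hold (the rightmost and leftmost points of a circle through $(y\cos\theta,\pm y\sin\theta)$ are monotone in the center position along the axis), and the signed-angle formula $v_i=r^2(l_i)\left(\pi-\alpha_i+\tfrac12\sin 2\alpha_i\right)$ with $\alpha_i\in[0,\pi]$ remains valid, but the segment of $D_1$ is then not on its center side; inclusion--exclusion with signed angles avoids this case split entirely, which is the one genuine gap to patch in your bookkeeping. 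For the derivative, the paper's omitted sketch differentiates $v_1$ directly and then cleans up with double-angle identities, while your pre-substitution $r^2\sin\alpha_i\cos\alpha_i=y\sin\theta\,(l_i\mp y\cos\theta)$ together with $r^2(l_i)\,\dd\alpha_i/\dd l_i=-y\sin\theta$ makes the cancellation transparent --- the same computation with cleaner organization. Finally, you correctly identified the two typos in the statement: $\sin 2\alpha_1(l_1)$ should read $\sin 2\alpha_i(l_i)$, and the $i=1$ derivative must carry $l_1-y\cos\theta$ (your $\mp$ convention), which is what the paper's downstream use in Theorem \ref{taggedchord} actually requires.
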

%

Using the above result, we now derive the distribution of the tagged chord length which is given in the following theorem. See Appendix \ref{prooftaggechord} for the proof.
	\begin{theorem}\label{taggedchord}
		The joint PDF of the length of the two chord segments in the Voronoi tesselation is 
		\begin{align}\label{eq:taggedchordjointpdf}
			&{{f}_{L_1,L_2}(l_1,l_2)}=8{\lambda_{\bs}}^{3}\int_{0}^{\pi}\int_{0}^{\infty}e^{-{\lambda_{\bs}}\mathcal{V}(l_{1}+l_{2},r(l_{1}),r(l_{2}))}v^{(1)}_{1}(l_1)v^{(1)}_{2}(l_2)y \dv y \dv \theta,
		\end{align}
where $\mathcal{V}(l_{1}+l_{2},r(l_{1}),r(l_{2}))$ is given in \eqref{unionoftwocircle}.
The PDF of the length of the tagged chord  in the Voronoi tesselation is 
\begin{align}\label{tagged_chord}
	{f_{C_{\ob}}(c_\ob)}=\int_{0}^{c_{\ob}}{f_{L_1,L_2}}(c_{\ob},c_{\ob}-l_2)\dv l_2.
\end{align}
	\end{theorem}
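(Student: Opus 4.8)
The plan is to adapt Gilbert's empty-disk argument \cite{gilbert1962random} to the $0$-cell, treating the two half-chords as being jointly determined by three base stations: the $0$-cell nucleus $\y_\ob=y\angle\theta$ and the two neighbors $\y_1,\y_2$ whose perpendicular bisectors with $\y_\ob$ produce the Voronoi edges crossed by the $x$-axis at $\mathbf{Q}_1=(l_1,0)$ and $\mathbf{Q}_2=(-l_2,0)$. By stationarity and isotropy of $\BSPP$ I place the chord on the $x$-axis and condition on the origin. The first step is the geometric bookkeeping of Lemma \ref{lemmad}: since $\mathbf{Q}_1$ is equidistant from $\y_\ob$ and $\y_1$, the neighbor $\y_1$ lies on the circle $\partial\bt_2(\mathbf{Q}_1,r(l_1))$, and likewise $\y_2\in\partial\bt_2(\mathbf{Q}_2,r(l_2))$, while $\y_\ob$ lies on both circles, so the radii are exactly $r(l_1),r(l_2)$.

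The crux is to show that the whole event is controlled by a single void region. I would prove that the segment $[\mathbf{Q}_2,\mathbf{Q}_1]$ lies inside the $0$-cell of $\y_\ob$ if and only if no base station lies interior to the union $\bt_2(\mathbf{Q}_1,r(l_1))\cup\bt_2(\mathbf{Q}_2,r(l_2))$. Necessity is immediate, since a base station strictly inside $\bt_2(\mathbf{Q}_i,r(l_i))$ would be closer to $\mathbf{Q}_i$ than $\y_\ob$. For sufficiency I use a pencil-of-disks fact: every disk $\bt_2(p,\|p-\y_\ob\|)$ with center $p$ on the segment passes through the common point $\y_\ob$ and has its center on the $x$-axis, and the union of such a colinear pencil over $p\in[\mathbf{Q}_2,\mathbf{Q}_1]$ collapses to the union of its two extreme members $\bt_2(\mathbf{Q}_1,r(l_1))\cup\bt_2(\mathbf{Q}_2,r(l_2))$. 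Emptiness of this union therefore guarantees that $\y_\ob$ is the nearest base station to every point of the segment (in particular to the origin, so $\y_\ob$ is automatically the $0$-cell nucleus), and the void probability of $\BSPP$ over this region is $\exp(-\lambda_{\bs}\mathcal{V}(l_1+l_2,r(l_1),r(l_2)))$ with $\mathcal{V}$ taken from \eqref{unionoftwocircle}.

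Next I convert the two endpoint constraints into hazard-rate (incremental-area) form. The right exit falls in $[l_1,l_1+\dd l_1]$ exactly when a base station $\y_1$ lands in the thin region added to the union as $l_1$ grows; this region has area $(\partial\mathcal{V}/\partial l_1)\,\dd l_1=v_1^{(1)}(l_1)\,\dd l_1$, so its probability is $\lambda_{\bs}v_1^{(1)}(l_1)\,\dd l_1$, and symmetrically $\lambda_{\bs}v_2^{(1)}(l_2)\,\dd l_2$ on the left, using the derivatives $v_i^{(1)}$ from Lemma \ref{lemmad}. I then assemble everything through the multivariate Mecke formula: integrating over the nucleus position $\y_\ob=y\angle\theta$ contributes $\lambda_{\bs}\,y\,\dd y\,\dd\theta$, and multiplying by the void probability and the two incremental-area probabilities produces the integrand $\lambda_{\bs}^3 e^{-\lambda_{\bs}\mathcal{V}}v_1^{(1)}(l_1)v_2^{(1)}(l_2)\,y$. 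Collecting the reflection symmetries of the configuration supplies the overall constant and reduces the angular range to $[0,\pi]$, giving \eqref{eq:taggedchordjointpdf}. Since the tagged chord length is the sum $C_\ob=L_1+L_2$, its density \eqref{tagged_chord} then follows by integrating the joint density along the line $l_1+l_2=c_\ob$.

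The step I expect to be the main obstacle is the sufficiency half of the void characterization, namely establishing rigorously that the swept union of the colinear pencil of equidistant disks collapses to the two extreme disks, so that ``the entire segment lies in the cell'' reduces to emptiness of one region whose area is already evaluated in Lemma \ref{lemmad}. Once this geometric reduction is secured, the hazard-rate bookkeeping and the Mecke integration are routine, the only delicate point being the correct symmetry/combinatorial factor in front of $\lambda_{\bs}^3$.
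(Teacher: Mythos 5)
Your proposal is correct and takes essentially the same route as the paper's proof: the same reduction of the event to emptiness of the union of the two disks through the nucleus (the pencil-collapse fact you flag as the crux is exactly what the paper uses implicitly when it discards the ball $\bt_{2}(\ob,y_{i})$ from the event $A_i$, and your convexity/linearity justification of it is sound), the same geometry from Lemma \ref{lemmad}, and the same Campbell--Mecke integration over the nucleus in polar coordinates. The only difference is in the final step: the paper computes the joint CCDF $\overline{F}_{L_1,L_2}(l_1,l_2)$ and takes the mixed partial $\partial^{2}/\partial l_{1}\partial l_{2}$, which---because $\mathcal{V}=v_{1}(l_{1})+v_{2}(l_{2})$ separates---produces exactly the factors $\lambda_{\bs}v_{1}^{(1)}(l_{1})\,\lambda_{\bs}v_{2}^{(1)}(l_{2})$ that you obtain by the incremental-sliver (hazard-rate) argument, so the two derivations coincide.
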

	{For completeness, note that we derived an expression for the tagged chord length in \cite{chetlur2020load} using a length-biased sampling argument that provided the following expression,
	\begin{align}\label{taggedchordchteul}
		{	f_{C_{\ob}}(c_{\ob})}&=\frac{{c_{\ob}f_{C}(c_{\ob})}}{\mathbb{E}[C]}=\frac{4\sqrt{{\lambda_{\bs}}}}{\pi}c_{\ob}{f_{C}(c_{\ob})}
		\\
		&=\frac{4\sqrt{{\lambda_{\bs}}}}{\pi}c_{\ob} \frac{\pi}{2} {\lambda_{\bs}}^{\frac{3}{2}} \int_0^\pi \int_0^\infty\left[{\lambda_{\bs}} \left({ \mathcal{V}^{(1)}(c,y,r(c)) } \right)^2 -{ \mathcal{V}^{(2)}(c,y,r(c)) } \right] e^{- {\lambda_{\bs}} \mathcal{V}(c,y,r(c)) } y \dv y \dv \theta,\nonumber
	\end{align}
 which involved higher-order partial derivatives. The expression given above in Theorem \ref{taggedchord} is slightly simpler in that sense. Further, since this specific proof idea involving the joint distribution of two chords segments has not appeared in the literature, we decided to include it here.} Another advantage of the proposed approach is that it can also be extended to the case where the BS locations are distributed as a non-homogeneous PPP.
	
Equipped with the expressions of PDF of the tagged chord length and the number of vehicles in a set, we now analyze the vehicular communication networks in terms of the load per BS.

%

\section{Load Distribution in a Platooned Vehicular Communication Network}
In this section, we present the per-BS load distribution. The per-BS load in a communication system refers to {the} number of vehicles served by the  BS which is defined as the number of vehicles falling inside its Voronoi region. 
The distribution of per-BS load is an important performance metric as it critically affects the distribution of SINR, per-user available resources and finally the rate  in the following way. If a particular BS does not have any user associated with it, it may stay silent which reduces interference to the users of other BSs, and improves their SINR distribution.
Conversely, as the  time-frequency resources are split across the users associated with the serving BS, the  load on the tagged BS reduces the resources available to the typical user. 
As the rate distribution depends on the per-user resources and the SINR distribution,  the load on both the typical and the tagged BS plays a key role in the system's performance.  Hence, we will focus on distribution of the following important metrics:
\begin{enumerate}
\item $S_{\mrm}=\vehPP(\Vor_\typical)$: Load on the typical BS.
\item $M_{\mrm}=\vehPP(\Vor_{0})$: Load on the tagged BS.
\end{enumerate}
 The load distribution may help us decide the size of platoon and/or the number of vehicles in a platoon to improve performance. It may also provide us insights into the load distribution across the BSs that may help in optimizing the resource allocation, bandwidth sharing, and BS association. {This is especially important in the case of PTS that may exhibit larger disparity in the per-BS load, especially for smaller values of $a$. Since vehicles in a platoon drive in close proximity of each other, it is highly likely that vehicles in a given platoon are served by the same BS. This may lead to situations in which one BS serves multiple platoons and hence a large number of vehicles, whereas another BS does not serve any platoon and hence no vehicle.}
Therefore, it is crucial to understand the nature of load distribution on BSs. As mentioned already, we will assume that the BS remains silent (and hence does not create interference) if its load is zero.

 We will  look at an approximation ($\widetilde{S}_{\mrm}$ and  $\widetilde{M}_{\mrm}$ respectively) of these variable. To approximate the load in a Voronoi cell of  area  {$|V_{\typical}|$}, 
we will replace the cell with a 2D ball  of equal area, \ie the radius of this ball is $\sqrt{{|V_{\typical}|}/\pi}$ and instead compute the load in this ball. The PDF of the radius corresponding to the typical and tagged cell is respectively given as
\begin{align}\label{frt}
	&f_{R_\typical}(r_{\typical})=2 \pi r_{\typical} g_{\left|\Vor_{\typical}
		\right|}(\pi r_{\typical}^2).\\
\label{pdfr0}
	&f^{\ob}_{R_{\ob}}(r_{\ob})=2\pi r_{\ob}g_{|{\Vor}_\ob|}(\pi r_{\ob}^2)=2 \pi r_{\ob} \lambda_\bs \pi r_\ob^2\  g_{\left|\Vor_{\typical}
		\right|}(\pi r_{\ob}^2).
\end{align}
\subsection{Load distribution on the typical BS}	
	\begin{theorem}\label{theorem44}
		The PGF of the load $S_\mrm$ on typical Voronoi $V_{\typical}$ is (see Appendix \ref{prooftheorem44} for proof)
		\begin{align}\label{exact}
			\mathcal{P}_{S_{\mrm}}(s)&=\frac{\sqrt{{\lambda_{\bs}}}}{4}\int_{z=0}^{\infty}\exp\left(-{\densityRoads}z\left(1-\int_{0}^{\infty}\expS{g\left(s,.5{c}\right)}f_{C}(c)\dv c\right)\right) g_{|V_{\typical}|}\left(\frac{\sqrt{{\lambda_{\bs}}}}{4}z\right)\dv z,
		\end{align}
		where $g(s,\cdot)$ is given in \eqref{g}. The PMF of $S_{\mrm}$ is
		\begin{align*}
			\mathbb{P}\left[S^{}_{\mrm}=k\right]&=\frac1{k!}\left[{\mathcal{P}^{(k)}_{S^{}_{\mrm}}(0)}\right]=\frac{\sqrt{{\lambda_{\bs}}}}{4}\frac{1}{k!}\int_{z=0}^{\infty}	\mathcal{P}^{(k)}_{S^{}_{\mrm}\vert Z=z}(0)g_{|V_{\typical}|}\left(\frac{\sqrt{{\lambda_{\bs}}}}{4}z\right)\dv z,\\
			&=\frac{\sqrt{{\lambda_{\bs}}}}{4}\frac1{k!}\int_{0}^{\infty}{\mathcal{P}_{S_{\mrm}^{}|Z=z}(0)}\bellf{}\left(g_{\mrm}^{(1)}(0),\ldots,g_{\mrm}^{(k)}(0)\right) g_{|V_{\typical}|}\left({\sqrt{{\lambda_{\bs}}}}/{4}z\right)\dv z,\\
		\text{where }\hspace{.4in} \mathcal{P}_{S_{\mrm}^{}|{Z=z}}(0)&=	\exp\left(-{\densityRoads}z\left(1-\int_{0}^{\infty}\expS{g\left(0,{c}/{2}\right)}f_{C}(c)\dv c\right)\right)\nonumber\\
			g_{\mrm}^{(k)}(0)&={\densityRoads}z\int_{0}^{\infty}\exp\left(g\left(0,{c}/{2}\right)\right) \bellf{}\left(g^{(1)}\left(0,{c}/{2}\right),\ldots,g^{(k)}\left(0,{c}/{2}\right)\right)f_{C}(c)\dv c, 
		\end{align*}
	where  $g^{}(s,\cdot)$ is given in \eqref{g}.
		Further, $g^{(k)}(s,\cdot)$ provided in \eqref{g(0,rho)} can be evaluated at $s=0$.
	\end{theorem}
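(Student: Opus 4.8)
\emph{Proof plan.} The load $S_\mrm=\vehPP(\Vor_\typical)$ counts the PLP-MCP vehicles falling in the typical cell, and since every vehicle lies on some road of the PLP $\roadPP$, the natural decomposition is to split the load into the contributions of the individual roads that cross $\Vor_\typical$: each such road meets the cell in a chord, and the vehicles it contributes are exactly the $1$D MCP points lying on that chord. First I would condition on the typical cell and, following the equal-area-disk approximation, summarize its geometry by its perimeter $Z$ together with the law $f_C$ of its chords. Because the MCPs on distinct roads are independent (by the PLP-MCP construction) and independent of the road layout, the conditional PGF factorizes across the crossing roads.

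Next I would assemble the conditional PGF as a compound-Poisson transform. For a convex cell of perimeter $z$, the number of PLP lines crossing it is Poisson with mean $\densityRoads z$ (the hitting density recorded in the road-network model). Since the $1$D MCP is stationary, the number of its points on any sub-interval of a road depends only on the interval's length; hence a crossing line cutting a chord of length $c$ carries a vehicle count with PGF $\exp(g(s,c/2))$, where $c/2$ is the half-chord and $g$ is as in Lemma~\ref{lem:PGF_N_MCP} (the same $t=\sqrt{r^2-\rho^2}=c/2$ identity used in Theorem~\ref{theorem1}). Averaging over the chord length gives the per-line transform $\varphi(s)=\int_0^\infty \exp(g(s,c/2))\,f_C(c)\,\dv c$, and compounding an independent $\mathrm{Poisson}(\densityRoads z)$ number of such i.i.d.\ contributions yields
\[
\mathcal{P}_{S_\mrm\mid Z=z}(s)=\exp\!\left(-\densityRoads z\,\bigl(1-\varphi(s)\bigr)\right),
\]
which is exactly the conditional factor inside the stated integral. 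As a consistency check, $g(1,\cdot)=0$ forces $\varphi(1)=1$, so $\mathcal{P}_{S_\mrm\mid Z=z}(1)=1$.

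Finally I would decondition over the cell's (generalized-gamma) perimeter density to obtain the claimed integral for $\mathcal{P}_{S_\mrm}(s)$. For the PMF I would invoke $\mathbb{P}[S_\mrm=k]=\tfrac{1}{k!}\mathcal{P}^{(k)}_{S_\mrm}(0)$ from \eqref{eq:pgfpmfrelation}. Since the conditional PGF is the exponential of $-\densityRoads z(1-\varphi(s))$, Fa\`a di Bruno's formula turns its $k$-th $s$-derivative into the Bell polynomial $\bellf$ evaluated at the derivatives $g_\mrm^{(k)}(0)$ of the exponent; those derivatives are themselves Bell polynomials in the quantities $g^{(k)}(0,c/2)$ from \eqref{g(0,rho)}, integrated against $f_C$. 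Interchanging differentiation with the outer perimeter integral then reproduces the stated PMF.

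The step I expect to be the crux is this compound-Poisson decoupling: justifying that the crossing lines may be treated as a $\mathrm{Poisson}(\densityRoads z)$ family whose chords are i.i.d.\ with law $f_C$, independent both of the count and across lines. The cross-line independence is exact (the per-road MCPs are independent in the PLP-MCP), and the Poisson count for a fixed convex cell is exact; the approximation lives entirely in the equal-area-disk step that supplies $f_C$ and lets one reuse the half-chord identity with Lemma~\ref{lem:PGF_N_MCP}. Once the conditional PGF is fixed, the deconditioning over the perimeter and the Fa\`a di Bruno differentiation for the PMF are routine bookkeeping.
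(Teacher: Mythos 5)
Your proposal is correct and follows essentially the same route as the paper's proof: a compound-Poisson argument with a $\mathrm{Poisson}(\densityRoads z)$ number of crossing lines, per-chord PGF $\exp\left(g(s,c/2)\right)$ averaged against $f_C$, the resulting conditional PGF $\exp\left(-\densityRoads z(1-\varphi(s))\right)$, deconditioning over the perimeter law, and a double application of Fa\`a di Bruno's formula for the PMF. One clarification: $f_C$ here is the typical chord-length distribution of the Poisson Voronoi tessellation, not a byproduct of the equal-area-disk device (that device only enters the later approximation $\widetilde{S}_{\mrm}$ of Theorem \ref{theorem3}); the implicit approximation in this theorem is instead that the chords cut by the lines hitting the typical cell are treated as i.i.d.\ with law $f_C$, independently of the cell's perimeter and of each other.
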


	Note that the mean load is equal to vehicular density times the mean size of the typical cell. Since the mean area of typical cell is $1/\lambda_\bs$, we get
	$$\expect{S_\mrm}=\densityPTS/\lambda_\bs. $$
%
%
%
	We can approximate $S_\mrm$ by $\widetilde{S}_\mrm=\Psi_{\mrm}(\bt_{2}(\ob,R_{\typical}))$. Note that conditioned on $R_{\typical}$, $\mathcal{P}_{\widetilde{S}_{\mrm}(R_\typical)|R_{\typical}=r_\typical}(s)=\mathcal{P}_{S(r_{\typical})}(s)$. Deconditioning  using the distribution of $f_{R_{\typical}}(r_{\typical})$, we get the following result.

\begin{theorem}\label{theorem3}
	The approximate PGF and PMF of the typical BS load are
	\begin{align} 
		\mathcal{P}_{\widetilde{S}_{\mrm}}(s)&=\int_{r_{\typical}=0}^{\infty}\mathcal{P}_{S(r_{\typical})}(s)\,f_{R_{\typical}}(r_{\typical})\dv r_{\typical}=2 \pi \int_{r_{\typical}=0}^{\infty}\mathcal{P}_{S(r_{\typical})}(s)\,r_{\typical} g_{{|V_{\typical}|}}(\pi r_{\typical}^2)\dv r_{\typical}.\\
		\mathbb{P}[\widetilde{S}_{\mrm}=k]&
		=2 \pi\int_{r_{\typical}=0}^{\infty}
 \prob{S(r_\typical)=k} r_{\typical} g_{{|V_{\typical}|}}(\pi r_{\typical}^2)\dv r_{\typical},
		\end{align}	
	where $\mathcal{P}_{S(\cdot)}(\cdot)$, and  $\prob{S(r_\typical)=k}$ are given in Theorem \ref{theorem1} and \eqref{PSk}, respectively. The PDF $g_{|V_{\typical}|}(\cdot)$ is given in \eqref{voronoi_area_with_lambda_b}.	
\end{theorem}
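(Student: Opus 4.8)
The plan is to establish both identities by a straightforward deconditioning (tower property / law of total probability) argument, leveraging the independence of the vehicular process $\Psi_\mrm$ from the base-station PPP $\BSPP$. First I would recall the definition $\widetilde{S}_\mrm = \Psi_\mrm(\bt_2(\ob,R_\typical))$, i.e., the count of vehicles inside a ball of radius $R_\typical$ centred at the origin, where the random radius $R_\typical = \sqrt{|V_\typical|/\pi}$ is the radius of the area-matched disk that replaces the typical Voronoi cell. Since the cell area $|V_\typical|$ is a functional of the BS PPP alone and, by construction, $\BSPP$ is independent of $\Psi_\mrm$, the radius $R_\typical$ is independent of the vehicular process and has density $f_{R_\typical}$ given in \eqref{frt}.

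The key step is to condition on $R_\typical = r_\typical$. Given this conditioning, the independence just noted—together with the stationarity of $\Psi_\mrm$ established in the properties section, which makes the location of the ball's centre irrelevant—implies that the conditional law of $\widetilde{S}_\mrm$ coincides with the unconditional law of the count $S(r_\typical) = \Psi_\mrm(\bt_2(\ob,r_\typical))$ characterised in Theorem~\ref{theorem1}. This is precisely the conditional PGF identity $\mathcal{P}_{\widetilde{S}_\mrm \mid R_\typical = r_\typical}(s) = \mathcal{P}_{S(r_\typical)}(s)$ recorded in the text preceding the statement, and at the level of the PMF it reads $\mathbb{P}[\widetilde{S}_\mrm = k \mid R_\typical = r_\typical] = \mathbb{P}[S(r_\typical) = k]$.

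I would then decondition over $R_\typical$. By the tower property, $\mathcal{P}_{\widetilde{S}_\mrm}(s) = \mathbb{E}[s^{\widetilde{S}_\mrm}] = \mathbb{E}_{R_\typical}[\mathcal{P}_{S(R_\typical)}(s)] = \int_0^\infty \mathcal{P}_{S(r_\typical)}(s)\, f_{R_\typical}(r_\typical)\,\dv r_\typical$, and analogously $\mathbb{P}[\widetilde{S}_\mrm = k] = \int_0^\infty \mathbb{P}[S(r_\typical) = k]\, f_{R_\typical}(r_\typical)\,\dv r_\typical$. Substituting $f_{R_\typical}(r_\typical) = 2\pi r_\typical\, g_{|V_\typical|}(\pi r_\typical^2)$ from \eqref{frt}—which itself follows from the change of variables $|V_\typical| = \pi R_\typical^2$ applied to the cell-area law \eqref{voronoi_area_with_lambda_b}—yields the second form in each display. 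Finally the PGF $\mathcal{P}_{S(r_\typical)}(s)$ is read off from Theorem~\ref{theorem1} and the PMF $\mathbb{P}[S(r_\typical) = k]$ from \eqref{PSk}, completing the identification.

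There is no substantive obstacle beyond the approximation step itself: once the Voronoi cell is replaced by the area-matched ball, what remains is an exact deconditioning. The single point deserving explicit care is the justification that conditioning on $R_\typical$ leaves the vehicular count law unchanged; this is exactly where the independence of $\BSPP$ and $\Psi_\mrm$ together with the stationarity of $\Psi_\mrm$ are indispensable, so I would state that step explicitly rather than treat it as self-evident.
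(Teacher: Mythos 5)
Your proposal is correct and follows essentially the same route as the paper: the paper's argument is precisely the deconditioning stated in the text before the theorem, namely that conditioned on $R_{\typical}=r_\typical$ the load $\widetilde{S}_\mrm$ has the PGF $\mathcal{P}_{S(r_\typical)}(s)$ of Theorem~\ref{theorem1}, followed by integration against $f_{R_\typical}$ from \eqref{frt}. Your only addition is to make explicit the independence of $\BSPP$ and $\Psi_\mrm$ and the stationarity of $\Psi_\mrm$ underlying the conditional identity, which the paper leaves implicit.
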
 
\begin{corollary}\label{meanloadsm}
	The mean  of $\widetilde{S}_{\mrm}$ is
	$\mathbb{E}[\widetilde{S}_{\mrm}]=\densityPTS\pi \mathbb{E}\left[r_{\typical}^{2}\right]={\densityPTS}/{{\lambda_{\bs}}},$
	where $\mathbb{E}\left[r_{\typical}^2\right]=\frac{1}{\pi {\lambda_{\bs}}}$. Similarly, to find the variance of $\widetilde{S}_{\mrm}$, we need the second derivative of PGF conditioned on $s=1$ which is given as 
{	\[\lim_{s\rightarrow1}\mathcal{P}_{\widetilde{S}_{\mrm}}^{(2)}(s)=\int_{r=0}^{\infty}\left((F^{1}_{\mrm}(r))^{2}+F^{2}_{\mrm}(r)\right)f_{R_{\typical}}(r)\dv r\]
	\[F^{1}_{\mrm}(r)=2\pi{\densityRoads}\int_{t=0}^{r}\frac{\kappa(t,1)t \dv t}{\sqrt{r^{2}-t^{2}}},\,F^{2}_{\mrm}(r)=2\pi{\densityRoads}\int_{t=0}^{r}\frac{\left(\kappa^{2}(t,1)+\kappa(t,2) \right)t \dv t}{\sqrt{r^{2}-t^{2}}},\]}
	where $\kappa(t,k)$ is given in \eqref{kappaequation}. Using the second derivative, mean and variance equation present in \eqref{eq:variance}, we get the variance of $\widetilde{S}_{\mrm}$.
\end{corollary}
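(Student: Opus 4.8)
The plan is to obtain both the mean and the variance by differentiating the de-conditioned PGF of Theorem \ref{theorem3} in $s$ and evaluating at $s=1$, pulling the $s$-derivatives inside the outer integral over $r_\typical$. For the mean, I would differentiate $\mathcal{P}_{\widetilde{S}_{\mrm}}(s)=\int_{0}^{\infty}\mathcal{P}_{S(r_\typical)}(s)\,f_{R_\typical}(r_\typical)\,\dv r_\typical$ once and set $s=1$, so that $\mathbb{E}[\widetilde{S}_{\mrm}]=\int_{0}^{\infty}\mathbb{E}[S(r_\typical)]\,f_{R_\typical}(r_\typical)\,\dv r_\typical$. Substituting $\mathbb{E}[S(r)]=\densityPTS\pi r^{2}$ from \eqref{expecteds} reduces this to $\densityPTS\pi\,\mathbb{E}[r_\typical^{2}]$. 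I would then compute $\mathbb{E}[r_\typical^{2}]$ via the change of variable $v=\pi r_\typical^{2}$ in $f_{R_\typical}(r)=2\pi r\,g_{|V_\typical|}(\pi r^{2})$, which turns the integral into $\pi^{-1}\mathbb{E}[|V_\typical|]$; using the known mean typical-cell area $\mathbb{E}[|V_\typical|]=1/\lambda_{\bs}$ from \eqref{voronoi_area_with_lambda_b} gives $\mathbb{E}[r_\typical^{2}]=1/(\pi\lambda_{\bs})$ and hence $\mathbb{E}[\widetilde{S}_{\mrm}]=\densityPTS/\lambda_{\bs}$.

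For the second derivative, I would write $\mathcal{P}_{S(r)}(s)=e^{h_r(s)}$ with $h_r(s)=-2\pi\densityRoads\left(r-\int_{0}^{r}\exp(g(s,t))\,t\,(r^{2}-t^{2})^{-1/2}\,\dv t\right)$ from \eqref{PGFS}, so that $\mathcal{P}^{(2)}_{S(r)}(s)=\left(h_r''(s)+(h_r'(s))^{2}\right)e^{h_r(s)}$. The key observation is that $g(1,t)=0$ for every $t$: from \eqref{g} the bracket equals $|t-a|-(t+a)+\beta(t)$ at $s=1$, and with $\beta(t)=2\min(t,a)$ this vanishes for both $t<a$ and $t>a$. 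Consequently $e^{g(1,t)}=1$, and since $\int_{0}^{r}t\,(r^{2}-t^{2})^{-1/2}\,\dv t=r$ we get $h_r(1)=0$ and $\mathcal{P}_{S(r)}(1)=1$. Differentiating $h_r$ under the $t$-integral and replacing $g^{(k)}(1,t)$ by $\kappa(t,k)$ from \eqref{kappaequation} yields $h_r'(1)=F^1_\mrm(r)$ and $h_r''(1)=F^2_\mrm(r)$, so $\lim_{s\to1}\mathcal{P}^{(2)}_{S(r)}(s)=F^2_\mrm(r)+(F^1_\mrm(r))^{2}$; de-conditioning over $r_\typical$ with $f_{R_\typical}$ produces the stated expression for $\lim_{s\to1}\mathcal{P}^{(2)}_{\widetilde{S}_{\mrm}}(s)$. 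The variance then follows directly by inserting this quantity together with $\mathbb{E}[\widetilde{S}_{\mrm}]=\densityPTS/\lambda_{\bs}$ into the relation $\mathrm{Var}(X)=\mathcal{P}^{(2)}_{X}(1)+\mathbb{E}[X]-(\mathbb{E}[X])^{2}$ of \eqref{eq:variance}.

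The main obstacle is rigorously justifying the two interchanges of differentiation with integration and the $s\to1$ limit. The inner $t$-integral carries an integrable singularity $(r^{2}-t^{2})^{-1/2}$ at $t=r$, so I would need a dominated-convergence argument with a bound on $g^{(k)}(s,t)e^{g(s,t)}$ that is uniform for $s$ in a neighborhood of $1$; and for the outer integral the polynomial-in-$r$ growth of $F^1_\mrm$ and $F^2_\mrm$ must be controlled against the generalized-gamma tail of $f_{R_\typical}$ coming from \eqref{voronoi_area_with_lambda_b}. Once these regularity conditions are verified, everything else is the clean chain-rule computation sketched above, driven by the identity $g(1,t)=0$.
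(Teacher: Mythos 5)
Your proposal is correct and follows essentially the same route as the paper: differentiate the de-conditioned PGF of Theorem \ref{theorem3} under the integral over $r_{\typical}$, evaluate at $s=1$ using $\lim_{s\rightarrow1}g^{(k)}(s,t)=\kappa(t,k)$ from \eqref{kappaequation} (equivalently, the conditional results behind Corollary \ref{cor1.1}), and then average against $f_{R_{\typical}}$ before applying \eqref{eq:variance}. Your explicit verification of the identity $\lim_{s\rightarrow1}g(s,t)=0$ (hence $\mathcal{P}_{S(r)}(1)=1$) and the change of variables $v=\pi r_{\typical}^{2}$ giving $\mathbb{E}\left[r_{\typical}^{2}\right]=1/(\pi{\lambda_{\bs}})$ merely make precise steps that the paper uses implicitly.
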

{
\begin{corollary}\label{thm:pon}
The active probability (or the {on} probability) of {the} typical BS is given as
\begin{align*}
p_\mathrm{on}=1-\prob{S_\mrm=0}=1- 2 \pi\int_{r_{\typical}=0}^{\infty}
\prob{S(r_\typical)=0} r_{\typical} g_{{|V_{\typical}|}}(\pi r_{\typical}^2)\dv r_{\typical} \\
\text{with } \prob{S(r_t)=0}= \exp\left(-2 \pi \densityRoads \left(r-\int_{0}^{r}\frac{\exp \left(g\left(0,t\right)\right)t}{\sqrt{r^2-t^2}}{} \dv t \right)\right).
\end{align*}
The off probability $p_\mathrm{off}=1-p_\mathrm{on}$.
\end{corollary}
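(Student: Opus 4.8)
The plan is to reduce the claim to the already-computed load distribution of the typical BS. By the modeling assumption that a BS with no associated user stays silent, the typical BS is active exactly when its Voronoi cell contains at least one vehicle, \ie when the load $S_\mrm = \vehPP(\Vor_\typical)$ is positive. Hence $p_\mathrm{on} = \prob{S_\mrm \geq 1} = 1 - \prob{S_\mrm = 0}$, so the entire task is to evaluate $\prob{S_\mrm = 0}$.

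First I would adopt the equal-area approximation introduced just before Theorem~\ref{theorem3}, replacing the typical cell $\Vor_\typical$ by a disk $\bt_{2}(\ob, R_\typical)$ of equal area so that $S_\mrm \approx \widetilde{S}_\mrm = \vehPP(\bt_{2}(\ob, R_\typical))$. The PMF of $\widetilde{S}_\mrm$ is given in Theorem~\ref{theorem3}; evaluating it at $k = 0$ yields directly
\begin{align*}
\prob{S_\mrm = 0} \approx 2\pi \int_{0}^{\infty} \prob{S(r_\typical) = 0}\, r_\typical\, g_{{|V_{\typical}|}}(\pi r_\typical^2)\, \dv r_\typical.
\end{align*}

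It then remains only to put $\prob{S(r) = 0}$ into closed form. For this I would invoke the elementary identity that the PGF of any $\mathbb{Z}_{\geq 0}$-valued random variable $X$ satisfies $\mathcal{P}_X(0) = \prob{X = 0}$. Applying it to the PGF $\mathcal{P}_{S(r)}(s)$ of Theorem~\ref{theorem1} and setting $s = 0$ in \eqref{PGFS} gives
\begin{align*}
\prob{S(r) = 0} = \exp\left(-2\pi\densityRoads\left(r - \int_0^r \frac{\exp(g(0,t))\, t}{\sqrt{r^2 - t^2}}\, \dv t\right)\right),
\end{align*}
which is exactly the expression quoted in the corollary. Substituting this into the integral above and using $p_\mathrm{off} = 1 - p_\mathrm{on}$ completes the derivation.

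Since every component is already established, there is no genuine mathematical obstacle — the statement is a direct specialization of Theorem~\ref{theorem3} to $k = 0$ combined with the definition of an active BS. The only subtlety is a modeling one rather than a proof difficulty: the equal-area disk approximation inherited from Theorem~\ref{theorem3} is what converts the exact but less tractable quantity $\prob{S_\mrm = 0}$ into the stated integral over the cell-radius distribution $g_{{|V_{\typical}|}}$.
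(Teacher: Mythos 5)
Your proposal is correct and follows exactly the route the paper intends: identify $p_\mathrm{on}=1-\prob{S_\mrm=0}$ from the silent-BS assumption, evaluate the approximate PMF of Theorem~\ref{theorem3} at $k=0$ (equivalently, use the equal-area disk approximation and decondition over $f_{R_\typical}$), and obtain $\prob{S(r)=0}=\mathcal{P}_{S(r)}(0)$ by setting $s=0$ in the PGF of Theorem~\ref{theorem1}. Your closing remark correctly flags the only caveat — the stated equality silently inherits the equal-area cell approximation — which is consistent with how the paper itself presents the result.
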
}
	\subsection{Load distribution on the tagged BS}
	In this section, we derive the approximate additional load  $\widetilde{M}_{\mrm}$ on the tagged cell. Unlike Theorem \ref{theorem44}, we will directly present the approximate load for {the} tagged cell. 
	Here, the load $\widetilde{M}_{\mrm}$ is equal to the sum of { the} number of vehicles on the tagged chord (of length $C_{\ob}$) and the number of vehicles falling inside a ball of radius $R_{\ob}$. From {Lemma \ref{thm:slivE}}, 
\begin{align*}
\widetilde{M}_\mrm\stackrel{(d)}{=}\Psi'_{\mrm}(\bt_{2}(\ob,R_{\ob}))+\Psi'_{\road_\ob}(\mathsf{C}_\ob)+\dauP'_{\bm{x}_\ob}(\mathsf{C}_\ob)
\end{align*}
where $\cdot'$ denotes the independent unconditional instances of the processes.
 Note that the total load counting the typical vehicle on tagged cell is $\widetilde{M}_{\mrm}+1$. 
\begin{theorem}\label{Theorem6}
	The PGF $\mathcal{P}_{\widetilde{M}_{\mrm}}(s)$ for the $\widetilde{M}_{\mrm}$ excluding the typical vehicle is
	\begin{align}\label{PGFexp}
		&\mathcal{P}_{\widetilde{M}_{\mrm}}(s)=\int_{c_{\ob}=0}^{\infty}\int_{r_{\ob}=0}^{\infty}\mathcal{P}_{\widetilde{M}_{\mrm}\vert R_{\ob},C_{\ob}}(s,r_{\ob},c_{\ob})f^{\ob}_{R_{\ob}}(r_{\ob})f_{C_{\ob}}(c_{\ob})\dv r_{\ob} \dv c_{\ob},
	\end{align}
	where,
	\begin{align*}
		&\mathcal{P}_{\widetilde{M}_{\mrm}\vert R_{\ob},C_{\ob}}\left(s,r_{\ob},c_{\ob}\right)=\mathcal{P}_{N_{\road_{\ob}}}\left(s,\frac{c_{\ob}}{2}\right)
\mathcal{P}_{S(r_{\ob})}(s)\int_{x_{\ob}=-a}^{a}\int_{x_c=-\frac{c_{\ob}}{2}}^{\frac{c_{\ob}}{2}}\frac{e^{(s-1)\lambda_{\drm}\A_{1}\left(\frac{c_{\ob}}{2},a,\left|x_{c}-x_{\ob}\right|\right)}}{c_{\ob}2 a}\dv x_c \dv x_{\ob},
	\end{align*} 
	where, $\mathcal{P}_{N_{\road_{\ob}}}\left(s,\frac{c_{\ob}}{2}\right)$ and $\mathcal{P}_{S(\cdot)}(s)$ is provided in \eqref{PGFMCP} and \eqref{PGFS}.
\end{theorem}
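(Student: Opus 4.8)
The plan is to start from the Extended Slivnyak decomposition in Lemma~\ref{thm:slivE}, which (as restated just before the theorem) expresses the approximate tagged-cell load as a sum of three mutually independent contributions: an independent copy of the full PLP-MCP counted in the area-equivalent ball, $\Psi'_{\mrm}(\bt_{2}(\ob,R_{\ob}))$; an independent copy of the MCP on the tagged line counted in the tagged chord, $\Psi'_{\road_\ob}(\mathsf{C}_\ob)$; and an independent copy of the cluster of the typical vehicle's parent counted in the tagged chord, $\dauP'_{\bm{x}_\ob}(\mathsf{C}_\ob)$. Since these are independent copies and the geometric quantities $R_\ob$ and $C_\ob$ are functions of the BS PPP alone, the three terms are conditionally independent given $(R_\ob,C_\ob)$. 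I would therefore first condition on $R_\ob=r_\ob$ and $C_\ob=c_\ob$, compute the conditional PGF of each term separately, multiply them (using conditional independence), and finally decondition with $f^\ob_{R_\ob}$ and $f_{C_\ob}$.

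The first two factors are immediate. Conditioned on $R_\ob=r_\ob$, the count in the ball is distributed as $S(r_\ob)$, so its PGF is $\mathcal{P}_{S(r_\ob)}(s)$ from Theorem~\ref{theorem1}. For the second term, the MCP on the tagged line is stationary, so the number of its points in the chord depends only on the length $c_\ob$, not on the position of the chord relative to the origin; placing the segment symmetrically as $[-c_\ob/2,c_\ob/2]$ and applying Lemma~\ref{lem:PGF_N_MCP} with $\rho=0$ gives the factor $\mathcal{P}_{N_{\road_\ob}}(s,c_\ob/2)=\exp(g(s,c_\ob/2))$.

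The crux is the third term, because the cluster of the typical vehicle is centered near the origin, yet the origin is not in general the midpoint of the chord. Writing the parent position relative to the typical vehicle as $x_\ob\sim\mathrm{Unif}[-a,a]$ (Lemma~\ref{thm:slivE}), the daughter process is a PPP of intensity $\lambda_\drm$ on $\bt_1(\bm{x}_\ob,a)$, so the number falling in the chord is Poisson and contributes an exponential factor $e^{(s-1)\lambda_\drm\A_1(\cdot)}$ whose mean is $\lambda_\drm$ times the overlap length of $\bt_1(\bm{x}_\ob,a)$ with the chord. To locate the chord relative to the origin I would use the fact that, for each fixed direction, the intersections of the tagged line with the Voronoi edges $\mathsf{X}_\bs$ form a stationary point process on the line; the tagged chord is the interval of this process containing the fixed origin, so by the inspection-paradox (length-biased sampling) argument, conditioned on $C_\ob=c_\ob$ the origin is uniformly distributed within the chord. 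Equivalently, writing the chord in a frame as $[-c_\ob/2,c_\ob/2]$ and letting $x_c\sim\mathrm{Unif}[-c_\ob/2,c_\ob/2]$ denote the origin's position, the relevant overlap has length $\A_1(c_\ob/2,a,|x_c-x_\ob|)$. Averaging over $x_\ob$ (density $1/(2a)$) and $x_c$ (density $1/c_\ob$) produces exactly the double integral in $\mathcal{P}_{\widetilde{M}_{\mrm}\vert R_{\ob},C_{\ob}}$. I expect this uniform-position step to be the main obstacle, since it is the only place where the joint law of the two chord segments enters and must be reconciled with conditioning on the total chord length alone; the key observation that resolves it is that stationarity forces the joint density $f_{L_1,L_2}(l_1,l_2)$ to depend on $(l_1,l_2)$ only through $l_1+l_2$, making the conditional position uniform.

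Finally, multiplying the three conditional factors and deconditioning yields \eqref{PGFexp}, with $f^\ob_{R_\ob}$ and $f_{C_\ob}$ supplied by \eqref{pdfr0} and Theorem~\ref{taggedchord}. Here the product form $f^\ob_{R_\ob}(r_\ob)f_{C_\ob}(c_\ob)$ treats the ball radius $R_\ob$ and the chord length $C_\ob$ as independent, which is consistent with the area-equivalent-ball approximation already built into the definition of $\widetilde{M}_\mrm$; I would flag this as the only additional modeling assumption beyond the exact conditional computation.
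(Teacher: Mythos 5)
Your proposal is correct and follows essentially the same route as the paper's own proof: the same three-term Slivnyak decomposition $\Psi'_{\mrm}(\bt_{2}(\ob,R_{\ob}))+\Psi'_{\road_\ob}(\mathsf{C}_\ob)+\dauP'_{\bm{x}_\ob}(\mathsf{C}_\ob)$, the same three conditional PGF factors (with the third obtained by averaging the Poisson factor $e^{(s-1)\lambda_{\drm}\A_{1}(c_\ob/2,a,|x_c-x_\ob|)}$ over the uniform parent position in $[-a,a]$ and the uniform chord-center position in $[-c_\ob/2,c_\ob/2]$), and the same final deconditioning over $R_\ob$ and $C_\ob$. The only differences are that you supply a stationarity/length-biased-sampling justification for the uniformity of the origin's position within the chord, which the paper simply asserts, and you explicitly flag the implicit independence approximation between $R_\ob$ and $C_\ob$ in the product $f^{\ob}_{R_{\ob}}(r_{\ob})f_{C_{\ob}}(c_{\ob})$, which the paper leaves unstated; both are strengthenings, not deviations.
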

\begin{proof}
See Appendix \ref{proofTheorem6}.
\end{proof}
\begin{lemma}
	The PMF of $\widetilde{M}_{\mrm}$ is given as $\mathbb{P}\left[\widetilde{M}_{\mrm}=n\right]$ 
	\[=\frac{1}{{n!}}\left[{\int_{c_{\ob}=0}^{\infty}\int_{r_{\ob}=0}^{\infty}	\sum_{k_1+k_2+k_3=n}\left[ {n \choose {k_1,k_2,k_3}}\prod_{1\leq t\leq 3}h^{(k_t)}_{t}(0)\right]f^{\ob}_{R_{\ob}}(r_{\ob})f_{C_{\ob}}(c_{\ob})\dv r_{\ob} \dv c_{\ob}}\right],\]
	where $h_{1}^{(k)}(0,r_{\ob})=\mathcal{P}_{S}^{(k)}(0,r_{\ob})$ is obtained in \eqref{eq:Pskderivatives} 
	and
	\begin{align*}
		&h_{2}^{(k)}\left(0,{c_{\ob}}/{2}\right)=\exp\left(g\left(0,{c_{\ob}}/{2}\right)\right)\bellf{}\left(g^{(1)}\left(0,{c_{\ob}}/{2}\right),\ldots ,g^{(k)}\left(0,{c_{\ob}}/{2}\right)\right),\\
		&h_{3}^{(k)}\left(0,{c_{\ob}}/{2}\right)=\frac{1}{ac_{\ob}}\int_{x_{\ob}=0}^{a}\int_{x_c=-{c_{\ob}}/{2}}^{{c_{\ob}}/{2}}\left(\lambda_{\drm}\A_{1}\left({c_{\ob}}/{2},a,\left|x_{c}-x_{\ob}\right|\right)\right)^{k}{e^{(s-1)\lambda_{\drm}\A_{1}\left({c_{\ob}}/{2},a,|x_{c}-x_{\ob}|\right)}}{}\dv x_c \dv x_{\ob},
	\end{align*}
	and $g(0,\cdot)$ is given in \eqref{g}, $g^{(k)}(0,\cdot)$ is given in \eqref{g(0,rho)}.
\end{lemma}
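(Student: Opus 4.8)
The plan is to start from the PGF $\mathcal{P}_{\widetilde{M}_\mrm}(s)$ established in Theorem \ref{Theorem6} and extract the PMF via the standard relation $\mathbb{P}[\widetilde{M}_\mrm = n] = \frac{1}{n!}[\mathcal{P}^{(n)}_{\widetilde{M}_\mrm}(s)]_{s=0}$ recorded in \eqref{eq:pgfpmfrelation}. Since the PGF is written as a double integral over $(r_\ob,c_\ob)$ of the conditional PGF $\mathcal{P}_{\widetilde{M}_\mrm|R_\ob,C_\ob}(s,r_\ob,c_\ob)$ weighted by the densities $f^\ob_{R_\ob}$ and $f_{C_\ob}$, the first step is to differentiate $n$ times in $s$ and interchange the derivative with the outer integral, so that it suffices to compute $[\partial_s^{\,n}\mathcal{P}_{\widetilde{M}_\mrm|R_\ob,C_\ob}(s,r_\ob,c_\ob)]_{s=0}$ inside the integral.

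Next I would observe that the conditional PGF is a product of three factors: $h_1(s)=\mathcal{P}_{S(r_\ob)}(s)$, $h_2(s)=\mathcal{P}_{N_{\road_\ob}}(s,c_\ob/2)$, and $h_3(s)$ equal to the normalized double integral of $e^{(s-1)\lambda_\drm \A_1(\cdot)}$. Applying the general (multinomial) Leibniz rule to the $n$-th derivative of this product yields $\partial_s^{\,n}(h_1 h_2 h_3)=\sum_{k_1+k_2+k_3=n}\binom{n}{k_1,k_2,k_3}h_1^{(k_1)}h_2^{(k_2)}h_3^{(k_3)}$; evaluating at $s=0$ and dividing by $n!$ then produces exactly the claimed multinomial sum. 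This mirrors the structure already encountered in Corollary \ref{cor_vehicle}.

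It then remains to evaluate each factor's $s$-derivatives at $s=0$. For $h_1$ and $h_2$, which are both exponentials — $\mathcal{P}_{S(r_\ob)}(s)=\exp(\cdot)$ from Theorem \ref{theorem1} and $\mathcal{P}_{N_{\road_\ob}}(s,c_\ob/2)=\exp(g(s,c_\ob/2))$ from \eqref{PGFMCP} with $\rho=0$ — I would apply Fa\`a di Bruno's formula, which expresses the $k$-th derivative of $e^{\phi(s)}$ as $e^{\phi(s)}\bellf(\phi^{(1)}(s),\ldots,\phi^{(k)}(s))$ in terms of the complete Bell polynomial. This reproduces $h_1^{(k)}(0,r_\ob)=\mathcal{P}_S^{(k)}(0,r_\ob)$ from \eqref{eq:Pskderivatives} and the Bell-polynomial form of $h_2^{(k)}(0,c_\ob/2)$ using the derivatives $g^{(j)}(0,\cdot)$ from \eqref{g(0,rho)}. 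For $h_3$, the dependence on $s$ is the simple exponential $e^{(s-1)\lambda_\drm \A_1}$ inside the integral, so each $s$-derivative merely brings down a factor $\lambda_\drm \A_1(\cdot)$; pulling the $k$-fold derivative under the inner integrals and setting $s=0$ gives the stated $h_3^{(k)}(0,c_\ob/2)$ directly.

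The main obstacle is purely technical rather than conceptual: justifying the interchange of the $n$-th $s$-derivative with both the outer $(r_\ob,c_\ob)$ integral and the inner integrals defining $h_3$. Since the PGF of a nonnegative integer-valued random variable is analytic on the open unit disk, all derivatives at $s=0$ exist and are controlled, and one can dominate the differentiated integrands uniformly on a neighborhood of $s=0$ to apply dominated convergence. After this justification, the remaining work is the bookkeeping of the multinomial expansion together with two applications of Fa\`a di Bruno, both of which are routine.
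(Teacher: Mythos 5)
Your proposal is correct and follows essentially the same route as the paper: it extracts the PMF from the PGF of Theorem \ref{Theorem6} via \eqref{eq:pgfpmfrelation}, applies the generalized (multinomial) Leibniz rule to the product of the three conditional PGFs, and evaluates the factor derivatives with Fa\`a di Bruno's formula for the two exponential-form factors $h_1, h_2$ and by direct differentiation under the integral for $h_3$ --- exactly the argument the paper uses for the analogous Corollary \ref{cor_vehicle} and implicitly invokes here. The only addition is your explicit dominated-convergence justification for interchanging the $s$-derivative with the integrals, which the paper omits but which does not change the substance of the proof.
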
	
As the conditional PGF  $\mathcal{P}_{\widetilde{M}_{\mrm}\vert R_{\ob},C_{\ob}}\left(s,r_{\ob},c_{\ob}\right)$ is a product of three PGFs, the mean and variance of $\widetilde{M}_{\mrm}$ can be written as summation of the mean and variance of the three individual RVs.
\begin{corollary}\label{mean-variance-tagged-mcp}
	The  mean  of $\widetilde{M}_{\mrm}$ conditioned on $C_{\ob}$ is
	\begin{align*}
		\mathbb{E}\left[\widetilde{M}_{\mrm}\vert C_{\ob}=c_{\ob}\right]=
\frac{1.28\densityPTS}{ \densityBS}
+m\lambdaPT c_{\ob}+\frac{1}{2ac_{\ob}}\int_{x_{\ob}=-a}^{a}\int_{x_c=-\frac{c_{\ob}}{2}}^{\frac{c_{\ob}}{2}}\lambda_{\drm}\A_{1}\left(\frac{c_{\ob}}{2},a,|x_{c}-x_{\ob}|\right)\dv x_{c} \dv x_{\ob}.
	\end{align*}
We can further decondition using the PDF of $C_{\ob}$ as given in \eqref{tagged_chord}.
Similar to the variance of $\widetilde{S}_{\mrm}$, we first find the second derivative of the PGF of $\widetilde{M}_{\mrm}$ and then using the variance equation present in \eqref{eq:variance}, we find the variance of $\widetilde{M}_{\mrm}$.
\end{corollary}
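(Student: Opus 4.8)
The plan is to exploit the product structure of the conditional PGF in Theorem \ref{Theorem6}. Write $\mathcal{P}_{\widetilde{M}_{\mrm}\vert R_{\ob},C_{\ob}}(s,r_\ob,c_\ob)=\mathcal{P}_1(s)\mathcal{P}_2(s)\mathcal{P}_3(s)$, where $\mathcal{P}_1(s)=\mathcal{P}_{S(r_\ob)}(s)$ counts the vehicles of the independent PLP-MCP copy in $\bt_2(\ob,r_\ob)$, $\mathcal{P}_2(s)=\mathcal{P}_{N_{\road_\ob}}(s,c_\ob/2)$ counts the MCP points on the tagged chord, and $\mathcal{P}_3(s)$ is the remaining double integral over $(x_\ob,x_c)$ accounting for the cluster of the typical vehicle. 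Recalling from \eqref{eq:variance} that $\mathbb{E}[X]=\mathcal{P}^{(1)}_X(1)$ and that every PGF satisfies $\mathcal{P}_i(1)=1$, I would differentiate the product by the Leibniz rule and evaluate at $s=1$. Each summand of the expansion then contains exactly one differentiated factor and two undifferentiated factors equal to one, so the conditional mean collapses to $\mathcal{P}^{(1)}_1(1)+\mathcal{P}^{(1)}_2(1)+\mathcal{P}^{(1)}_3(1)$, \ie the sum of the three component means. This reduces the claim to evaluating each mean separately.

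For $\mathcal{P}_2$, the mean equals $\lim_{s\to1}g^{(1)}(s,c_\ob/2)=\kappa(c_\ob/2,1)$ by \eqref{kappaequation}. Substituting $\beta(c_\ob/2)=2\min(c_\ob/2,a)$ and $\lambda_\drm=m/(2a)$ and simplifying in both regimes (according to whether $c_\ob/2$ exceeds $a$ or not) collapses the bracket to give $m\lambdaPT c_\ob$, consistent with a stationary $1$D MCP of intensity $m\lambdaPT$ placing on average intensity-times-length points on a segment of length $c_\ob$. For $\mathcal{P}_3$, differentiating the double integral under the integral sign and setting $s=1$ removes the exponential (leaving $e^0=1$) and reproduces the third term of the corollary verbatim.

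The main work is the first term. Since $\mathcal{P}_1=\mathcal{P}_{S(r_\ob)}$ has mean $\densityPTS\pi r_\ob^2$ by \eqref{expecteds}, I still need to decondition over $R_\ob$. Within the ball approximation $\pi R_\ob^2=|\Vor_\ob|$, the factorized joint density $f^\ob_{R_\ob}(r_\ob)f_{C_\ob}(c_\ob)$ in \eqref{PGFexp} lets me integrate out $R_\ob$ independently of $c_\ob$, giving $\densityPTS\,\mathbb{E}[\pi R_\ob^2]=\densityPTS\,\mathbb{E}[|\Vor_\ob|]$. From \eqref{pdfr0} the $0$-cell area is the length-biased (size-biased) version of the typical-cell area, $g_{|\Vor_\ob|}(v)=\lambda_\bs\, v\, g_{|\Vor_\typical|}(v)$, so $\mathbb{E}[|\Vor_\ob|]=\lambda_\bs\,\mathbb{E}[|\Vor_\typical|^2]$. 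Evaluating this second moment from the generalized-gamma fit \eqref{voronoi_area_with_lambda_b} gives the numerical value $\approx 1.28/\lambda_\bs$, so the first term becomes $1.28\densityPTS/\densityBS$. The hard part is precisely this last step: the factor $1.28$ is not exact but follows from the empirical generalized-gamma model of the Poisson-Voronoi cell area (equivalently, the known ratio of the second to first moment of the typical-cell area), so the argument must carefully invoke the size-biased area law and flag $1.28$ as an approximate constant rather than a closed form. Finally, deconditioning over $C_\ob$ through its PDF in \eqref{tagged_chord} yields the unconditional mean, and the variance follows identically by evaluating $\mathcal{P}^{(2)}_{\widetilde{M}_\mrm}(1)$ and applying \eqref{eq:variance}.
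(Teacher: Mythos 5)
Your proposal is correct and follows essentially the same route as the paper: the paper likewise exploits the product form of the conditional PGF in Theorem \ref{Theorem6} (so the conditional mean is the sum of the three component means), with the second and third terms read off from $\kappa(c_{\ob}/2,1)$ and differentiation under the integral, and the first term obtained by deconditioning $\densityPTS\pi R_{\ob}^2$ over the size-biased $0$-cell radius PDF \eqref{pdfr0}, which yields $\lambda_\bs\mathbb{E}\left[|\Vor_{\typical}|^2\right]\approx 1.28/\lambda_\bs$ from the generalized-gamma fit. Your explicit verification that $\kappa(c_{\ob}/2,1)=m\lambdaPT c_{\ob}$ in both regimes and your flagging of $1.28$ as an empirical (not exact) constant are details the paper leaves implicit, but the argument is the same.
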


\subsection{Load Distribution for the vehicular traffic under N-PTS}
		Since this paper also provides a comparative analysis of PTS with N-PTS, we also provide the load distribution for a vehicular communication network with N-PTS for completeness. We use the PGFs for load on the typical and the tagged cell presented in \cite{plpppp}
		 to derive the mean and the variance of load. Here, the vehicles  on each road form an independent $1$D PPP with density $\lambda$. Hence, the overall vehicular traffic $\Psi_{\prm}$ formed by taking the union of all the vehicles on all roads is a PLP-PPP, as discussed earlier already. Its density is 
$\densityITS=\pi {\densityRoads}\lambda$. Let $\mu_\prm$ denote the per-road vehicular density \ie $\mu_\prm=\lambda$.
%
%
The mean approximate load on the typical and tagged BS in vehicular traffic under N-PTS is given as follows. 
\begin{corollary}\label{Cor-mean-load-ppp}
	The mean and variance of the approximate load $\widetilde{S}_{\prm}$ on the typical BS is
	\begin{align*}
		&\mathbb{E}\left[\widetilde{S}_{\prm}\right]=\densityITS\pi \mathbb{E}\left[r_{\typical}^2\right]=\frac{\densityITS}{\densityBS},\,\mathrm{Var}\left[\widetilde{S}_{\prm}\right]=\left(\densityITS \pi\right)^2\mathbb{E}\left[r_{\typical}^4\right]+\frac{16}{3}\pi {\densityRoads}\lambda^2\mathbb{E}\left[r_{\typical}^{3}\right]+\frac{\densityITS}{\densityBS}-\left(\frac{\densityITS}{\densityBS}\right)^{2},\\
		&\text{where}\, \densityITS=\pi \densityRoads\lambda,\, \mathbb{E}\left[r_{\typical}^2\right]=\frac{1}{\pi \densityBS},\,\text{and}\,\, \mathbb{E}\left[r_{\typical}^{3}\right]=\int_{0}^{\infty}r_{\typical}^{3}f_{R_{\typical}}(r_{\typical})\dv r_{\typical}=\frac{\Gamma\left(\frac{c_1+1}{a_1}\right)}{b_{1}^{\frac{3}{2a_{1}}}(\pi{\lambda_{\bs}})^{3/2}\Gamma\left(\frac{c_1}{a_1}\right)}.
	\end{align*}

\end{corollary}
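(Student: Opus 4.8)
The plan is to exploit that $\widetilde{S}_{\prm}=\Psi_{\prm}(\bt_{2}(\ob,R_{\typical}))$ is the point count of a Cox process (the PLP-PPP) inside a disk whose radius $R_{\typical}$ is \emph{independent} of the traffic, and to obtain both moments by conditioning first on $R_{\typical}$ and then on the underlying line process $\roadPP$. First I would fix $R_{\typical}=r_{\typical}$ and analyze $S_{\prm}(r)=\Psi_{\prm}(\bt_{2}(\ob,r))$. Conditioned on $\roadPP$, a line at perpendicular distance $|\rho|<r$ cuts a chord of length $2\sqrt{r^{2}-\rho^{2}}$ carrying an independent Poisson number of vehicles of mean $2\lambda\sqrt{r^{2}-\rho^{2}}$; since a sum of independent Poisson variables is Poisson, $S_{\prm}(r)$ is Poisson with the line-driven random measure $\Lambda(r)=\sum_{\road\in\roadPP}2\lambda\sqrt{r^{2}-\rho_{\road}^{2}}\,\indside{|\rho_{\road}|<r}$.

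For the mean, Campbell's theorem over the L-space PPP (density $\densityRoads$) gives $\mathbb{E}[S_{\prm}(r)]=\mathbb{E}[\Lambda(r)]=\densityRoads\int_{0}^{\pi}\!\int_{-r}^{r}2\lambda\sqrt{r^{2}-\rho^{2}}\,\dv\rho\,\dv\phi=\densityITS\pi r^{2}$, using $\int_{-r}^{r}\sqrt{r^{2}-\rho^{2}}\,\dv\rho=\pi r^{2}/2$ and $\densityITS=\pi\densityRoads\lambda$. Deconditioning against $f_{R_{\typical}}$ in \eqref{frt} then yields $\mathbb{E}[\widetilde{S}_{\prm}]=\densityITS\pi\,\mathbb{E}[r_{\typical}^{2}]=\densityITS/\densityBS$, since the mean typical-cell area $1/\densityBS$ forces $\mathbb{E}[r_{\typical}^{2}]=1/(\pi\densityBS)$.

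For the variance I would apply the law of total variance twice. Conditioned on $R_{\typical}=r$, the Cox structure gives $\mathrm{Var}[S_{\prm}(r)]=\mathbb{E}[\Lambda(r)]+\mathrm{Var}[\Lambda(r)]$, where the second term is the second-moment Campbell integral $\mathrm{Var}[\Lambda(r)]=\densityRoads\int_{0}^{\pi}\!\int_{-r}^{r}\bigl(2\lambda\sqrt{r^{2}-\rho^{2}}\bigr)^{2}\dv\rho\,\dv\phi=\tfrac{16}{3}\pi\densityRoads\lambda^{2}r^{3}$, using $\int_{-r}^{r}(r^{2}-\rho^{2})\,\dv\rho=\tfrac{4}{3}r^{3}$. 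Hence the conditional mean is $\densityITS\pi r^{2}$ and the conditional variance is $\densityITS\pi r^{2}+\tfrac{16}{3}\pi\densityRoads\lambda^{2}r^{3}$. Combining via $\mathrm{Var}[\widetilde{S}_{\prm}]=\mathbb{E}_{R}\bigl[\mathrm{Var}[S_{\prm}(r)\mid R{=}r]\bigr]+\mathrm{Var}_{R}\bigl[\mathbb{E}[S_{\prm}(r)\mid R{=}r]\bigr]$ and expanding $\mathrm{Var}_{R}[\densityITS\pi r_{\typical}^{2}]=(\densityITS\pi)^{2}\bigl(\mathbb{E}[r_{\typical}^{4}]-(\mathbb{E}[r_{\typical}^{2}])^{2}\bigr)$ produces the four stated terms, with $(\densityITS\pi)^{2}(\mathbb{E}[r_{\typical}^{2}])^{2}=(\densityITS/\densityBS)^{2}$ and the remaining moments $\mathbb{E}[r_{\typical}^{3}],\mathbb{E}[r_{\typical}^{4}]$ obtained by integrating against the generalized-gamma density $f_{R_{\typical}}$; the cubic moment gives the closed form quoted in the corollary.

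The only genuinely non-routine step is establishing the conditional variance identity $\mathrm{Var}[S_{\prm}(r)\mid R{=}r]=\mathbb{E}[\Lambda(r)]+\mathrm{Var}[\Lambda(r)]$, i.e.\ correctly invoking the Cox-process decomposition (Poisson-given-lines plus the fluctuation of the line-driven intensity) and evaluating the second-moment Campbell integral to extract the coefficient $16/3$; equivalently, this quantity is the $s\to1$ limit of the second $s$-derivative of the PLP-PPP load PGF from \cite{plpppp}. Everything else reduces to bookkeeping with the generalized-gamma radius moments.
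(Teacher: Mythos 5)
Your derivation is correct, and it takes a somewhat different (and more self-contained) route than the paper. The paper never proves Corollary~\ref{Cor-mean-load-ppp} from first principles: it imports the load PGF of the PLP-PPP from \cite{plpppp} and then follows the same template used for the PTS case in Corollary~\ref{meanloadsm} --- differentiate the conditional PGF twice, take $s\to 1$, apply \eqref{eq:variance}, and decondition over $R_{\typical}$ using \eqref{frt}. You instead exploit the Cox structure directly: conditioned on $\roadPP$, the count in $\bt_{2}(\ob,r)$ is Poisson with shot-noise intensity $\Lambda(r)=\sum_{\road\in\roadPP}2\lambda\sqrt{r^{2}-\rho_{\road}^{2}}\,\indside{|\rho_{\road}|<r}$, so $\mathrm{Var}[S_{\prm}(r)]=\mathbb{E}[\Lambda(r)]+\mathrm{Var}[\Lambda(r)]$, and Campbell's theorem for the line PPP gives $\mathbb{E}[\Lambda(r)]=\densityITS\pi r^{2}$ and $\mathrm{Var}[\Lambda(r)]=\tfrac{16}{3}\pi\densityRoads\lambda^{2}r^{3}$ (your integrals $\int_{-r}^{r}\sqrt{r^{2}-\rho^{2}}\,\dv\rho=\pi r^{2}/2$ and $\int_{-r}^{r}(r^{2}-\rho^{2})\,\dv\rho=\tfrac{4}{3}r^{3}$ are right). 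The second application of the law of total variance over $R_{\typical}$ --- legitimate because $R_{\typical}$ is a function of $\BSPP$ alone and hence independent of the traffic, a point you correctly flag --- reproduces exactly the four stated terms. The two computations are equivalent, since for a Cox process the second factorial moment extracted by the PGF derivative is $\mathcal{P}^{(2)}_{S(r)}(1)=\mathbb{E}[\Lambda(r)^{2}]=(\mathbb{E}[\Lambda(r)])^{2}+\mathrm{Var}[\Lambda(r)]$; what your version buys is transparency (each variance term is attributed to Poisson noise, road-layout randomness, or cell-size randomness) and independence from the external reference, while the paper's PGF route buys uniformity with the rest of its machinery, which also yields PMFs that a moments-only argument cannot.

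One caveat on your final sentence: if you actually carry out the substitution $x=\pi\lambda_{\bs}r^{2}$ against the generalized-gamma density, the cubic moment comes out as
\begin{equation*}
\mathbb{E}\left[r_{\typical}^{3}\right]=\frac{\Gamma\left(\frac{c_{1}+3/2}{a_{1}}\right)}{b_{1}^{\frac{3}{2a_{1}}}\left(\pi\lambda_{\bs}\right)^{3/2}\Gamma\left(\frac{c_{1}}{a_{1}}\right)},
\end{equation*}
i.e.\ the Gamma argument is $(c_{1}+3/2)/a_{1}$, not $(c_{1}+1)/a_{1}$ as printed in the corollary (the exponent $3/(2a_{1})$ on $b_{1}$ confirms the half-integer moment). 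So the integral does not literally ``give the closed form quoted''; the quoted form appears to contain a typo. This does not affect the structure or validity of your argument.
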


\begin{corollary}\label{mean-variance-tagged-ppp}
The mean of the approximate load $\widetilde{M}_{\prm}$ on the tagged BS is given as
\[	\expect{\widetilde{M}_{\prm}}=\expect{\mathbb{E}\left[\widetilde{M}_{\prm}\vert C_{\ob}=c_{\ob}\right]}=\densityITS \pi \mathbb{E}\left[r_{\ob}^2\right]+\lambda \expect{{C_{\ob}}}.\]
Similarly, we can find the variance of $\widetilde{M}_{\prm}$.
\end{corollary}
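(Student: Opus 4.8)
The plan is to obtain $\widetilde{M}_{\prm}$ from the same decomposition used for the platooned case in Theorem~\ref{Theorem6}, specialized to the PLP-PPP model of N-PTS. First I would invoke the extended Slivnyak theorem for the PLP-PPP (the $1$D-PPP analogue of Lemma~\ref{thm:slivE}, which follows identically from the facts that the PLP is a PPP in L-space and that each road carries an independent $1$D PPP of density $\lambda$): conditioned on the typical vehicle at the origin, the remaining traffic is distributed as an independent copy $\Psi'_{\prm}$ of the PLP-PPP superposed with an independent copy $\Psi'_{\road_\ob}$ of the $1$D PPP on the tagged line $\road_\ob$. Because the tagged BS serves exactly the vehicles in its $0$-cell $\Vor_\ob$, and we replace $\Vor_\ob$ by the area-equivalent ball $\bt_2(\ob,R_\ob)$ while the extra load resides on the tagged chord $\mathsf{C}_\ob=\road_\ob\cap\Vor_\ob$, the approximate tagged load splits as
\begin{align*}
\widetilde{M}_{\prm}\stackrel{(d)}{=}\Psi'_{\prm}(\bt_2(\ob,R_\ob))+\Psi'_{\road_\ob}(\mathsf{C}_\ob).
\end{align*}

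Next I would take expectations by the tower rule, conditioning first on $R_\ob$ and $C_\ob$. By linearity, $\expect{\widetilde{M}_{\prm}}$ is the sum of the two conditional means, each deconditioned over the relevant radius and chord distribution. For the first term, since $\Psi'_{\prm}$ is a stationary point process of intensity $\densityITS=\pi\densityRoads\lambda$, Campbell's theorem \cite{SGBook2022} gives $\expect{\Psi'_{\prm}(\bt_2(\ob,R_\ob))\mid R_\ob}=\densityITS\pi R_\ob^2$. For the second term, conditioned on $C_\ob$ the count $\Psi'_{\road_\ob}(\mathsf{C}_\ob)$ is the number of points of a $1$D PPP of density $\lambda$ on a segment of length $C_\ob$, whose mean is simply $\lambda C_\ob$; this is precisely the degenerate single-vehicle-cluster limit of the $N_{\road}$ mean used in the platooned case. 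Deconditioning with the $0$-cell radius density $f^{\ob}_{R_\ob}$ from \eqref{pdfr0} and the tagged-chord density $f_{C_\ob}$ from \eqref{tagged_chord} then yields $\expect{\widetilde{M}_{\prm}}=\densityITS\pi\expect{R_\ob^2}+\lambda\expect{C_\ob}$, as claimed.

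There is no genuinely hard step here: the quantity is the mean of a sum of two counts that are independent given $(R_\ob,C_\ob)$, so linearity of expectation does all the work once the Slivnyak decomposition is in place. The one point requiring care is that the radius must be drawn from the \emph{area-biased} $0$-cell distribution rather than the typical-cell distribution, and the chord from the \emph{length-biased} tagged-chord distribution; both are already supplied by \eqref{pdfr0} and \eqref{tagged_chord}. The analogous variance computation is the only more involved part: there one would use that, conditioned on $(R_\ob,C_\ob)$, the two counts are independent Poisson variables, so their conditional variances ($\densityITS\pi R_\ob^2$ for the ball count and $\lambda C_\ob$ for the chord count) add, and then apply the law of total variance, deconditioning the resulting expressions with $f^{\ob}_{R_\ob}$ and $f_{C_\ob}$ exactly as in Corollary~\ref{Cor-mean-load-ppp}.
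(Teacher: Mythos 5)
Your derivation of the mean is correct and follows essentially the same route the paper intends: the Slivnyak-type decomposition of the tagged load into an independent copy of the PLP-PPP in the area-equivalent ball $\bt_2(\ob,R_\ob)$ plus an independent $1$D PPP of density $\lambda$ on the tagged chord (the N-PTS specialization of the decomposition in Theorem~\ref{Theorem6}, with the daughter-cluster term absent), followed by Campbell's theorem and the tower rule with the biased densities \eqref{pdfr0} and \eqref{tagged_chord}. This matches the structure $\expect{\widetilde{M}_{\prm}}=\expect{\mathbb{E}[\widetilde{M}_{\prm}\mid C_\ob=c_\ob]}$ that the corollary itself displays, and linearity makes the (approximate) independence assumptions harmless for the mean.

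However, your sketch of the variance contains a genuine error: conditioned on $R_\ob$, the ball count $\Psi'_{\prm}(\bt_2(\ob,R_\ob))$ is \emph{not} Poisson with mean $\densityITS\pi R_\ob^2$. The PLP-PPP is a Cox process, so given the ball radius the count is only mixed Poisson (Poisson conditionally on the random line configuration inside the ball), and is therefore over-dispersed: its conditional variance is
\begin{align*}
\mathrm{Var}\left[\Psi'_{\prm}\big(\bt_2(\ob,r)\big)\right]=\densityITS\pi r^2+\tfrac{16}{3}\pi\densityRoads\lambda^2 r^3,
\end{align*}
not $\densityITS\pi r^2$. This is exactly why the paper's own Corollary~\ref{Cor-mean-load-ppp} for the typical cell carries the extra term $\tfrac{16}{3}\pi\densityRoads\lambda^2\,\mathbb{E}[r_\typical^3]$; applying the law of total variance with a Poisson conditional variance, as you propose, would drop this term and produce a strictly smaller (wrong) variance. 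The chord count is fine --- given $C_\ob$ it genuinely is Poisson with mean $\lambda C_\ob$ --- so the fix is only to replace the conditional variance of the ball count by the Cox-process expression above (equivalently, take a further expectation over the line process, or differentiate the PGF of Theorem~\ref{theorem1} specialized to the PPP case) before deconditioning over $R_\ob$ and $C_\ob$.
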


\section{Rate Coverage in a Platooned Vehicular Communication Network}
The rate coverage is defined as the probability that the rate achievable by {the} typical user is greater than a certain threshold \ie 
\[	\Rc(\tau)=\mathbb{P}(\mathcal{R}>\tau).\]
Assuming that the available bandwidth $B$ is equally shared by all user associated with the tagged BS, the achievable rate of typical receiver is given by
\[	\mathcal{R}={B}/{(1+\widetilde{M}_{\mrm})}\log_{2}\left(1+\SIR\right),\]
where {$\widetilde{M}_{(\cdot)}$} is the load on the tagged BS. Also note that {the} SIR depends on the active BS density which is further dependent on the load distribution on the typical cell. Hence, it is evident that the rate coverage  depends on the distributions of the user load on both the typical and  the tagged BS. 
Hence, the rate coverage is
	\begin{align}
		\Rc(\tau)&=\mathbb{P}\left({B}/{(1+\widetilde{M}_{\mrm})}\log_{2}\left(1+\SIR\right)>\tau\right)
		=\sum_{k=0}^{\infty}\mathbb{P}(\widetilde{M}_{\mrm}=k)\mathbb{P}\left(\SIR>2^{\frac{(k+1)\tau}{B}}-1\right)\label{final_expres}.
	\end{align}
{Here, $\mathbb{P}\left(\SIR>\tau\right)$ is the coverage probability of the typical user of a cellular network. For the channel and SIR model considered in \eqref{eq:SIReq}, the coverage is given as \cite{SGBook2022,gupta2015potential},  
 $\mathbb{P}(\SIR>\tau)$} 
\begin{align*}
		&=
		 2\pi {\lambda_{\bs}} \int_{0}^{\infty}\!\!r\exp\left(-{\lambda_{\bs}}\pi r^{2}\!-\!\mathrm{p}_{\mathrm{on}}\int_{r}^{\infty}\frac{2\pi {\lambda_{\bs}} \tau y \dv y}{\tau+ (\frac yr)^{\alpha}}\right)\!\!\dv r
=
		  \int_{0}^{\infty}\!\!\exp\left(-v\!-\!\mathrm{p}_{\mathrm{on}}\int_{v}^{\infty}\!\!\frac{   \dv u}{1+(\frac uv)^{\frac\alpha2}\tau^{-1}}\right)\!\!\dv v\\
		  &
		  =
		  \int_{0}^{\infty}\exp\left(-v\left(1+ \mathrm{p}_{\mathrm{on}}\int_{1}^
{\infty}\frac{   \dv t }{1+t^{\alpha/2}\tau^{-1}}\right)\right)\dv v=\frac1{1+ \mathrm{p}_{\mathrm{on}}\int_{1}^{\infty}\frac{   \dv t }{1+t^{\alpha/2}\tau^{-1}}},
	\end{align*}
	where the first two steps are due to the substitutions $\pi\lambda_\bs y^2=u$ and $u=vt$. Using \eqref{final_expres}, we get the following result.

\begin{theorem}\label{theorem9}
	The rate coverage  of {the} typical vehicular user in a vehicular communication network with platooned vehicles is 
	\begin{align}
	   &\Rc(\tau)=\sum_{k=0}^{\infty}\ {\prob{\widetilde{M}_{\mrm}=k}}{\left(1+ \mathrm{p}_{\mathrm{on}}\int_{1}^{\infty}\frac{   \dv t }{1+t^{\alpha/2}\gamma_k^{-1}}\right)}^{-1}, \label{rate_coverage}
	\end{align}	
where $\gamma_k=\left(2^{\frac{(k+1)\tau}{B}}-1\right)$ and $\mathrm{p}_{\mathrm{on}}$ is given in Corr. \ref{thm:pon}. \end{theorem}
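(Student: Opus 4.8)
The plan is to recover \eqref{final_expres} by conditioning on the tagged-BS load and then substitute the closed-form SIR coverage probability that was computed just before the theorem. Starting from $\Rc(\tau)=\prob{\mathcal{R}>\tau}$ and inserting the rate expression $\mathcal{R}=B/(1+\widetilde{M}_{\mrm})\log_2(1+\SIR)$, I would first condition on the event $\widetilde{M}_{\mrm}=k$. When the tagged BS carries $k$ additional vehicles, the bandwidth $B$ is shared among $k+1$ users, so the coverage event $\{\mathcal{R}>\tau\}$ becomes $\{\log_2(1+\SIR)>(k+1)\tau/B\}$, which is algebraically equivalent to $\{\SIR>\gamma_k\}$ with $\gamma_k=2^{(k+1)\tau/B}-1$. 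Applying the law of total probability over $k$ then yields $\Rc(\tau)=\sum_{k=0}^\infty \prob{\widetilde{M}_{\mrm}=k}\,\prob{\SIR>\gamma_k}$, which is exactly \eqref{final_expres}.

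The second step is a direct substitution. It was shown immediately before the theorem that, under the PPP approximation for the active BSs with density $p_\mathrm{on}\lambda_{\bs}$, the SIR complementary CDF admits the closed form $\prob{\SIR>\tau}=(1+p_\mathrm{on}\int_1^\infty \dd t/(1+t^{\alpha/2}\tau^{-1}))^{-1}$. Replacing the generic threshold $\tau$ by $\gamma_k$ and inserting this into the conditioned sum gives precisely \eqref{rate_coverage}. The PMF $\prob{\widetilde{M}_{\mrm}=k}$ is supplied by the Lemma following Theorem \ref{Theorem6}, and $p_\mathrm{on}$ by Corollary \ref{thm:pon}, so every quantity appearing in the final expression has already been characterized and no new computation is needed.

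The only point requiring care, and the main obstacle, is the independence used in factoring $\prob{\mathcal{R}>\tau\mid \widetilde{M}_{\mrm}=k}$ into $\prob{\SIR>\gamma_k}$, i.e.\ treating the tagged-cell load $\widetilde{M}_{\mrm}$ as statistically independent of the SIR of the typical link. This is exactly what the active-BS PPP approximation buys: the interference field, and hence the SIR, is governed by an independently thinned PPP of density $p_\mathrm{on}\lambda_{\bs}$ whose statistics are decoupled from the particular number of vehicles landing in the tagged Voronoi cell. In the write-up I would therefore state the factorization explicitly as holding under this mean-field approximation, emphasize that $p_\mathrm{on}$ is the \emph{unconditional} active probability obtained from the typical-cell load distribution in Corollary \ref{thm:pon} (and is thus not re-conditioned on the tagged-cell realization), and remark that any residual correlation between the tagged-cell load and the nearest-BS distance $R$ is absorbed into the same approximation. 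With that assumption in place the remainder is purely a substitution of $\gamma_k$ into the coverage formula followed by a sum against the load PMF.
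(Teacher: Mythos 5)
Your proposal is correct and follows essentially the same route as the paper: condition on $\widetilde{M}_{\mrm}=k$ to obtain \eqref{final_expres}, then substitute the closed-form SIR coverage $\bigl(1+p_{\mathrm{on}}\int_{1}^{\infty}\frac{\dd t}{1+t^{\alpha/2}\gamma_k^{-1}}\bigr)^{-1}$ evaluated at threshold $\gamma_k$. Your explicit flagging of the load--SIR independence (justified by the active-BS PPP approximation with unconditional $p_{\mathrm{on}}$) is a point the paper leaves implicit in writing the product form of \eqref{final_expres}, and is a worthwhile clarification rather than a deviation.
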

Note that the rate coverage for a typical user in N-PTS can also be computed using \eqref{rate_coverage} by replacing $\widetilde{M}_{\mrm}$ and $\widetilde{S}_{\mrm}$ with $\widetilde{M}_{\prm}$ and $\widetilde{S}_{\prm}$, respectively. 
\section{Numerical Results}
In this section, we first present numerical results using the derived expressions. We will first verify the accuracy of the PMFs of $\widetilde{S}_{\mrm}$ and $\widetilde{M}_{\mrm}$ by comparing them with the exact simulation results.
We also discuss the impact of various parameters on the load distribution. After that, we will present a comparative analysis between PTS and N-PTS in terms of the energy efficiency, load imbalance and their impact on the rate coverage. In all our numerical results,  we use the following parameters unless stated otherwise. {The road density $\lambda_\roads=5/\pi$ \si{km^{-1}}, $\lambdaPT=1$  \si{platoons/km} $a=250$ \si{m}}. For fair comparison we have taken  $\lambda$ in N-PTS such that the total vehicular density $\densityITS$ in N-PTS is equal to $\densityPTS$.
\newcommand{\figsize}{.4\textwidth}
\begin{figure}[ht!]
	\centering
	{{\includegraphics[width=\figsize]{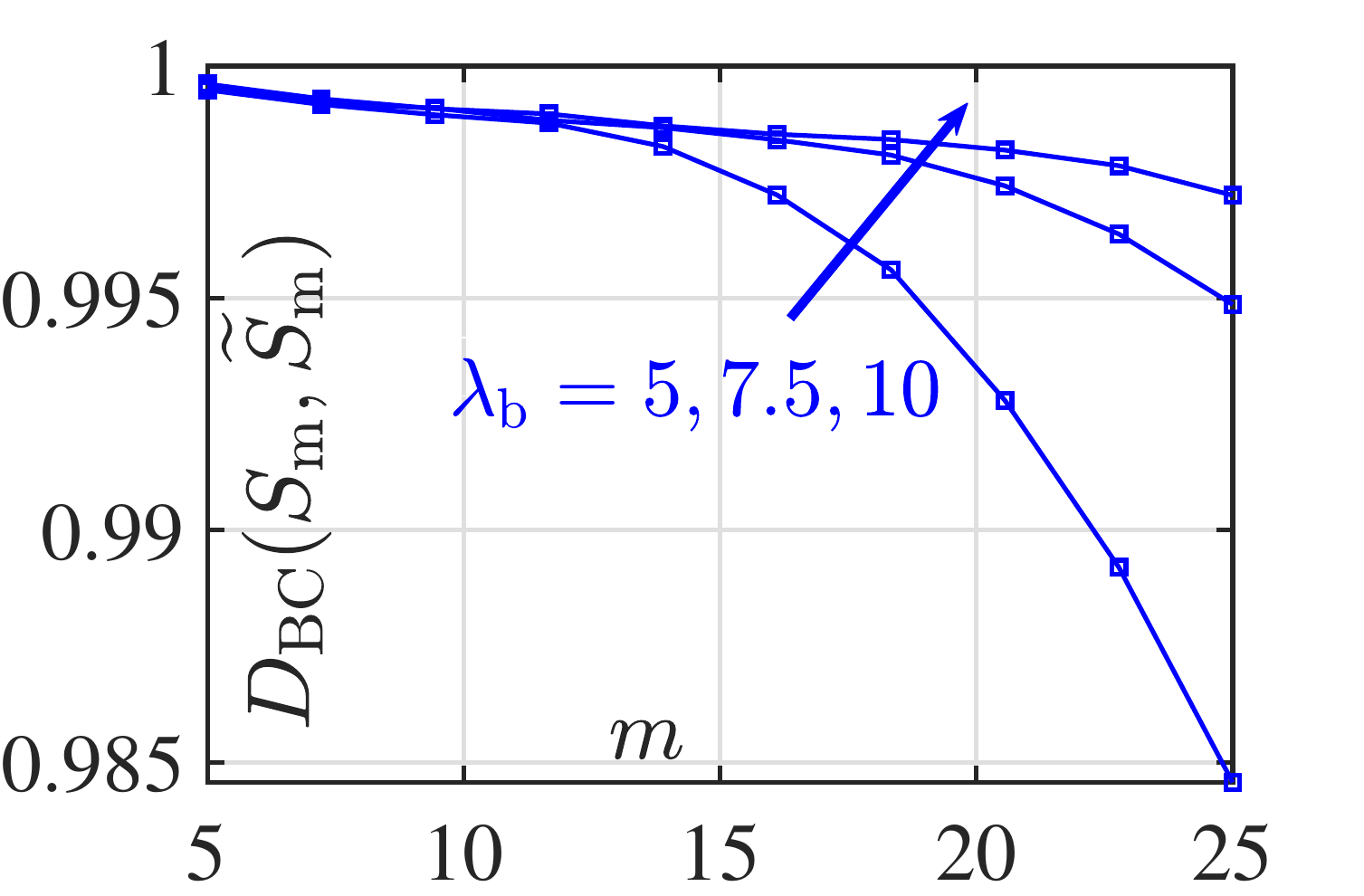} }}
	{{\includegraphics[width=\figsize]{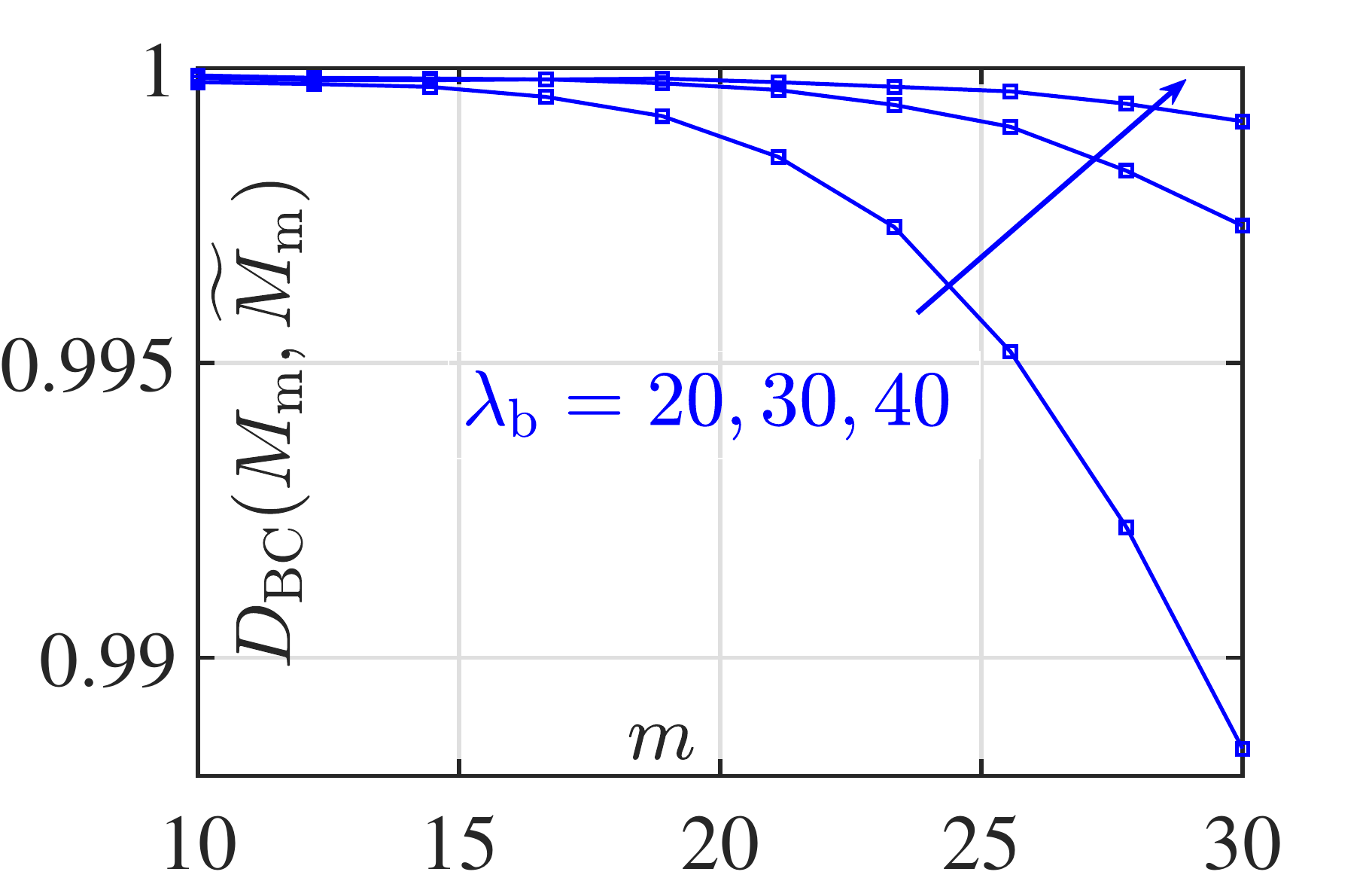} }}
	\caption{The BC for (a) the typical $D_{\mathrm{BC}}(S_{\mrm},\widetilde{S}_{\mrm})$ and (b) the tagged load $D_{\mathrm{BC}}(M_{\mrm},\widetilde{M}_{\mrm})$ for various values of $m$ and $\lambda_\bs$.
	 Here, { $\lambda_\roads=5/\pi$ \si{km}$^{-1}$} , $\lambdaPT=1$  \si{platoons/km} $a=250$ \si{m}. 
	 A value close to 1 implies that the approximation is accurate. }\label{bc}
\adjustfigspace \adjustfigspace
\end{figure}

\subsection{Validation}
To test the accuracy of  the derived distributions of the approximate load $\widetilde{S}_{\mrm}$ and $\widetilde{M}_{\mrm}$, 
 we evaluate the Bhattacharyya coefficient (BC) \cite{bhattacharyya1946measure} between the PMFs of the approximate load and the respective exact PMFs obtained using simulations. Note that for any two PMFs $p(\omega)$ and $q(\omega)$, the BC is defined as $D_{\mathrm{BC}}(p,q)=\sum\sqrt{p(x)q(x)}$. The BC $D_{\mathrm{BC}}(p,q)$, lies between $0$ to $1$, and a value close to 1 indicates good approximation. Fig. \ref{bc} {presents} the BC for the load on the typical $\left(D_{\mathrm{BC}}(S_{\mrm},\widetilde{S}_{\mrm})\right)$ and the tagged cell $\left(D_{\mathrm{BC}}(M_{\mrm},\widetilde{M}_{\mrm})\right)$.
{From this result, we notice that the approximation is remarkably close to the true result.} The approximation improves {further} with decrease in platoon size $m$ and increase in the BS density. 
\begin{figure}[ht!]
(a)
	{\includegraphics[width=\figsize]{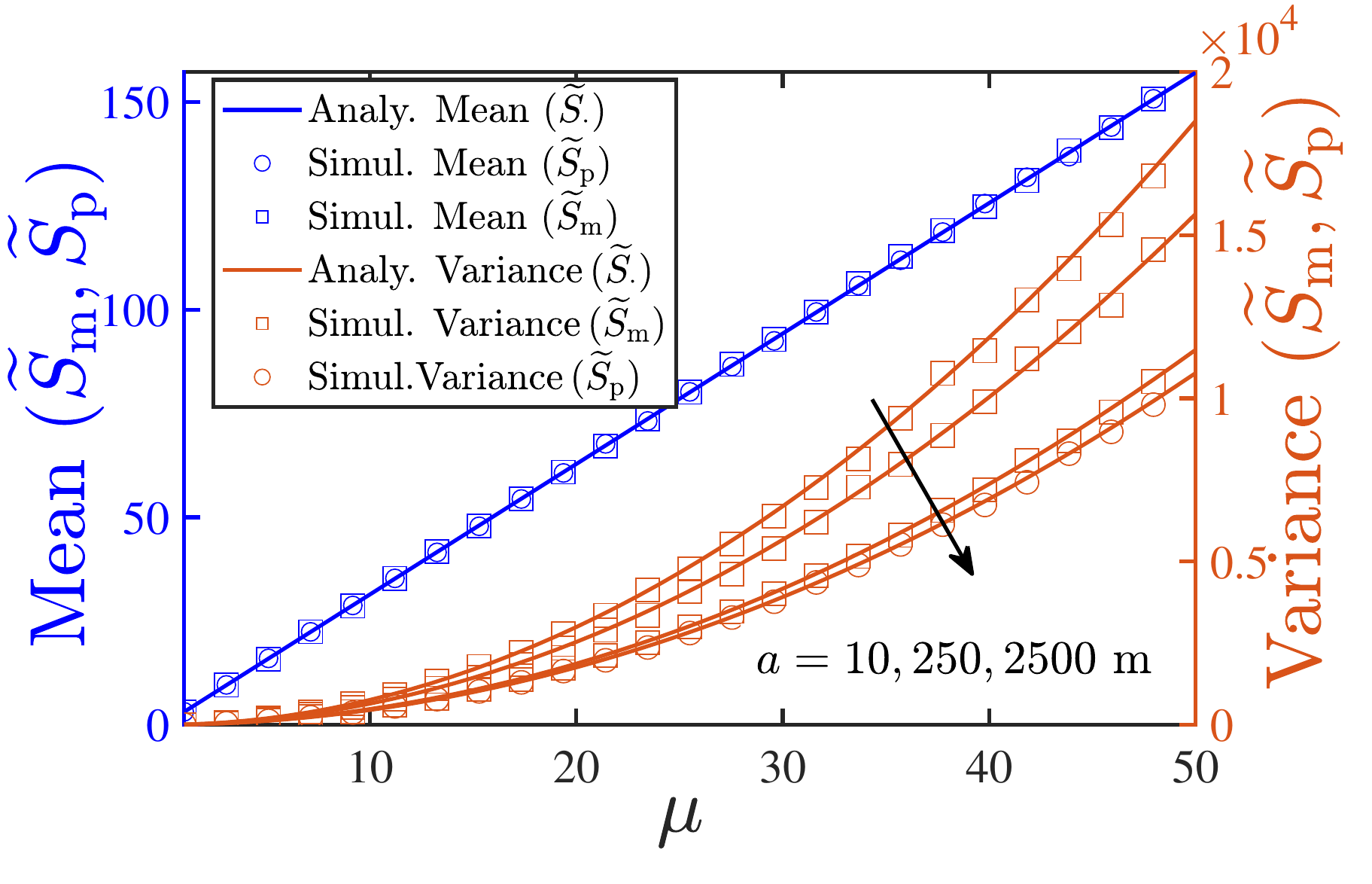} }
(b)
	{\includegraphics[width=\figsize]{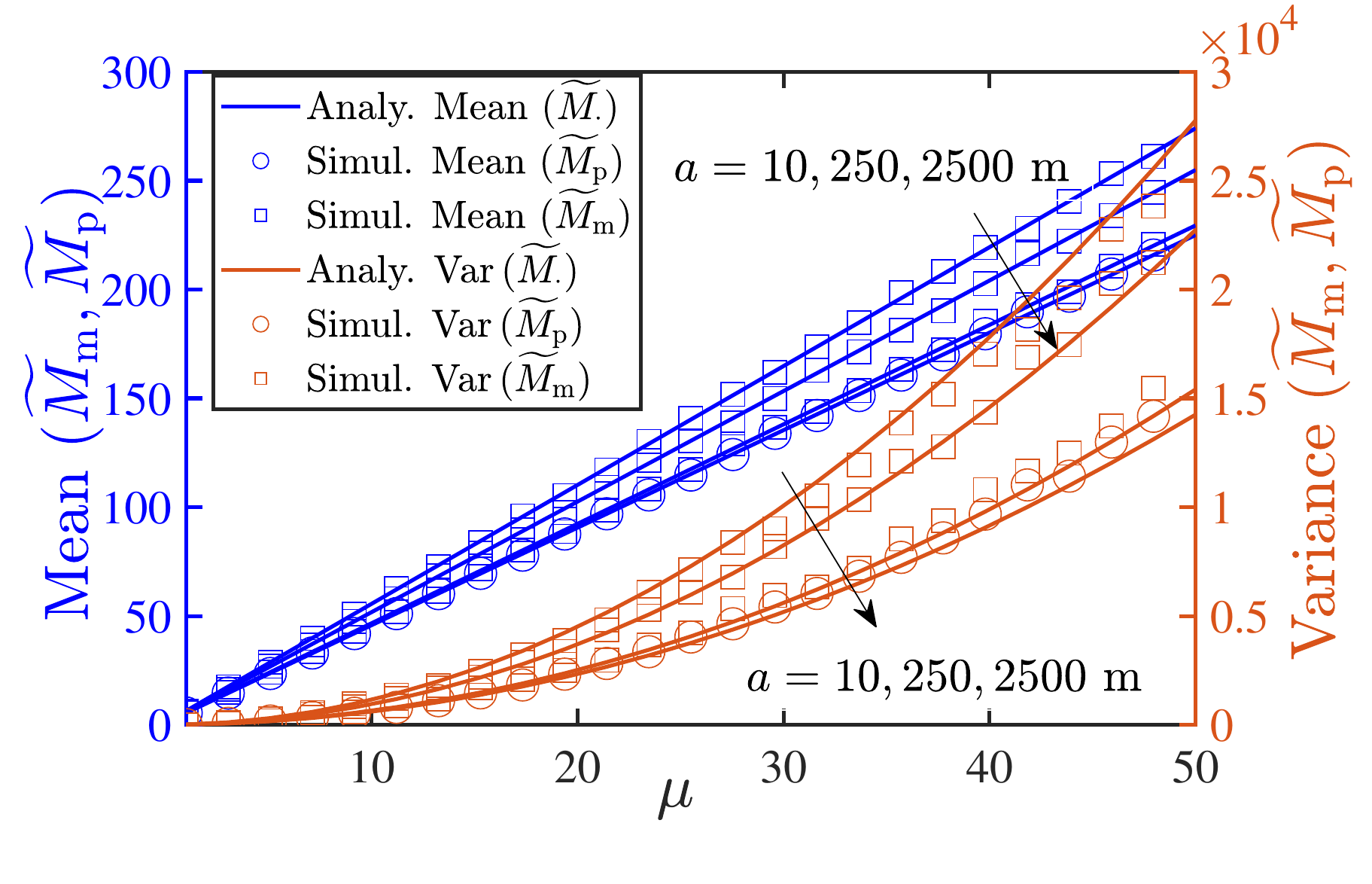} }
	\label{img2}
\caption{The mean and variance of the load on (a) typical ($\widetilde{S}_{\mrm},\,\widetilde{S}_{\prm}$) and (b) tagged ($\widetilde{M}_{\mrm},$  $\widetilde{M}_{\prm}$) cell with varying per-road vehicular density  $\mu_\mrm=\mu_\prm=\mu$. Here, for N-PTS, $\lambda$ is varied as $\lambda=\mu$  per \si{km}, while for PTS, $m$ is varied as $m=\mu/\lambdaPT$, while keeping $\lambdaPT=1$ \si{platoon/km}. Here, $\densityBS=5$ \si{BS/km^{2}}, $\lambda_\roads=5$ /\si{km}, and $a$ is in \si{meters}. As $a$ increases, the variance of $\widetilde{S}_{\mrm}$ and $\widetilde{M}_{\mrm}$ converges to variance of $\widetilde{S}_{\prm}$ and $\widetilde{M}_{\prm}$, respectively.}	\label{meanvarianceplot}
\adjustfigspace
\adjustfigspace
\end{figure}
\subsection{Mean and variance of the load on the typical cell }
Fig. \ref{meanvarianceplot}(a) shows the mean and variance for the approximate load on the typical cell with respect to per-road vehicular density $\mu_\mrm=m\lambdaPT$ for different values of the platoon radius $a$.
From Cor. \ref{meanloadsm}, the mean load on the typical cell depends linearly on density $\densityPTS=\mu_\mrm \densityRoads\pi$ but does not depend on $a$.  This is also evident from the numerical results.
It can be observed {further} that {the} variance grows quadratically with $\mu_\mrm$ {which is consistent with   \eqref{variance_s}.}
For small $a$, vehicles are concentrated close to the platoon centers because of which all the vehicles of a given platoon will very likely contribute to the load of a single BS. However, as we increase $a$, vehicles are more spread out, which decreases the variance of load on the typical BS.
We also present the respective metrics for {N-PTS}. Here, we keep $\mu_\prm=\mu_\mrm$ such that the mean load will be the same for {N-PTS} and PTS. Further, the variance for the PTS case is higher than N-PTS, and it becomes equal to N-PTS for very large $a$. This convergence is due to the fact that the MCP$(\lambda_\pt,m,a)$ converges to PPP($m\lambdaPT$), as $a\rightarrow\infty$
	\cite{pandeykth}.
\subsection{Mean and variance of the load on the tagged cell }
Fig. \ref{meanvarianceplot}(b) {presents} the mean and variance for the approximate load on the tagged cell with respect to per-road vehicular density $\mu_\mrm=m\lambdaPT$ for different values of  the platoon radius $a$. Here, mean and variance of $\widetilde{M}_{\mrm}$ are higher than those of $\widetilde{M}_{\prm}$. 
In PTS, the occurrence of the typical point adds points of the associated platoon in the load. Therefore, the mean and the variance of the load is higher in PTS compared to N-PTS. Further, as $a\rightarrow\infty$, the two scenarios become equivalent and the effect of the additional factor vanishes.

\begin{figure}[ht!]	
	\centering
	{\bf\small (a)}
	{\includegraphics[width=\figsize]{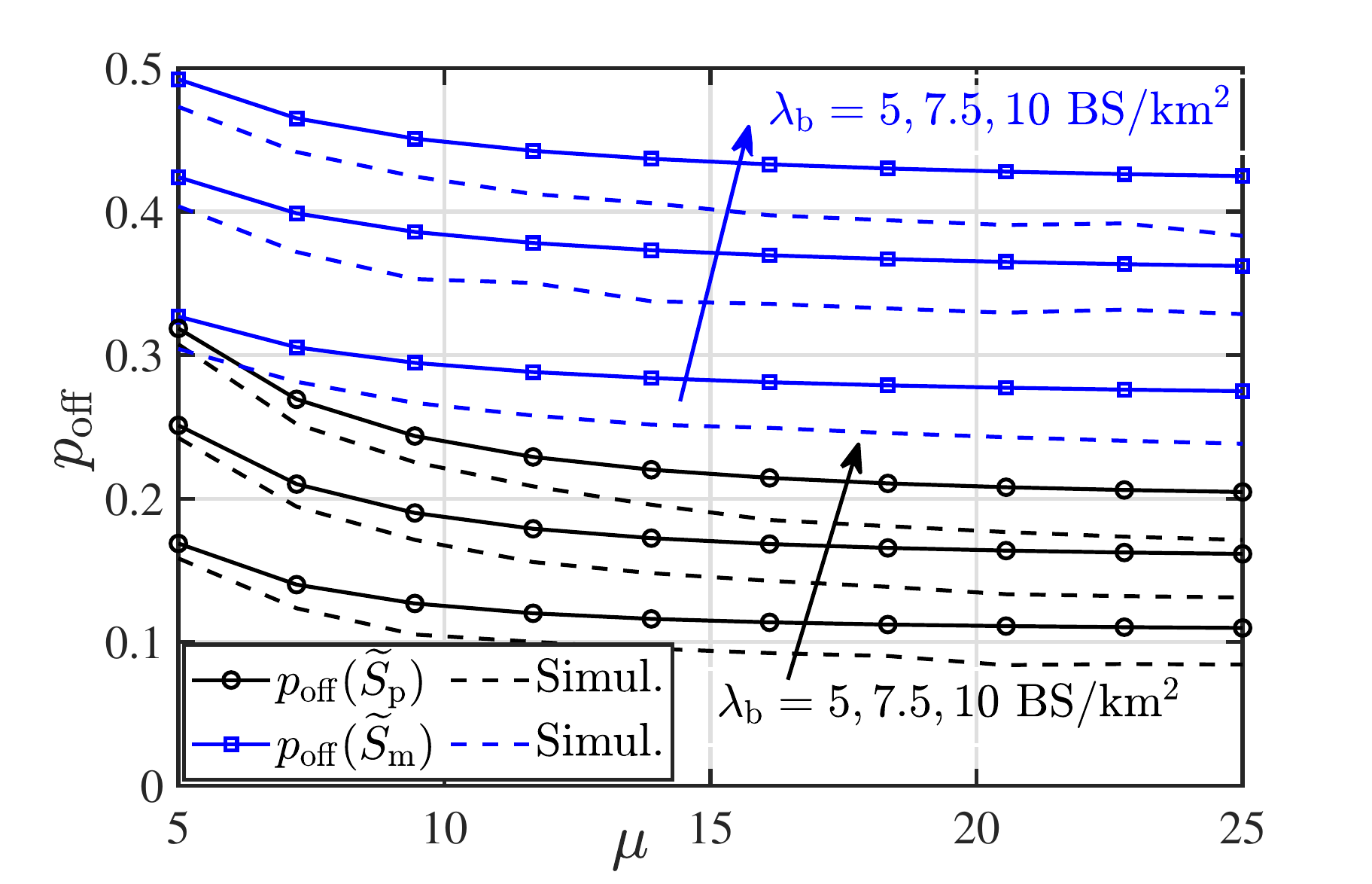} }
{\bf \small (b)}
	{\includegraphics[width=.43\textwidth]{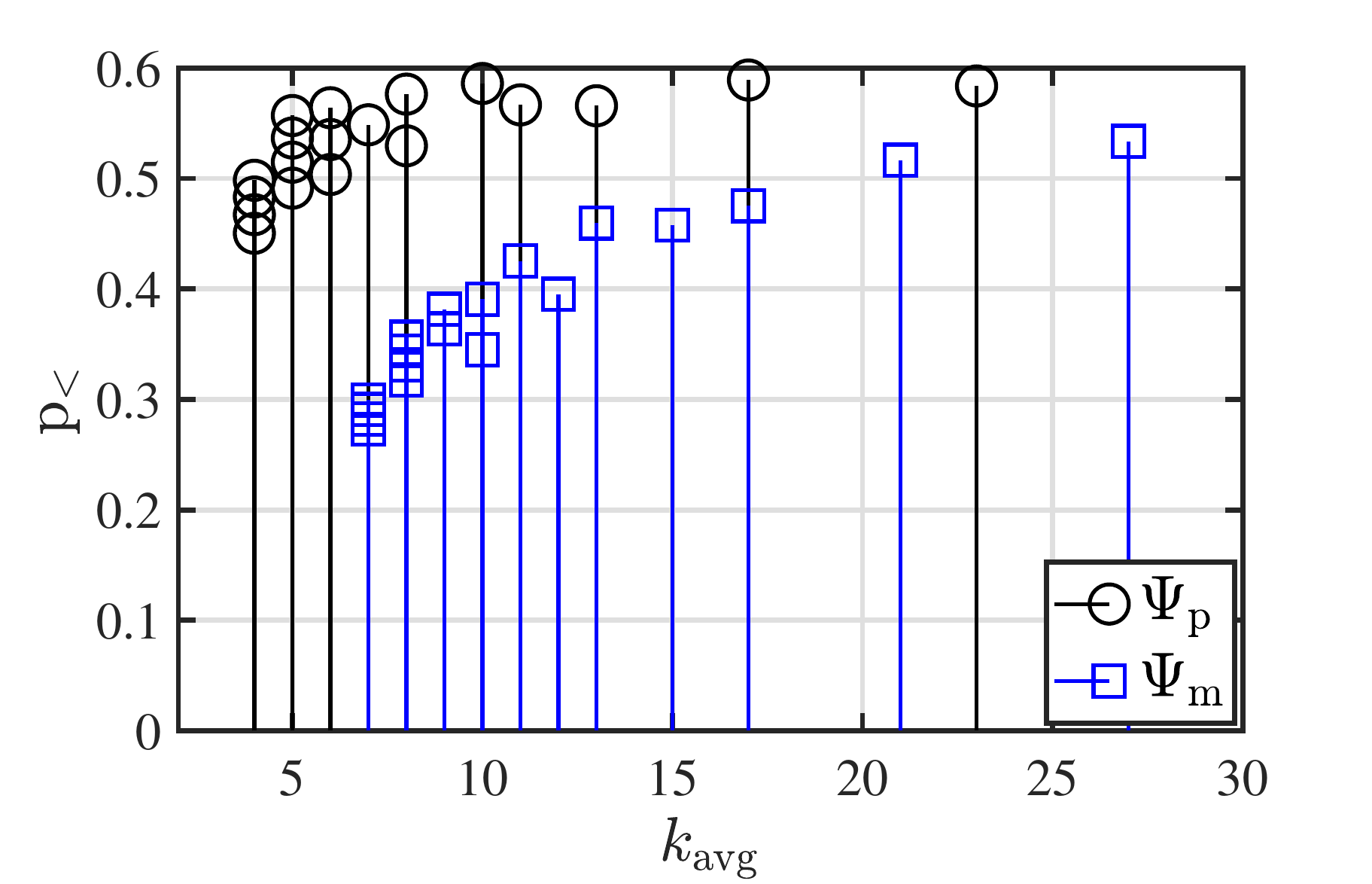} }
\caption{(a) The variation of $\off$ with respect to per-road vehicular density $\mu_\mrm=\mu_\prm=\mu$ for PTS and N-PTS. Here, for N-PTS $\lambda$ is varied as $\lambda=\mu$, {and for} PTS, $m$ is varied as $m=\mu/\lambdaPT$ while keeping $\lambdaPT=1$ \si{platoon/km}.
The fraction of BSs being switched off  is higher in PTS as compared to {N-PTS,} indicating a better energy efficiency.  (b) Variation of $\avg$ with $\kavg$. Here, $\mu=15$ \si{vehicles/km}. 
{N-PTS} has better underload probability. Here,  $a=250$ m in PTS. 
$\lambda_\bs$ is in  /\si{km^2}.
}
\label{fig:5}
\adjustfigspace \adjustfigspace
\end{figure}
\subsection{Impact of platooning on the energy efficiency of the typical cell}
To further understand the typical BS load, we will evaluate two additional metrics $\kavg$ and $\avg$. Here, $\kavg$ is defined as the mean load of the typical BS when it is active, \ie
 \begin{align*}
 	\kavg=\mathbb{E}\left[\widetilde{S}_{\cdot}\vert \widetilde{S}_{\cdot}>0 \right]
 =\expect{S_\cdot}/p_\mathrm{on}.
 \end{align*}
The second metric $\avg$ denotes the probability that the load on the typical active BS is less than the $\kavg$ \ie
\begin{align*}
	\avg=\mathbb{P}\left[\widetilde{S}_{\cdot}\le k_{\mathrm{avg}}\vert \widetilde{S}_{\cdot}>0\right].
\end{align*}
Note that $\avg$ represents the fraction of time the system is in a very safe operational regime. Fig \ref{fig:5}(a) presents the off probability $\off$ of the typical BS (which also represents the fraction of BSs staying silent) in PTS and N-PTS scenario with respect to per-road vehicular density $\mu=\mu_\mrm=\mu_\prm$. We observe that $\off$ is higher in PTS as compared to the N-PTS indicating that the energy consumption in PTS is less than N-PTS. 
 Fig. \ref{fig:5}(b) shows the variation of $\avg$ with active load $\kavg$ by varying $\lambda_{\bs}$ from $2-30$ \si{BS/km^{2}} while keeping the rest of the parameters fixed. As expected, the load $\kavg$ on active BSs decreases with $\lambda_\bs$. Further, $\kavg$ is high in PTS due to a lower fraction of BSs staying active as compared to N-PTS. Due to relatively higher load in PTS, safe-operating probability $\avg$ gets lower in PTS.  
\begin{figure}[ht!]
{\includegraphics[width=.321\linewidth]{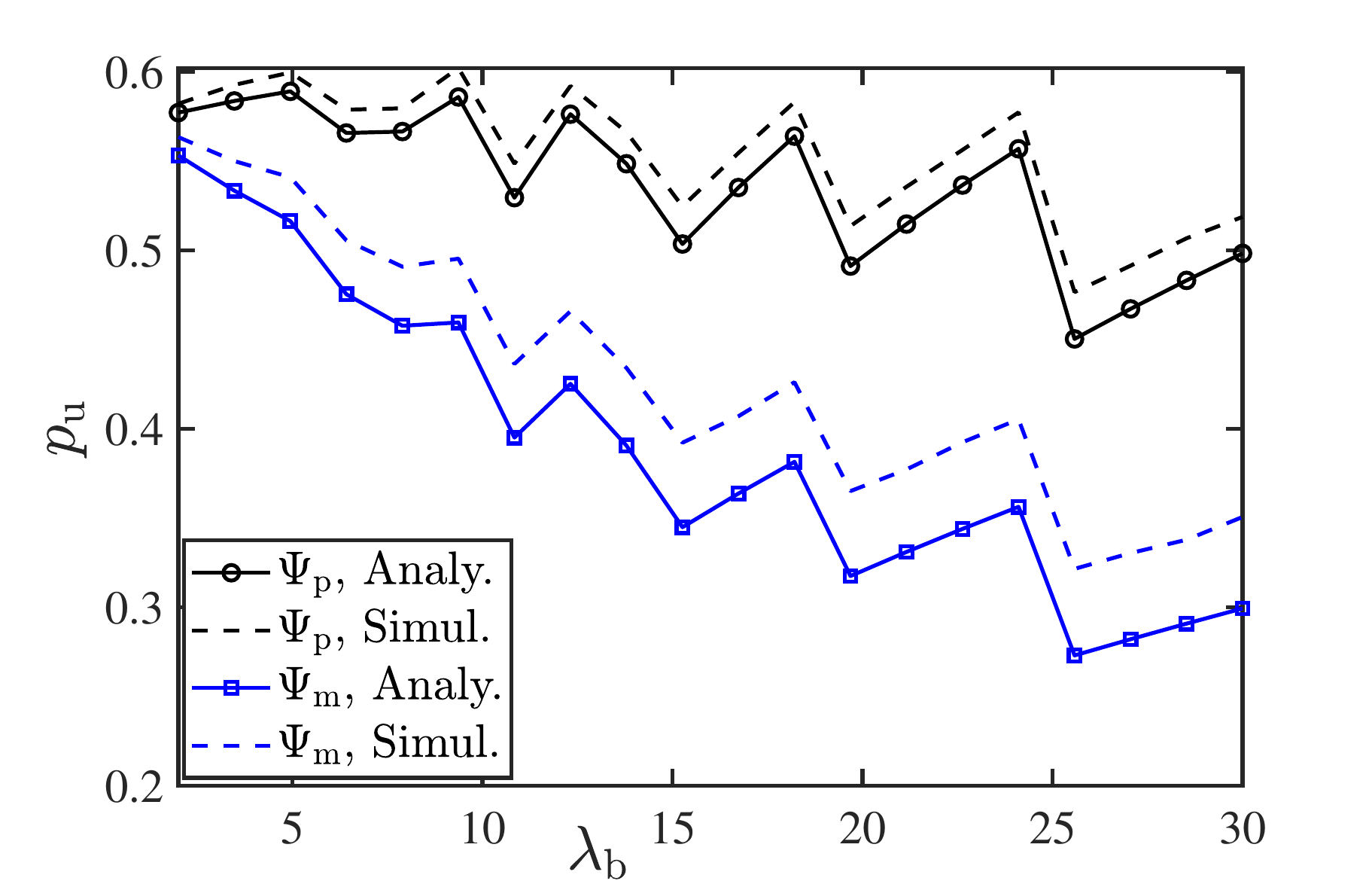} }
	\centering
	{\includegraphics[width=.321\linewidth]{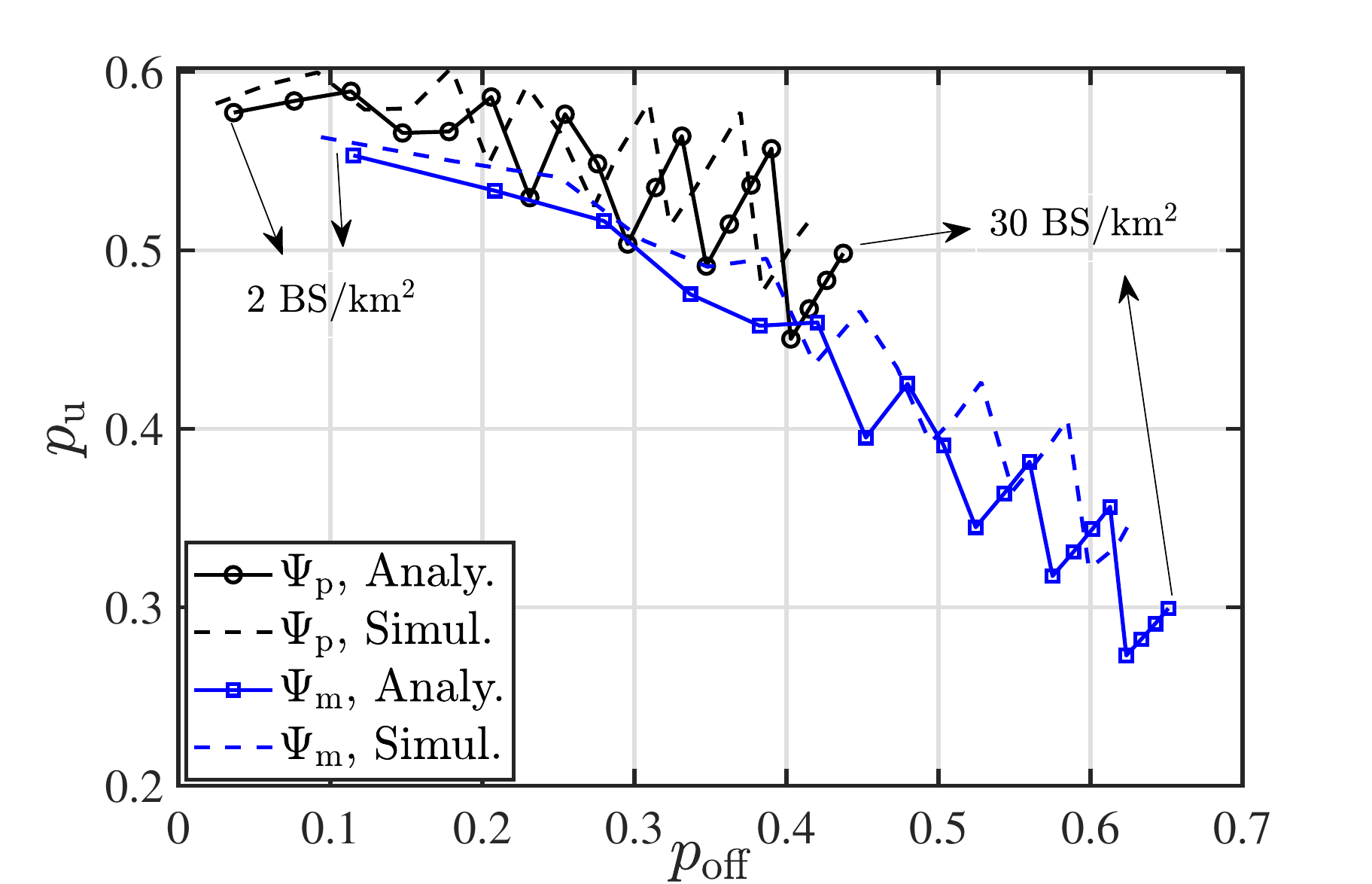} }
 {\includegraphics[width=.321\linewidth]{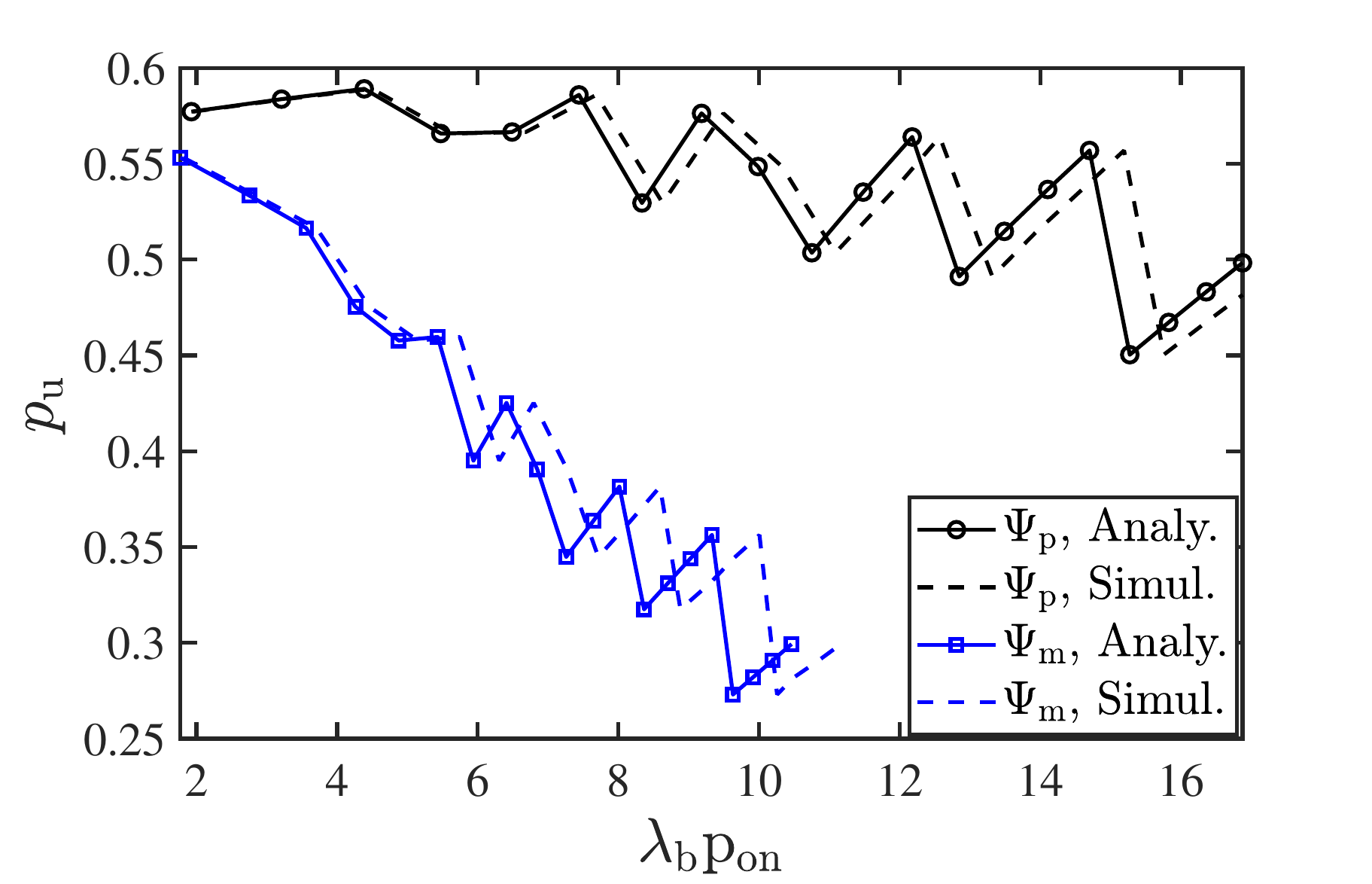} }\\\vspace{-.05in}
\bf \tiny (a) \hspace{2in} (b)\hspace{2in}	 (c)
\caption{ 
Variation of $\avg$ with respect to BS density, active BS density and off probability for {N-PTS} and PTS. 
Here, for {N-PTS} $\lambda$ is varied as $\lambda=\mu$, and for PTS, $m$ is varied as $m=\mu/\lambdaPT$ while keeping $\lambdaPT=1 \si{platoon/km}$.  $\mu=15$ \si{vehicles/km}. 
}
\adjustfigspace
\label{fig:6}
\end{figure}
Fig. \ref{fig:6}(a)-(c) presents variation of $\avg$ with respect to BS density, active BS density and off probability for N-PTS and PTS. Here, $\avg$ decreases with the densification of BSs which is intuitive. Small increments found in $\avg$ at some densities are due to the discrete nature of summation in the definition of $\avg$. {Elaborating further, first note that the $\widetilde{S}$ can take only integer values. When $\kavg$ decreases,  the number of individual PMF terms may not decrease if the change in $\kavg$ is fractional. However due to a decrease in the mean, the individual PMF terms increase, resulting in a net increase in $\avg$.}
In Fig. \ref{fig:6}(c), we compare $\avg$ between PTS and N-PTS by equating the off probability. We can see that for the same level of off probability, $\avg$ is almost the same in both the cases.
In Fig. \ref{fig:6}(c), we compare $\avg$ between PTS and {N-PTS} by equating the active BS density. Since the active probability is significantly lower in PTS, we can observe that at the same value of $\lambda_\active=\lambda_\bs$, $\avg$ in N-PTS is lower.
\begin{figure}[ht!]
		\centering
		{\small \bf (a)} \includegraphics[width=.29\linewidth]{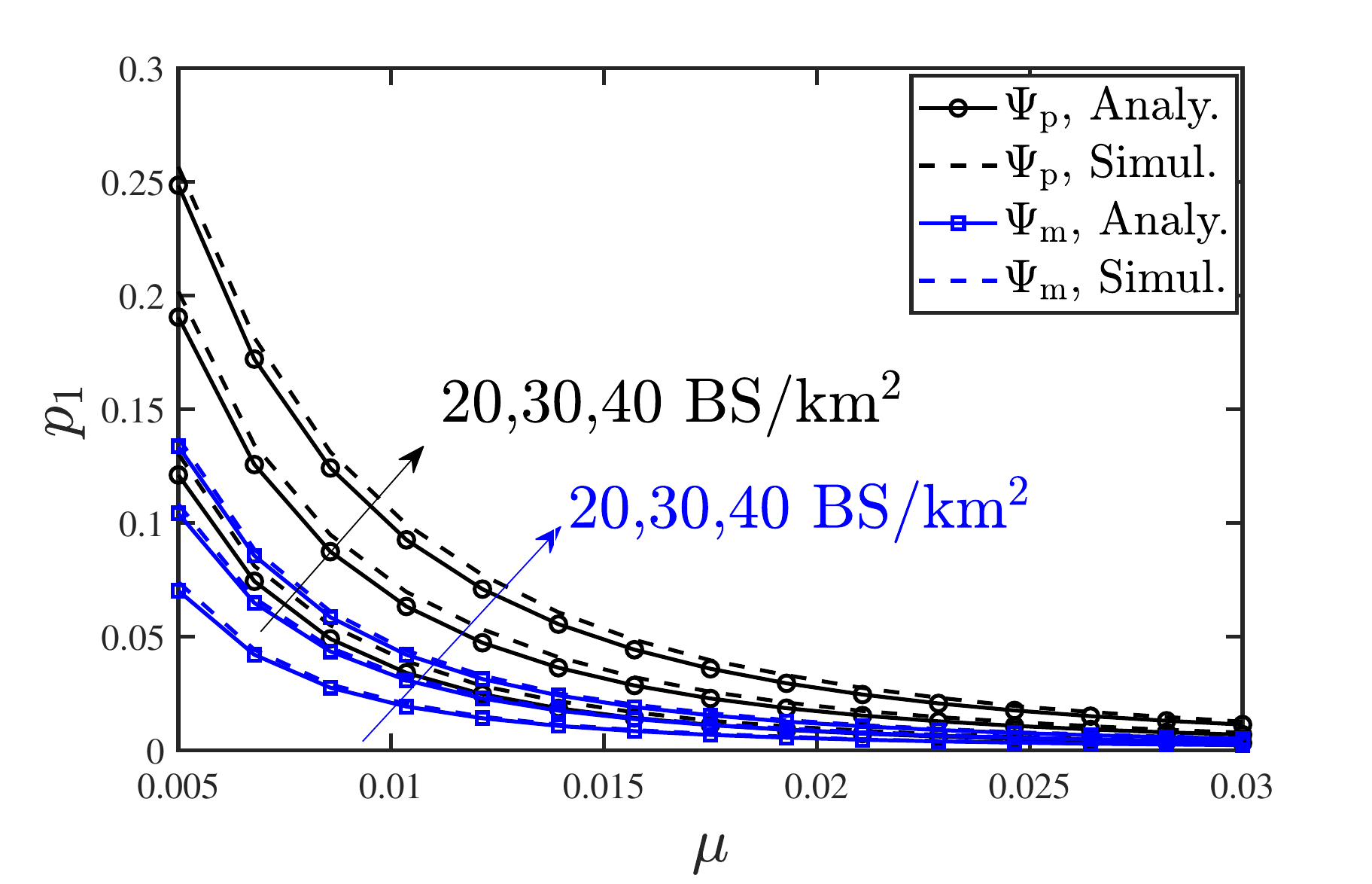} 
{\small \bf (b)}
	\includegraphics[width=.29\linewidth]{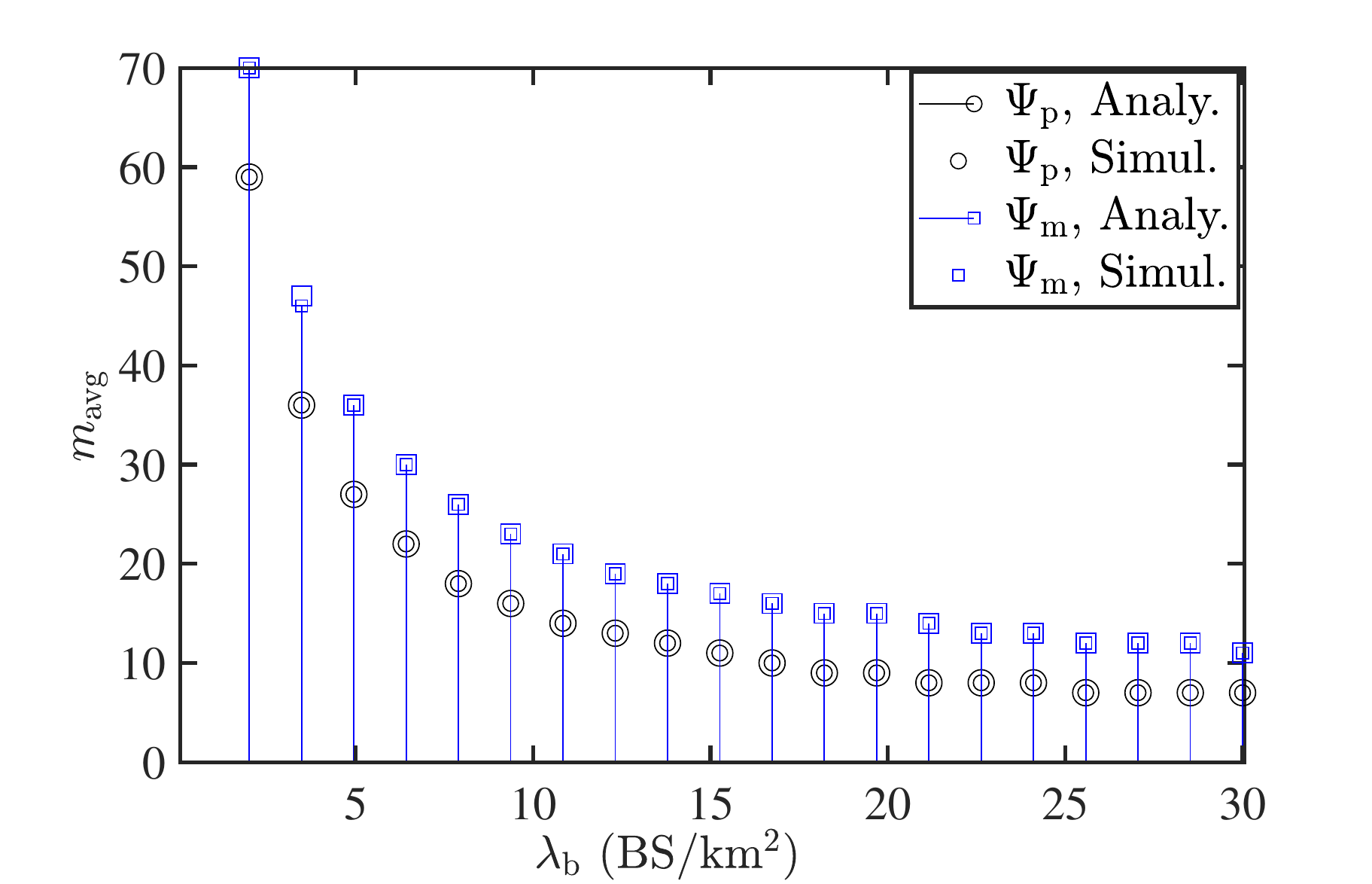} 
{\small \bf (c)}
	\includegraphics[width=.29\linewidth]{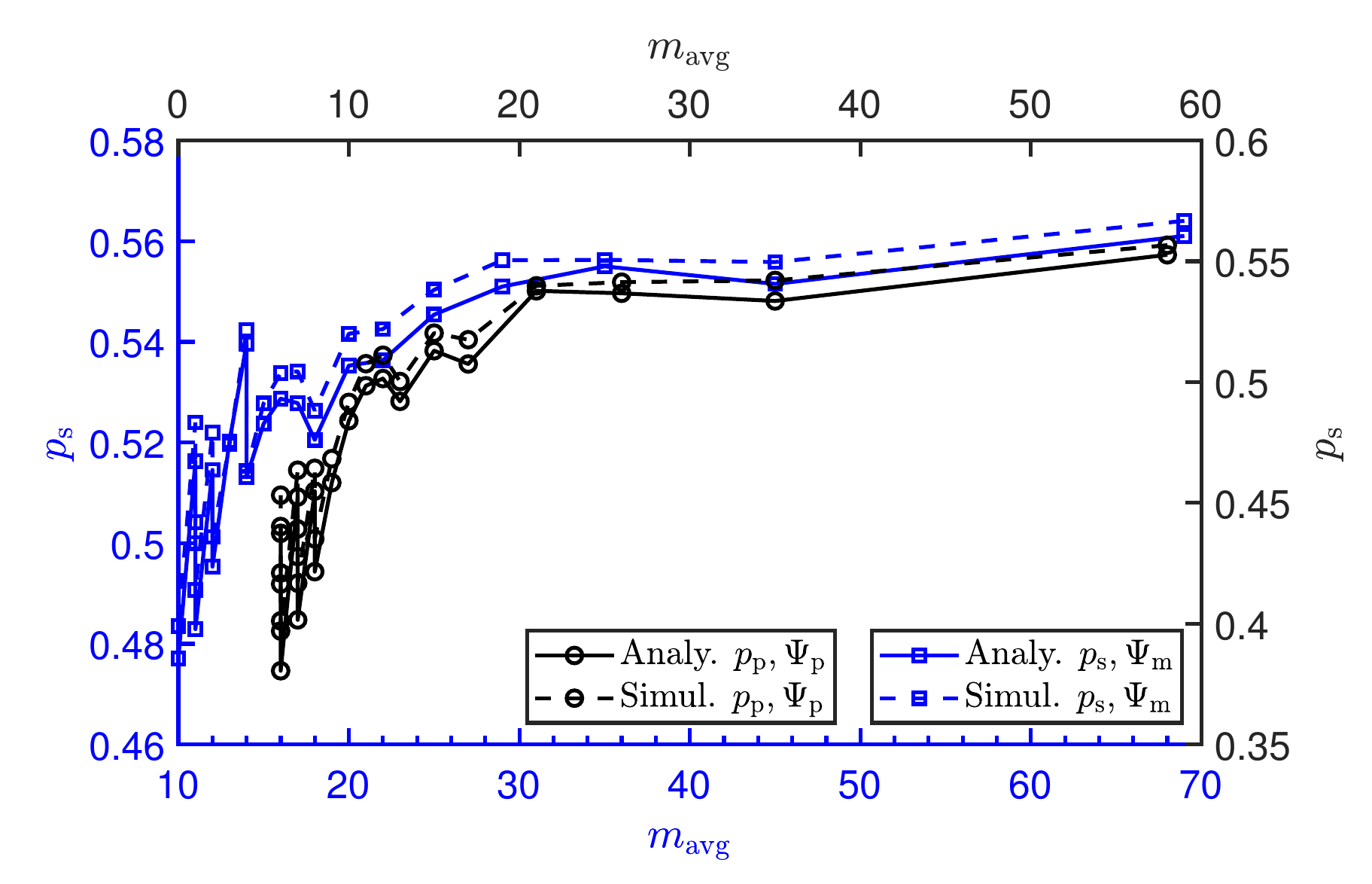} 
\caption{ (a) The variation of $\Pmin$ with respect to per-road vehicular density $\mu=\mu_\mrm=\mu_\prm$, (b) mean load $\mavg$ on the tagged cell  with $\lambda_\bs$, and (c) safe-operating probability $\Pavg$ with respect to $\mavg$. In (a), $\lambda_{\roads}=2\, \si{km^{-1}}$, for (b), and (c) the parameters are $\lambda_{\roads}=5/\pi$ \si{km^{-1}},  $\lambda=m\lambdaPT=15$ \si{vehicles/km}. }
\adjustfigspace
\label{fig:7}
\end{figure}

\subsection{Load balance in the tagged cell}
Note that the mean load is not the only criteria for comparing two systems. For instance, it may not be optimal from the energy utilization perspective to activate a BS to just serve a single vehicle. In order to understand the effect of load distribution, we define the following metrics: single user probability $\Pmin$, the average load on tagged cell $\mavg$, and tagged safe-operating probability $\Pavg$, as follows
\[
\Pmin=	\mathbb{P}\left[\widetilde{M}_{\cdot}=1\right],\,
m_{\mathrm{avg}}=\mathbb{E}\left[\widetilde{M}_{\cdot}\right]
,\,\Pavg=\mathbb{P}[\widetilde{M}_{\cdot}\le \mavg].
\] 
Note that, a high $\Pmin$ represents that many BSs in the system are severely underloaded. From Fig \ref{fig:7}(a), 
we can observe that  $\Pmin$ in lower in PTS. This indicates that it is more likely in PTS that a BS is not wasting its power to just serve a single user. Fig \ref{fig:7}(b) shows the variation of mean load in the tagged cell which decreases with $\lambda_\bs$. Unlike the typical cell, the mean load on the tagged cell differs in PTS and {N-PTS}. Fig. \ref{fig:7}(c) shows $\Pavg$ with respect to $\mavg$ using the data obtained from Fig. \ref{fig:7}(b). 
We can observe that $\Pavg$ is higher in N-PTS for the same value of $\mavg$. Together with Fig. \ref{meanvarianceplot}(b), which shows that the variance of the load on the tagged cell is higher in PTS, we can see that   the spread of load distribution is higher in PTS.  This means that the tagged BS may have to support a higher number of users in PTS compared to N-PTS.
\begin{figure}[ht!]%
		{\small \bf (a)}
		{\includegraphics[width=.44\linewidth]{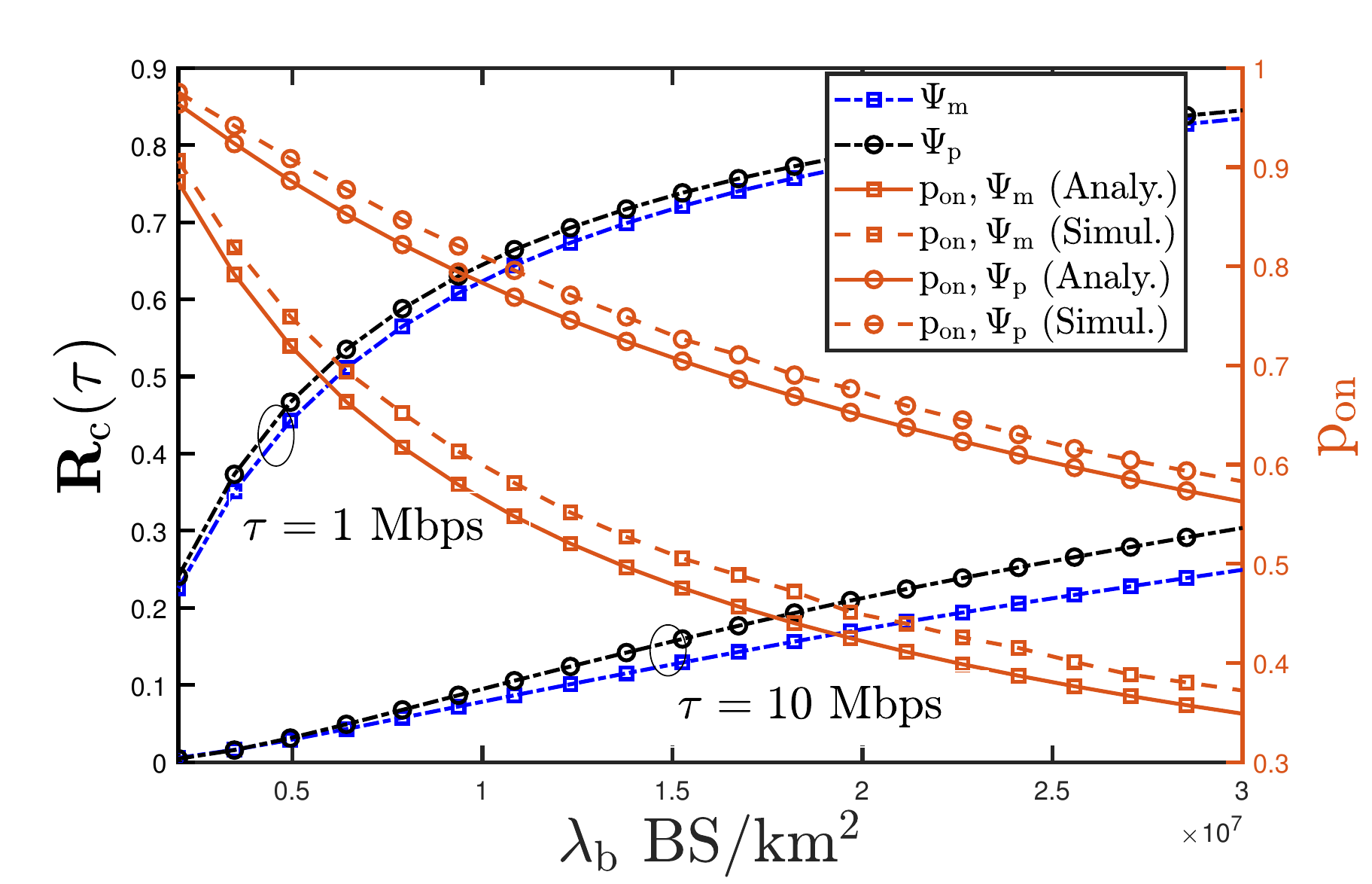} }
	{\small \bf (b)}
	{\includegraphics[width=.42\linewidth]{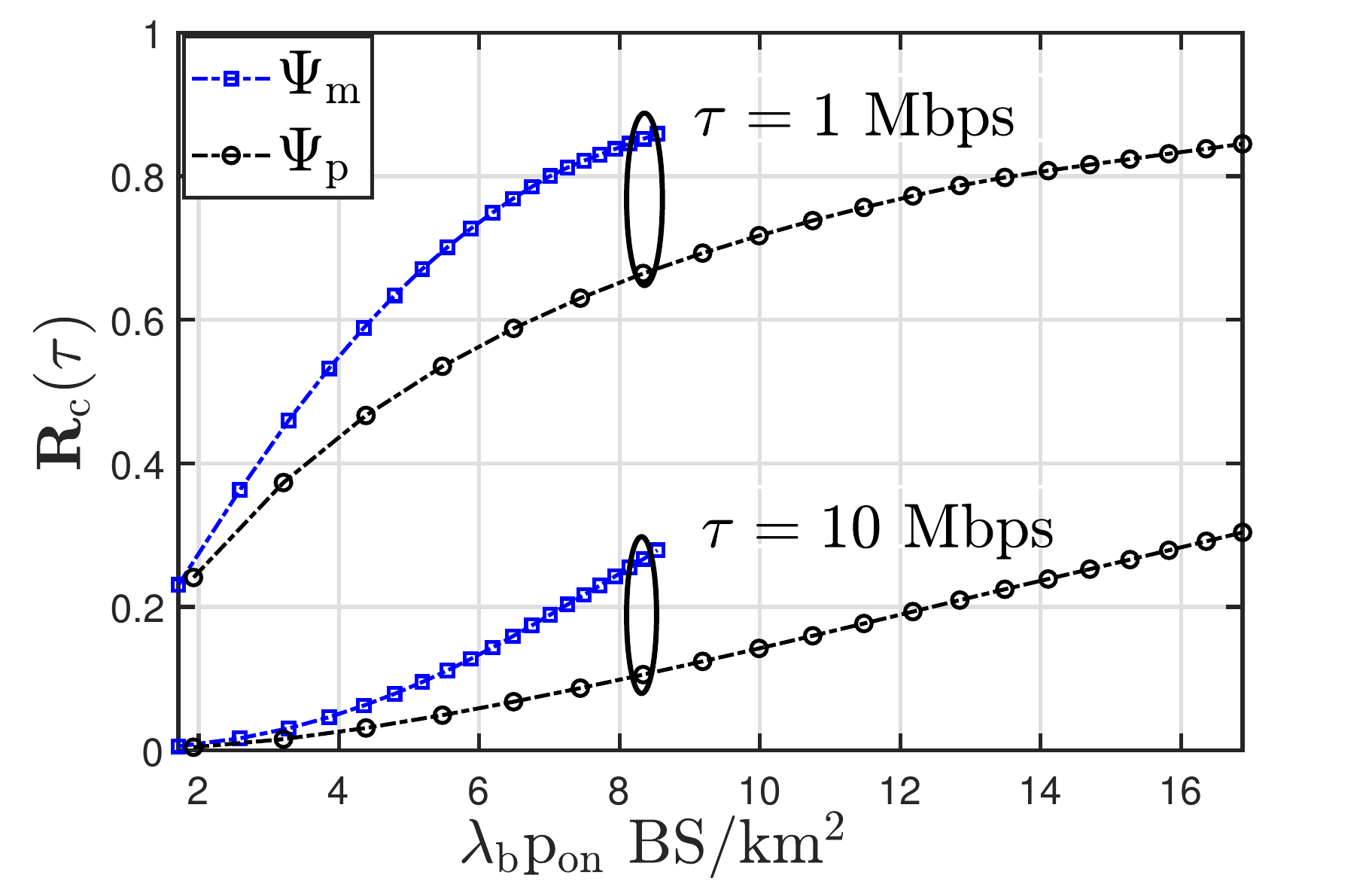} }
	
	\caption{ Impact of the BS density and active BS density on the rate coverage and the active probability for two different values of threshold $\tau$ in PTS and {N-PTS}. Here, $\alpha=3.5$, 
$B=20$ \si{MHz}, and the per-road vehicular density $\lambda=m\lambdaPT=15$ \si{vehicles/km}. Hence, the mean typical load (\ie B/(mean-load on typical cell)) varies between $.5$ \si{MHz/users} to $8$ \si{MHz/users}. 
}
	\label{fig:8}
	\adjustfigspace \adjustfigspace
\end{figure}

\subsection{Rate coverage analysis}
From the results thus far, we have observed that while the PTS has a higher off probability, it also has a higher per-BS load.  Hence, if the system bandwidth is enough to support the load, the PTS is expected to perform better than N-PTS in terms of rate coverage. Fig. \ref{fig:8}(a) shows the variation of rate coverage with respect to BS density for PTS and {N-PTS} with equal vehicular density ($\densityPTS=\densityITS$).
As expected, the rate coverage increases with the BS density. However, at any BS density, the rate coverage in PTS and N-PTS is almost equal, which may appear counter-intuitive. However, it must be noted that the active probability is significantly lower in PTS, which means that PTS can achieve almost the same rate coverage but at a much lower active BS density. This effect can be observed in Fig. \ref{fig:8}(b) which shows the variation of the rate coverage with respect to the active BS density for PTS and N-PTS. Here, we can observe that PTS can achieve significantly higher coverage than N-PTS. Further, the densification of BSs in PTS does not impact the power budget, but still results in an improvement in the rate coverage.  

\section{Conclusion and Future Scope}
In this paper, we have developed a comprehensive approach to the modeling and analysis of platooned vehicular traffic. The approach relies on a novel point process that captures vehicular platooning by explicitly capturing three layers of randomness: (i) irregular layout of the roads by modeling them as a PLP, (ii) randomness in the placement of the platoons on each road by modeling them as a PPP, and (iii) randomness in the location of each vehicle in a platoon by modeling them collectively as an MCP. After deriving several foundational results for this {\em triply stochastic} process, which we called PLP-MCP, we focused explicitly on the V2I communication network for platooned traffic consisting of BSs that serve the platooned traffic. For this setting, we present several key results related to the load distributions on the typical and the tagged BSs. Using these results, we derived the per-user rate coverage of this network and provided a detailed comparative analysis of the PTS and N-PTS cases to understand the effect of vehicular platooning on the performance of the resulting vehicular network. While the rate coverage of these two cases appear similar at the first glance, we defined and studied specific distributional properties of the underlying setup to expose subtle performance trends. Our results collectively demonstrated that the rate coverage of PTS is actually higher when we account for the active BS density. Since this paper presents the very first comprehensive analytical approach to the study of platooned vehicular traffic, there are naturally many extensions possible. Most importantly, it will be interesting to consider an additional tier of roadside units, which are an important component of the emerging vehicular networks. This will require an almost new analysis starting from the distributional results for the load on the typical and the tagged cells. It will also be useful to extend this framework to consider realistic scheduling algorithms.


\appendix

\subsection{Proof of LF of $\Psi_{\mrm}$: Unconditioned and under Palm}\label{Laplcefunctional}
		The LF of $\Psi_{\mrm}$ is
\begin{small}
	\begin{align*}
		&\laplace{\Psi_{\mrm}}(v)=\mathbb{E}_{\Psi_{\mrm}}\left[e^{\left(-\sum_{{\z}_{k,j,i}\in\Psi_{\mrm}} v(\z_{k,j,i})\right)}\right]\\
		&{\stackrel{(a)}=\mathbb{E}_{{\roadPP}}\left[\prod\nolimits_{{\road_{i}\in{\roadPP}}}\mathbb{E}_{\Psi_{\road_{i}},\road_{i}}\left[\prod\nolimits_{\z_{k,j}\in{\Psi_{\road_{i}},\road_{i}}}e^{-v(\z_{k,j})}\right]\right]\stackrel{(b)}=\mathbb{E}_{\roadPP}\left[\prod\nolimits_{\road_{i}\in\roadPP}{G_{\Psi_{\roadf (\rho_{\road_{i}},\phi_{\road_{i}})},\roadf(\rho_{\road_{i}},\phi_{\road_{i}})}(e^{-v})}\right]},
	\end{align*}
\end{small}where $(a)$ is obtained conditioned on ${\roadPP}$, and $(b)$ is obtained by applying the PGFL of $1$D MCP located on a line {$\road_{i}$}. Finally, applying the PGFL of the PLP, we get the LF of $\Psi_{\mrm}$.
For LF under Palm, we assume that the typical point of $\Psi_{\mrm}$ is located at the origin without loss of generality. The LF under Palm  consists of the product of two terms - the LF of an independent and ``unconditioned copy'' of $\Psi_{\mrm}$, and the LF of the MCP on the tagged line
$\road_{\ob}=\roadf\left(0,\phi\right)$. Hence,
\begin{small}
	\begin{align*}
		&\laplace{\Psi_{\mrm}}^{!{\ob}}(v)=\mathbb{E}^{!\ob}\left[\exp\left(- \sum\nolimits_{\z_{k,j,i}\in\Psi_{\mrm}\setminus\{\ob\}} \!\!\!\! v(\z_{k,j,i})\right)\right]=\mathbb{E}_{\Psi_{\mrm}}\left[\exp\left(- \sum\nolimits_{\z_{k,j,i}\in\Psi_{\mrm}}\!\!\!\!v(\z_{k,j,i})\right)\right]\mathbb{E}_{\Psi_{\road},\road}^{\ob!}\left[\exp\left(-v(\z_{k,j,0})\right)\right]\\
		&\stackrel{(a)}=\laplace{\Psi_{\mrm}}(v)\mathbb{E}^{\ob!}_{\Psi_{\road},\road}\left[\prod\nolimits_{\z_{k,j,0}\in \Psi_{\road}\setminus\{\ob\}}\exp\left(-v(\z_{k,j,0})\right)\right]\\
		&\stackrel{(b)}=\laplace{\Psi_{\mrm}}(v)\mathbb{E}_{\Phi}\left[\mathbb{E}_{\Psi_{\road},\road|\phi}^{\ob!}\left[\prod\nolimits_{\bm{z}_{k,j,0}\in \Psi_{\road}\setminus\{\ob\}}\exp\left(-v(\bm{z}_{k,j,0})\right)\right]\right]\stackrel{(c)}=\laplace{\Psi_{\mrm}}(v)\pi^{-1}\int_{0}^{\pi}G_{\Psi_{\road},\road}^{!{\ob}}(\exp(-v))\dv \phi.
	\end{align*}
\end{small}
Here, $(a)$ is obtained by applying the PGFL for $\Psi_{\mrm}$, $(b)$ is due to conditioning on the orientation of tagged line $\roadf\left(0,\phi\right)$, and $(c)$ is obtained by applying the PGFL of $1$D MCP followed by deconditioning over the RV $\phi$, and $G_{\Psi_{\road},\road}^{!{\ob}}(\cdot)$ is given in \eqref{under_palm}.

\subsection{Distribution of $S(r)$}\label{prooftheorem1} 

\textbf{Proof of Theorem \ref{theorem1}:}
The number of vehicles $S(r)$ inside ball $\bt_{2}(\ob,r)$ is 
\[	S(r)=\sum\nolimits_{\road_i\in \roadPP,\rho_{\road_i}\in[-r,r]}N_{\road_i}.\]
Recall that $N_{\road_{i}}=\Psi_{\road_{i}}\left(\bt_{2}(\ob,r)\right)$ denotes the number of vehicles on $\road_i=\roadf\left(\rho_{\road_{i}},\phi_{\road_{i}}\right)$ falling inside  $\bt_{2}(\ob,r)$. 
The condition indicates that the distance of the line from the origin ($\rho_{\road_i}$) needs to be inside the range $[-r,\,r]$ for that line to have at least one point inside $\bt_{2}(\ob,r)$ \cite{chiu2013stochastic}.
 Now, RVs $\{N_{\road_{1}},N_{\road_{2}},\ldots\}$ are independent and identically distributed (iid),
hence PGF of $S(r)$ is
\begin{small}
\begin{align*}
{	\mathcal{P}_{S(r)}(s)}=\mathbb{E}\left[\prod\nolimits_{\road_i\in \roadPP,\rho_{\road_i}\in[-r,r]}	\mathcal{P}_{N_{\road_{i}}}(s,r)\right]=
	\expect{\prod\nolimits_{\road_i\in \roadPP,\rho_{\road_i}\in[-r,r]}	\exp\left(g\left(s,\sqrt{r^2-\rho_{\road_{i}}^2}\right)\right)}
	,
\end{align*}
\end{small}%
where the PGF of $N_{\road_{i}}$ is given by \eqref{PGFMCP}. Since $\rho_{\road_{i}},\phi_{\road_{i}}$ are points of a PPP in $\mathbf{C^{*}}$, 
using PGFL of PPP \cite{SGBook2022}, we get the desired result.

To get probability $\mathbb{P}[S(r)=k]$, we require the $k$th derivative of PGF. 
If we define \begin{small}
	\begin{align*}
		&f_\mrm(s,r)=2\pi{\densityRoads} \int_{0}^{r}\left(\exp (g(s,\sqrt{r^2-\rho^2}))-1\right)\dv \rho,
	\end{align*}%
\end{small}%
PGF {$\mathcal{P}_{S(r)}(s)$} takes the form of $\exp(f_\mrm(s,r))$. Hence, we use the Faà di Bruno's formula \cite{faadibruno} to get \eqref{PSk}. To get $k$th derivative {$f^{(k)}_\mrm(r)$} of {$f_\mrm(s,r)$ at $s=0$}, we need to apply the Faà di Bruno's formula one more time to get \eqref{fsk}.

\textbf{Proof of Corollary \ref{cor1.1}}: 
The first derivative of  {${\mathcal{P}_{S(r)}(s)}$} is  
\begin{small}
	\begin{align}
		&{{\mathcal{P}^{(1)}_{S(r)}(s)}=2\pi{\densityRoads}\mathcal{P}_{S(r)}(s)\frac{\dv}{\dv s}\int_{0}^{r}\left(\expS{g(s,\sqrt{r^2-\rho^2})}\right)\dv \rho.} \label{firstderivative}
	\end{align}
\end{small}
Replacing $s=1$ in \eqref{firstderivative} and solving further, we get the mean of $S(r)$. Similarly, the second derivative $\mathcal{P}^{(2)}_{S(r)}(s)$ of ${\mathcal{P}_{S(r)}(s)}$ is
\begin{small}
	\begin{align*}
		&={\mathcal{P}_{S(r)}(s)}\left(\frac{\dv}{\dv s}\int_{0}^{r}2\pi{\densityRoads}\left(\expS{g(s,\sqrt{r^2-\rho^2})}\right)\dv \rho\right)^{2}+{\mathcal{P}_{S(r)}(s)}\frac{\dv^{2}}{\dv s^{2}}\int_{0}^{r}2\pi{\densityRoads}\left(\expS{g(s,\sqrt{r^2-\rho^2})}\right)\dv \rho\\
		&
		=\left(\mathbb{E}\left[S(r)\right]\right)^{2}+2\pi{\densityRoads}\int_{0}^{r}\left[\left(g^{(1)}(1,\sqrt{r^2-\rho^2})\right)^{2}+g^{(2)}(1,\sqrt{r^2-\rho^2})\right]\dv \rho.
	\end{align*}
\end{small}
Using the second derivative of PGF of $S(r)$, we derive the variance $\mathrm{Var}(S(r))$ of $S(r)$ as
\begin{small}
	\begin{align}\label{integration}
		\mathrm{Var}(S(r))&={\mathcal{P}^{(2)}_{S(r)}(s)}+\mathbb{E}\left[S(r)\right]-\left(\mathbb{E}\left[S(r)\right]\right)^{2}\nonumber\\
		&=2\pi{\densityRoads}\int_{0}^{r}\left[\left(g^{(1)}(1,\sqrt{r^2-\rho^2})\right)^{2}+g^{(2)}(1,\sqrt{r^2-\rho^2})\right]\dv \rho+\densityPTS\pi r^{2}.
	\end{align}
\end{small}
From \eqref{g(0,rho)}, $g^{(1)}(1,\sqrt{r^2-\rho^2})$ and $g^{(2)}(1,\sqrt{r^2-\rho^2})$ is
\begin{small}
	\begin{align}
		&g^{(1)}(1,\sqrt{r^2-\rho^2})=2\lambdaPT\left[\lambda_{\drm}\beta(\sqrt{r^{2}-\rho^2})\times |\sqrt{r^{2}-\rho^2}-a|+\frac{\lambda_{\drm}}{2}{\left(\beta(\sqrt{r^2-\rho^2})\right)^2
		}\right],\label{first_derivative}\\
		&g^{(2)}(1,\sqrt{r^2-\rho^2})=2\lambdaPT\left[\left(\lambda_{\drm}\beta(\sqrt{r^{2}-\rho^2})\right)^{2}|\sqrt{r^{2}-\rho^2}-a|+\frac{\lambda_{\drm}^{2}}{3}{\left(\beta(\sqrt{r^2-\rho^2})\right)^{3}}\right]\label{second_derivative}.
	\end{align}
\end{small}
We can simplify the integrals presented in \eqref{integration} based on the value of $a$ as follows.
\textbf{Case I}: {If $a>r$,}  $\beta(\sqrt{r^2-\rho^2})=2\sqrt{r^2-\rho^2}$. Hence, 
\begin{small}
	\begin{align*}
		&\int_{0}^{r}\left(g^{(1)}(1,\sqrt{r^{2}-\rho^{2}})\right)^2\dv \rho=\frac{32}{3}(a\lambdaPT\lambda_{\drm})^{2}r^3,\,\int_{0}^{r}\left(g^{(2)}(1,\sqrt{r^{2}-\rho^{2}})\right)\dv \rho=8\lambdaPT\lambda_{\drm}^2\left(\frac{2}{3}a r^3-\frac{1}{16}\pi r^4\right).
	\end{align*}%
\end{small}%
Substituting the above values in  \eqref{integration}, we get the variance of $S(r)$ for $a>r$.

\textbf{Case II}:
{If $a<r$,} for  $0<\rho<\sqrt{r^2-a^2}$, $\beta(\sqrt{r^2-\rho^2})=2a$, and when $\sqrt{r^2-a^2}<\rho<r$, $\beta(\sqrt{r^2-\rho^2})=2\sqrt{r^2-\rho^2}$. Hence, 
\begin{small}
	\begin{align*}
		&\int_{0}^{r}\left(g^{(1)}(1,\sqrt{r^{2}-\rho^{2}})\right)^2\dv \rho=\frac{8a^2r^3}{3},\\
		&\int_{0}^{r}\left(g^{(2)}(1,\sqrt{r^{2}-\rho^{2}})\right)\dv \rho=\int_{0}^{\sqrt{r^2-a^2}}\left(g^{(2)}(1,\sqrt{r^{2}-\rho^{2}})\right)\dv \rho+\int_{\sqrt{r^2-a^2}}^{r}\left(g^{(2)}(1,\sqrt{r^{2}-\rho^{2}})\right)\dv\rho.	
	\end{align*}
\end{small}
On substituting $\beta(\sqrt{r^{2}-a^2})$'s value and further manipulations, we get the desired result. 

\subsection{Distribution of $\hat{S}_\mrm(r)$ }\label{prooftheorem2}
\textbf{Proof of Theorem \ref{theorem2}:}
From Theorem \ref{thm:slivE}, we get
\begin{align}
\hat{S}(r)=\vehPP^{!}(\bt_{2}(\ob,r))|\ob\in \vehPP \stackrel{(d)}{=} \vehPP(\bt_{2}(\ob,r)) + \Psi_{\ell_\ob} (\bt_{2}(\ob,r)) + \dauP_{\bm{x}_\ob}(\bt_{2}(\ob,r)),
\end{align}
where the three RVs in RHS are independent. Hence, the PGF of $\hat{S}_\mrm(r)$ is the product of 3 PGFs: the PGF of $S(r)$ $f_{1}(s)={\mathcal{P}_{S(r)}(s)}$, the PGF of $\Psi_{\ell_\ob} (\bt_{2}(\ob,r))$ which is $f_2(s)=\exp\left(g(s,r)\right)$ and the PGF of  $\dauP_{\bm{x}_\ob}(\bt_{2}(\ob,r))$ which is $f_3(s)={a^{-1}}\int_{0}^{a}e^{(s-1)\lambda_{\drm}\A_{1}(r,a,x)}{\dv x}$ \ie 
\begin{small}
	\begin{align}\label{proofofcor2}
		&{\mathcal{P}_{\hat{S}(r)}(s)}=f_1(s,r)f_2(s,r)f_{3}(s,r).
	\end{align}
\end{small}
\textbf{Proof of Corollary \ref{cor_vehicle}}:
Applying generalized Leibniz rule \cite{generlizedlebnizrule} to compute the $k$th derivative of \eqref{proofofcor2} and then from \eqref{eq:pgfpmfrelation}, we get \eqref{PMFhatS}.
The derivative of $f_{2}^{(k)}(s,r)$  can be computed using  Faà di Bruno's formula. Further, 
\begin{small}
	\begin{align*}
		&f_3(s,r)=\int_{0}^{a}a^{-1}e^{(s-1)\lambda_{\drm}\A_{1}(r,a,x)}{\dv x},\,\text{and} \, f_{3}^{(k)}(s,r)=\int_{0}^{a}\left(\lambda_{\drm}
		\A_{1}(r,a,x)\right)^{k}a^{-1}e^{(s-1)\lambda_{\drm}\A_{1}(r,a,x)}{\dv x}.
	\end{align*}%
\end{small}%

\subsection{Proof of Theorem \ref{taggedchord}: Distribution of the tagged chord length}\label{prooftaggechord}
The joint CCDF of the lengths $L_1=\ob\mathbf{Q}_1$ and $L_{2}=\ob\mathbf{Q}_2$ can be written as
\begin{small}
	\begin{align*}
		\overline{F}_{L_{1},L_{2}}(l_1,l_2)=\mathbb{P}(L_{1}>l_{1},L_2>l_2)=	\mathbb{P}[A],
	\end{align*}
\end{small}
where event $
		A=\1\left(\mathbf{Q}_{1},\mathbf{Q}_{2} \text{ and  $\ob$ belong to the same cell}\right)$.
%
If we let $A_i$ be the event that $\mathbf{Q}_{1}$, $\mathbf{Q}_{2}$, and the origin $\ob$, all three locations lie in a single cell $V_{\y_{i}}$ of point $\mathbf{P}_{i}$ located at $\y_i$, 
then
\begin{small}
 \begin{align*}
	\indside{A}=\sum\nolimits_{\y_i\in\Phi_\bs} \indside{A_{i}}.
\end{align*}
	\begin{align*}
	\text{\normalsize Now, }	&A_{i}
		=\1 \left(\vphantom{\frac{text}{den}}\text{$\bt_{2}(\ob,y_{i})$, $\bt_{2}(\mathbf{Q}_{1},|\mathbf{P}_{i}\mathbf{Q}_{1}|)$ and}\vphantom{\frac{text}{den}}\text{ $\bt_{2}(\mathbf{Q}_{2},|\mathbf{P}_{i}\mathbf{Q}_{2}|)$ have no other point except $\mathbf{P}_{i}$}\right)\\
		&=\1\left(\text{$\bt_{2}(\mathbf{Q}_{1},r(l_{1}))$ and $\bt_{2}(\mathbf{Q}_{2},r(l_{2}))$ have no point except $\mathbf{P}_{i}$}\right).
	\end{align*}
\end{small}%
Hence,
\begin{small}
	\begin{align*}
		\overline{F}_{L_{1},L_{2}}(l_1,l_2)&=\prob{A}=\expect{\sum\nolimits_{\y_i\in\Phi_\bs} \indside{A_{i}}}
		\stackrel{(a)}=\lambda_{\bs}\int_{\R^{2}}\mathbb{P}\left(\Phi_\bs \text{ has no point in }\bt_{2}(\mathbf{Q}_{1},r(l_{1}))\cup\bt_{2}(\mathbf{Q}_{2},r(l_{2}))\right)\dv \y\\
		&\stackrel{(b)}={\lambda_{\bs}}\int_{0}^{2\pi}\int_{0}^{\infty}\exp\left(-{\lambda_{\bs}}\mathcal{V}(l_{1}+l_{2},r(l_{1}),r(l_{2}))\right)y \dv y \dv \theta,
	\end{align*}%
\end{small}%
where  $(a)$  is due to the Campbell-Mecke theorem \cite{SGBook2022} and $(b)$ is due to conversion in polar coordinates.
%
Now, we can compute the joint PDF $f_{L_{1},L_{2}}(l_1,l_2)$ from the joint CDF as
\begin{small}
	\begin{align*}
		&f^{''}_{L_1,L_2}(l_1,l_2)=\frac{\partial^2 \overline{F}_{L_{1},L_{2}}(l_1,l_2)}{\partial l_1 \partial l_2},
	\end{align*}
\end{small}%
which gives \eqref{eq:taggedchordjointpdf}. Now, Since $C_\ob=L_1+L_2$, the PDF of $C_\ob$ can be derived by integrating the joint PDF over the line. 

\subsection{Proof of Theorem \ref{theorem44}: {Distribution of load on the typical cell}}\label{prooftheorem44}
The number of points falling in the typical cell 
is
\begin{small}
	\begin{align*}
		S_{\mrm}=\sum\nolimits_{\road_i\in \Phi}\Psi_{\road_{i}}\left(V_{\typical}\right).
	\end{align*}
\end{small}
Let $N$ be the number of chords intersecting with the typical cell $V_{\typical}$. Here, $N$ is a Poisson RV with mean ${\densityRoads}Z$, where $Z$ is also a RV denoting the perimeter of the typical cell. Since the number of points on each chord is iid, the PGF of $S_{\mrm}$ conditioned on $N$ is
\begin{small}
	\begin{align*}
		\mathcal{P}_{S_{\mrm}\vert N=n}(s)&=\mathbb{E}_{\Psi_{\mrm}}\left[s^{S_{\mrm}}\vert N=n\right]
		=\left[\int_{0}^{\infty}\expS{g\left(s,\frac{c}{2}\right)}f_{C}(c)\dv c\right]^{n}.
	\end{align*}
\end{small}%
%
Deconditioning over $n$, we get 
\begin{small}
\begin{align}\label{condtionalPGF}
	&\mathcal{P}_{S_{\mrm}\vert Z=z}(s)=
	\exp\left(-{\densityRoads}z\left(1-\int_{0}^{\infty}\expS{g\left(s,\frac{c}{2}\right)}f_{C}(c)\dv c\right)\right).
\end{align}
\end{small}
Now, deconditioning over the distribution of $Z$ (using the PDF of $Z$ given in \eqref{perimeter}), we obtain the PGF.
From \eqref{eq:pgfpmfrelation}, we can compute
 $\mathbb{P}[S_{\mrm}=k]$ from
 the $k$th derivative of PGF $\mathcal{P}_{S_{\mrm}^{}}(s)$. 
 The $k$th derivative of $\mathcal{P}_{S_{\mrm}^{}|Z}(s)$ is
\begin{small}
	\begin{align}\label{kth_derivative}
		\mathcal{P}^{(k)}_{S_{\mrm}^{}|Z}(s)&=\frac{\dv^{k}}{\dv s^{k}}\left(\exp
		(g_{\mrm}(s))
		\right),
	\end{align}
\end{small}%
where $g_{\mrm}(s)$ is given by
\begin{small}
	\begin{align*}
		&g_{\mrm}(s)=\left(-{\densityRoads}z\left(1-\int_{0}^{\infty}\expS{g\left(s,\frac{c}{2}\right)}f_{C}(c)\dv c\right)\right).
	\end{align*}
\end{small}
As it is in the form of $f(h(s))$, we use the Faà di Bruno's formula \cite{faadibruno} to get
\begin{small}
	\begin{align}
		&\mathcal{P}^{(k)}_{S_{\mrm}^{}|Z}(s)=\exp\left(g_{\mrm}(s)\right)\bellf{}\left( g_{\mrm}^{(1)}(s),\ldots,g_{\mrm}^{(k)}(s)\right),\label{PGF1s}
	\end{align}
\end{small}%
To find the $k$th derivative of $g_{\mrm}(s)$, we apply the Faà di Bruno's formula again and substitute $s=0$ to get the value of  $g_{\mrm}^{(k)}(s)$. Now, deconditioning over $Z$ gives the desired result.

\subsection{Proof of Theorem \ref{Theorem6}: Approximate tagged load distribution}\label{proofTheorem6}
Note that
$\widetilde{M}_\mrm=\Psi'_{\mrm}(\bt_{2}(\ob,R_{\ob}))+\Psi'_{\road_\ob}(\mathsf{C}_\ob)+\dauP'_{\bm{x}_\ob}(\mathsf{C}_\ob)$,
where $\mathsf{C}_\ob=\road_\ob \cap \Vor_\ob=$ is the tagged chord and $\bm{x}_{\ob}$ is the  parent point associated with the typical point. Further note that if the tagged chord has length $c_\ob$,  its center {$\bm{x}_{c_{\ob}}$}    is distributed uniformly  in $\left[-c_{\ob}/2,\,c_{\ob}/2\right]$. We also note that  $\bm{x}_{\ob}$ is uniformly distributed in $\left[-a,\,a\right]$ \cite{pandeykth}. 
We already know the distribution of the first term.  From \eqref{PGFS}, the PGF for $\Psi_{\mrm}\left(\bt_{2}(\ob,R_{\ob})\right)$  is
\begin{small}
	\begin{align}\label{third}
		&{\mathcal{P}_{\Psi_{\mrm}\left(\bt_{2}(\ob,R_{\ob})\right)}(s)}=\mathcal{P}_{S(R_{\ob})}(s).
	\end{align}
\end{small}%

For the second term, note that $\mathsf{C}_\ob=\road_\ob \cap \Vor_\ob = \bt_{1}({\bm{x}_{c_{\ob}}},{c_{\ob}}/{2})$.  Due to stationarity of $\Psi'_{\road_\ob}$ relative to the line $\road_\ob$, $\Psi'_{\road_\ob}(\bt_{1}({\bm{x}_{c_{\ob}}},{c_{\ob}}/{2}))=\Psi'_{\road_\ob}(\bt_{1}(\ob,{c_{\ob}}/{2}))$.
The PGF of RV $\Psi_{\l_{\ob}}\left(\bt_{1}\left(\ob,c_{\ob}/2\right)\right)$ is
\begin{small}
	\begin{align}\label{second}
	{	\mathcal{P}_{\Psi_{\l_{\ob}}\left(\bt_{1}\left(\ob,.5c_{\ob}\right)\right)}(s)}=\mathcal{P}_{N_{\road_{\ob}}}\left(s,c_{\ob}/2\right),
	\end{align}
\end{small}
 where $\mathcal{P}_{N_{\road_{\ob}}}(s,{c_{\ob}}/{2})$ is provided in \eqref{PGFMCP}.

For the third term, we note that $\dauP'_{\bm{x}_\ob}$ can have points only in $\bt_{1}(\bm{x}_\ob,{a}/{2})$. Hence,  $\dauP'_{\bm{x}_\ob}(\mathsf{C}_\ob)$ (\ie the number of points on the tagged chord due to the tagged platoon) varies depending on  $\bm{x}_{\ob}$ and  {$\bm{x}_{c_{\ob}}$}. It can be shown that conditioned on {$\bm{x}_{c_{\ob}}$} and $\bm{x}_{\ob}$, $\dauP'_{\bm{x}_\ob}$ is a PPP in the region $\bt_{1}(\bm{x}_{\ob},a)\cap \bt_{1}({\bm{x}_{c_{\ob}}},{c_{\ob}}/{2})$ with density $\lambda_\mathrm{d}$. 
The mean number of points in this PPP is 
$\lambda_{\drm}\A_{1}\left({c_{\ob}}/{2},a,\left|\bm{x}_{c_\ob}-\bm{x}_{\ob}\right|\right)$. Hence, its PGF is 
%
$\expS{(s-1)\lambda_{\drm}\A_{1}({c_{\ob}}/{2},a,|\bm{x}_{c_{\ob}}-\bm{x}_{\ob}|)}$.
Using the law of total probability, deconditioning over $\bm{x}_{c_{\ob}}$, and $\bm{x}_\ob$, the PGF of the third term is given as
\begin{small}
	\begin{align}\label{afterdecondition}
		&\mathcal{P}_{\dauP'_{\bm{x}_\ob}(\mathsf{C}_\ob)}
		(s|C_\ob=c_{\ob})=\int_{x_{\ob}=-a}^{a}\int_{x_c=-{c_{\ob}}/2}^{{c_{\ob}}/{2}}\frac{1}{2ac_{\ob}}{e^{(s-1)\lambda_{\drm}\A_{1}({c_{\ob}}/{2},a,|x_{c}-x_{\ob}|)}}{}\dv x_c \dv x_{\ob}.
	\end{align}
\end{small}%
Conditioned on $R_{\ob}$ and $C_{\ob}$, the three terms are independent. Therefore, the PGF of $\widetilde{M}_{\mrm}$ is the product of the PGFs of these terms, namely \eqref{third}, \eqref{second} and \eqref{afterdecondition}. Deconditioning over $R_{\ob}$ and $C_{\ob}$, we get the PGF of $\widetilde{M}_{\mrm}$.
\bibliographystyle{IEEEtran}
\bibliographystyle{ieeetran}

\vspace{12pt}

\end{document}